\documentclass[12pt]{article}
\usepackage[margin=1in]{geometry}

\usepackage{amsmath}
\usepackage{amssymb}
\usepackage{amsthm}
\usepackage{xspace}
\usepackage[normalem]{ulem}
\usepackage{graphicx}
\usepackage{epsfig}
\usepackage{ifpdf}
\usepackage{url,hyperref}
\usepackage{latexsym}
\usepackage[mathscr]{euscript}
\usepackage{xspace}
\usepackage{color}
\usepackage{makeidx}
\usepackage{wrapfig,color}
\usepackage{stackrel}
\usepackage{algorithm}

\usepackage[all]{xy}
\usepackage{framed}
\usepackage{pb-diagram}

\long\def\remove#1{}

\usepackage{algpseudocode}
\newtheorem{theorem}{Theorem} 

\newtheorem{corollary}[theorem]{Corollary}
\newtheorem{fact}[theorem]{Fact}
\newtheorem{observation}[theorem]{Observation}
\newtheorem{proposition}[theorem]{Proposition}
\newtheorem{definition}[theorem]{Definition}

\newtheorem{claim}{Claim}[section]

\newtheorem{remark}{Remark}[section]

\newcommand\id{\mathrm{id}}

\newcommand\R{\mathbb{R}}
\newcommand\Z{\mathbb{Z}}

\newcommand\diam{\mathrm{diam}}

\newcommand\MM{\mathrm{MM}}
\newcommand\M{\mathrm{M}}


\newcommand\res{\mathrm{res}}

\newcommand{\leb}{\lambda}

\definecolor{darkblue}{rgb}{0.0, 0.0, 0.8}
\definecolor{darkred}{rgb}{0.8, 0.0, 0.0}
\definecolor{darkgreen}{rgb}{0.0, 0.8, 0.0}

\newcommand{\eps}               {{\varepsilon}}
\newcommand{\denselist}{\itemsep 0pt\parsep=1pt\partopsep 0pt}


\newcommand{\NV}			{P_\delta}	
\newcommand{\myc}			{5}
\newcommand{\myCF}		{{\mathfrak C}}
\newcommand{\myDg}		{{\mathrm{Dg}}}
\newcommand{\myparagraph}[1]	{{\vspace*{0.06in}\noindent{\bf #1~}}}

\newif\ifpaper
\papertrue


\title{Topological Analysis of Nerves, Reeb Spaces, Mappers, and Multiscale Mappers}

\author{Tamal K. Dey\thanks{Department of Computer Science and Engineering, The Ohio State University. \texttt{tamaldey, yusu@cse.ohio-state.edu}}, Facundo M\'emoli\thanks{Department of Mathematics and Department of Computer Science and Engineering, The Ohio State University. \texttt{memoli@math.osu.edu}}, Yusu Wang$^*$}

\date{}

%
%
%

\begin{document}

\maketitle

\setcounter{page}{1}
\begin{abstract}
	Data analysis often concerns not only the space where data come from, but also various types of maps attached to data. In recent years, several related structures have been used to study maps on data, including Reeb spaces, mappers and multiscale mappers. The construction of these structures also relies on the so-called \emph{nerve} of a cover of the domain.
	
	In this paper, we aim to analyze the topological information encoded in these structures in order to provide better understanding of these structures and facilitate their practical usage.

	More specifically, we show that the one-dimensional homology of the nerve complex $N(\mathcal{U})$ of a path-connected cover $\mathcal{U}$ of a domain $X$ cannot be richer than that of the domain $X$ itself. Intuitively, this result means that no new $H_1$-homology class can be ``created'' under a natural map from $X$ to the nerve complex $N(\mathcal{U})$. Equipping $X$ with a pseudometric $d$, we further refine this result and characterize the classes of $H_1(X)$ that may survive in the nerve complex using the notion of \emph{size} of the covering elements in $\mathcal{U}$. These fundamental results about nerve complexes then lead to an analysis of the $H_1$-homology of Reeb spaces, mappers and multiscale mappers.
	
	The analysis of $H_1$-homology groups unfortunately does not extend to higher dimensions. Nevertheless, by using a map-induced metric, establishing a Gromov-Hausdorff convergence result between mappers and the domain, and interleaving relevant modules, we can still analyze the persistent homology groups of (multiscale) mappers to establish a connection to Reeb spaces. 
\end{abstract}

\section{Introduction}
\label{sec:intro}

Data analysis often concerns not only the space where data come from, but also various types of information attached to data. 
For example, each node in a road network can contain information about the average traffic flow passing this point, a node in protein-protein interaction network can be associated with biochemical properties of the proteins involved. 
Such information attached to data can be modeled as maps defined on the domain of interest; note that the maps are not necessarily $\mathbb{R}^d$-valued, e.g, the co-domain can be $\mathbb{S}^1$. 
Hence understanding data benefits from analyzing maps relating two spaces rather than a single space with no map on it. 

In recent years, several related structures have been used to study general maps on data, including Reeb spaces \cite{munch,DW13,reeb-space,MW16}, mappers (and variants) \cite{CO16,CS14,mapper} and multiscale mappers \cite{DMW16}. 
More specifically, given a map $f: X \to Z$ defined on a topological space $X$, the Reeb space $R_f$ w.r.t. $f$ (first studied for piecewise-linear maps in \cite{reeb-space}), is a generalization of the so-called Reeb graph for a scalar function which has been used in various applications \cite{BGSF08}. 
It is the quotient space of $X$ w.r.t. an equivalence relation that asserts two points of $X$ to be equivalent if they have the same function value and are connected to each other via points of the same function value. All equivalent points are collapsed into a single point in the Reeb space.
Hence $R_f$ provides a way to view $X$ from the perspective of $f$. 

The Mapper structure, originally introduced in \cite{mapper}, can be considered as a further  generalization of the Reeb space. 
Given a map $f: X \to Z$, it also considers a cover $\mathcal{U}$ of the co-domain $Z$ that enables viewing the structure of $f$ at a coarser level.
Intuitively, the equivalence relation between points in $X$ is now defined by whether points are within the same connected component of the pre-image of a cover element $U\in \mathcal{U}$. 
Instead of a quotient space, the mapper takes the nerve complex of the cover of $X$ formed by the connected components of the pre-images of all elements in $\mathcal{U}$ (i.e, the cover formed by those equivalent points). 
Hence the mapper structure provides a view of $X$ from the perspective of both $f$ and a cover of the co-domain $Z$. 

Finally, both the Reeb space and the mapper structures provide a fixed snapshot of the input map $f$. As we vary the cover $\mathcal{U}$ of the co-domain $Z$, we obtain a family of snapshots at different granularities. The \emph{multiscale mapper} \cite{DMW16} describes the sequence of the mapper structures as one varies the granularity of the cover of $Z$ through a sequence of covers of $Z$ connected via cover maps. 

\myparagraph{New work.} 
While these structures are meaningful in that they summarize the information contained in data, there has not been any qualitative analysis of the precise information encoded by them with the only exception of \cite{CO16} and \cite{GGP16} \footnote{Carri\`{e}re and Oudot~\cite{CO16} analyzed certain persistence diagram of mappers induced by a real-valued function, and provided a characterization for it in terms of the persistence diagram of the corresponding Reeb graph. 
	Gasparovic et al \cite{GGP16} provides full description of the persistence homology information encoded in the \emph{intrinsic \v{C}ech complex} (a special type of nerve complex) of a metric graph. }. 
In this paper, we aim to analyze the \emph{topological information} encoded by these structures, so as to provide better understanding of these structures and facilitate their practical usage~\cite{EH09,survey}.
In particular, the construction of the mapper and multiscale mapper use the so-called \emph{nerve} of a cover of the domain. 
To understand the mappers and multiscale mappers, we first provide a quantitative analysis of the topological information encoded in the nerve of a reasonably well-behaved cover for a domain. Given the generality and importance of the nerve complex in topological studies, this result is of independent interest. 

More specifically, in Section \ref{sec:H1}, we first obtain a general result that relates the one dimensional homology $H_1$ of the nerve complex $N(\mathcal{U})$ of a path-connected cover $\mathcal{U}$ (where each open set contained is path-connected) of a domain $X$ to that of the domain $X$ itself. 
Intuitively, this result says that no new $H_1$-homology classes can be ``created" under a natural map from $X$ to the nerve complex $N(\mathcal{U})$. 
Equipping $X$ with a pseudometric $d$, we further refine this result and quantify the classes of $H_1(X)$ that may survive in the nerve complex (Theorem~\ref{H1prop-mapper}, Section \ref{sec:persistentH1}). This demarcation is obtained via a notion of \emph{size} of covering elements in $\mathcal{U}$. 
These fundamental results about nerve complexes then lead to an analysis of the $H_1$-homology classes in Reeb spaces (Theorem~\ref{RS-thm}), mappers and multiscale mappers (Theorem~\ref{H1pers-thm}). 
The analysis of $H_1$-homology groups unfortunately does not extend to higher dimensions. Nevertheless, we can still provide an interesting analysis of the persistent homology groups for these structures (Theorem~\ref{thm:MM-ICinterleave}, Section \ref{sec:highD}). During this course, by using a map-induced metric, we establish a Gromov-Hausdorff convergence between the mapper structure and the domain. This offers an alternative to \cite{MW16} for defining the convergence between mappers and the Reeb space, which may be of independent interest. 

\ifpaper {}\else
{\em All missing proofs in what follows are deferred to the full version of this paper on arXiv.}
\fi

\section{Topological background and motivation}\label{sec:background}


\myparagraph{Space, paths, covers.} 
Let $X$ denote a path connected topological space. Since $X$ is path connected, there exists a path $\gamma:[0,1]\rightarrow X$ connecting every pair of points $\{x,x'\}\in X\times X$ where $\gamma(0)=x$ and $\gamma(1)=x'$. Let $\Gamma_X(x,x')$ denote the set of all such paths connecting $x$ and $x'$.
These paths play an important role in our definitions and
arguments.

By a cover of $X$ we mean a collection $\mathcal{U}=\{U_\alpha\}_{\alpha\in A}$ of open sets such that $\bigcup_{\alpha\in A} U_\alpha= X.$ A cover $\mathcal {U}$ is {\em path connected} if each $U_\alpha$ is path connected. In this paper, we consider only path connected covers.

Later to define maps between $X$ and its nerve complexes, we need $X$ to be {\em paracompact}, that is, every cover
${\mathcal U}$ of $X$ has a subcover ${\mathcal U}'\subseteq {\mathcal U}$ so that each point $x\in X$ has an open neighborhood contained in {\em finitely many} elements of ${\mathcal U}'$. Such a cover ${\mathcal U}'$ is called {\em locally finite}. 
From now on, we assume {\em $X$ to be compact} which implies that it is paracompact too.

\begin{definition} [Simplicial complex and maps] A simplicial complex $K$ with a vertex set $V$ is a collection of subsets of $V$ with the condition that if $\sigma \in 2^V$ is in $K$, then all subsets of $\sigma$ are in $K$. We denote the geometric realization of $K$ by $|K|$. Let $K$ and $L$ be two simplicial complexes. A map $\phi:K\rightarrow L$ is \emph{simplicial} if for every simplex $\sigma=\{v_1,v_2,\ldots, v_p\}$ in $K$, the simplex $\phi(\sigma)=\{\phi(v_1),\phi(v_2),\ldots,\phi(v_p)\}$ is in $L$. 
\end{definition}



\begin{definition}[Nerve of a cover]
	Given a cover ${\mathcal U} = \{
	U_{\alpha}\}_{\alpha \in A}$ of $X$, we define the {\em
		nerve} of the cover ${\mathcal U}$ to be the simplicial complex
	$N({\mathcal U})$ whose vertex set is the index set $A$, and where
	a subset $\{ \alpha _0 , \alpha _1, \ldots , \alpha _k \}\subseteq A$ spans a
	$k$-simplex in $N({\mathcal U})$ if and only if $U_{\alpha _0 } \cap
	U_{\alpha _1 } \cap \ldots \cap U_{\alpha _k} \neq \emptyset$.  
\end{definition}

\myparagraph{Maps between covers.} 
Given two covers ${\mathcal U} = \{ U_{\alpha} \} _{\alpha \in A}$ and ${\mathcal V} = \{ V_{\beta} \} _{\beta \in B}$ of $X$, a {\em map of covers} from ${\mathcal U} $ to ${\mathcal V}$ is a set map $\xi: A \rightarrow B$ so that $U_{\alpha} \subseteq V_{\xi(\alpha)}$ for all $\alpha \in A$. {\em By a slight abuse of notation we also use $\xi$ to indicate the map $\mathcal{U}\rightarrow\mathcal{V}.$} Given such a map of covers, there is an induced simplicial map $N(\xi) : N({\mathcal U}) \rightarrow N({\mathcal V})$, given on vertices by the map $\xi$.  
Furthermore, if $\mathcal{U}\stackrel{\xi}{\rightarrow}\mathcal{V}\stackrel{\zeta}{\rightarrow}\mathcal{W}$ are three covers of $X$ with the intervening maps of covers between them, then $N(\zeta\circ\xi) = N(\zeta)\circ N(\xi)$ as well.
The following simple result is useful.
\begin{proposition}[Maps of covers induce contiguous simplicial maps \cite{DMW16}]\label{prop:cover-contiguity}
	Let $\zeta,\xi:\mathcal{U}\rightarrow\mathcal{V}$ be any two maps of covers. Then, the simplicial maps $N(\zeta)$ and $N(\xi)$ are contiguous.
\end{proposition}

Recall that two simplicial maps $h_1, h_2: K \rightarrow L$ are \emph{contiguous} if for all $\sigma\in K$ it holds that $h_1(\sigma)\cup h_2(\sigma)\in L$. In particular, contiguous maps induce identical maps at the homology level \cite{munkres}. Let $H_k(\cdot)$ denote the $k$-dimensional homology of the space in its argument. This homology is
{\em singular} or {\em simplicial} depending on if the argument is a topological space or a simplicial complex respectively. All homology groups
in this paper are defined over the field $\Z_2$.
Proposition \ref{prop:cover-contiguity} implies that the map $H_k(N(\mathcal{U}))\rightarrow H_k(N(\mathcal{V}))$ arising out of a cover map can be deemed canonical.

\section{Surjectivity in $H_1$-persistence}
\label{sec:H1}

In this section we first establish a map $\phi_{\mathcal U}$ between $X$ and the geometric realization $|N(\mathcal U)|$ of a nerve complex $N(\mathcal U)$. 
This helps us to define a map  $\phi_{\mathcal U}*$ from the singular homology groups of $X$ to the simplicial homology groups 
of $N(\mathcal U)$ (through the singular homology of $|N(\mathcal U)|$). 
The famous nerve theorem~\cite{borsuk,leray} says that if the elements of $\mathcal U$ intersect only in contractible spaces, then $\phi_{\mathcal U}$ is a homotopy equivalence and hence $\phi_{\mathcal U}*$ leads to an isomorphism between $H_\ast(X)$ and $H_\ast(N(\mathcal U))$. The contractibility condition can be weakened to a {\em homology ball} condition to retain the isomorphism between the two homology groups~\cite{leray}. In absence of such conditions of the cover, simple examples
exist to show that $\phi_{\mathcal U}*$ is neither a monophorphism (injection) nor an epimorphism (surjection). Figure~\ref{non-surject-fig} gives an example where $\phi_{{\mathcal U}*}$ is not sujective in $H_2$. However, for one dimensional homology we show that, for any path connected cover $\mathcal U$, the
map $\phi_{{\mathcal U}*}$ is necessarily a surjection. One implication of this is that the simplicial maps arising out of cover maps induce a surjection among the one dimensional homology groups of two nerve complexes.


\subsection{Nerves}

\begin{wrapfigure}{r}{0.2\textwidth}
	\vspace*{-0.4in}
	\begin{align*}
	\xymatrix{ & X_{\mathcal{U}}\ar[rd]^{\pi}  & \\
		X \ar[ru]^{\zeta}  \ar[rr]^{\phi_{\mathcal U}} & & |N(\mathcal{U})| 
	}
	\end{align*}
	\vspace*{-0.3in}
\end{wrapfigure} 
The proof of the nerve theorem~\cite{hatcher} uses a construction that connects the two spaces $X$ and $|N(\mathcal U)|$ via a third space $X_{\mathcal U}$ that is a product space of $\mathcal U$ and the geometric realization $|N({\mathcal U})|$. In our case $\mathcal U$ may not satisfy the contractibility condition. Nevertheless, we use the same construction to define  three maps, $\zeta: X \rightarrow X_{\mathcal U}$, 
$\pi : X_{\mathcal U}\rightarrow |N({\mathcal U})|$, and $\phi_{\mathcal U} : X\rightarrow |N({\mathcal U})|$ where $\phi_{\mathcal U}= \pi\circ \zeta$ is referred to as the {\em nerve map}. Details about the construction of these maps follow. 
\begin{wrapfigure}{l}{0.25\textwidth}
	\includegraphics[width=0.25\textwidth]{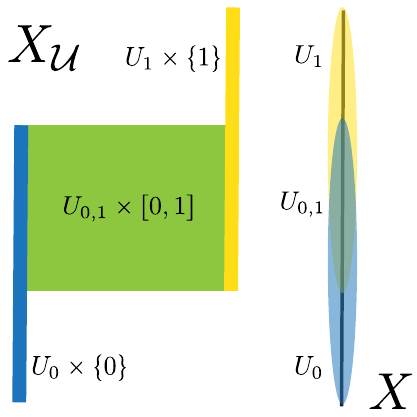}
\end{wrapfigure} 

Denote the elements of the cover $\mathcal{U}$ as $U_\alpha$ for $\alpha$ taken from some indexing set $A$. The vertices of $N(\mathcal{U})$ are denoted by $\{u_\alpha,\,\alpha\in A\},$ where each $u_\alpha$ corresponds to the cover element $U_\alpha.$ For each finite non-empty intersection
$U_{\alpha_0,\ldots,\alpha_n}:=\bigcap_{i=0}^n U_{\alpha_i}$ consider the product $U_{\alpha_0,\ldots,\alpha_n}\times \Delta^n_{\alpha_0,\ldots,\alpha_n}$, where $\Delta^n_{\alpha_0,\ldots,\alpha_n}$ denotes the $n$-dimensional simplex with vertices $u_{\alpha_0},\ldots,u_{\alpha_n}$. Consider now the disjoint union

$$M:=\bigsqcup_{\alpha_0,\ldots,\alpha_n\in A:\,U_{\alpha_0,\ldots,\alpha_n}\neq \emptyset} U_{\alpha_0,\ldots,\alpha_n}\times \Delta_{\alpha_0,\ldots,\alpha_n}^n$$
together with the following identification: each point $(x,y)\in M$, with $x\in U_{\alpha_0,\ldots,\alpha_n}$ and $y\in [\alpha_0,\ldots,\widehat{\alpha}_i,\ldots,\alpha_n]\subset \Delta_{\alpha_0,\ldots,\alpha_n}^n$ is identified with the corresponding point in the product $U_{\alpha_0,\ldots,\widehat{\alpha}_i,\ldots,\alpha_n}\times \Delta_{\alpha_0,\ldots,\widehat{\alpha}_i,\ldots,\alpha_n}$ via the inclusion $U_{\alpha_0,\ldots,\alpha_n}\subset U_{\alpha_0,\ldots,\widehat{\alpha}_i,\ldots,\alpha_n}$. Here $[\alpha_0,\ldots,\widehat{\alpha}_i,\ldots,\alpha_n]$ denotes the $i$-th face of the simplex $\Delta_{\alpha_0,\ldots,\alpha_n}^n.$ Denote by $\sim$ this identification and now define the space
$X_\mathcal{U}:= M~/ \sim.$
An example for the case when $X$ is a line segment and $\mathcal{U}$ consists of only two open sets is shown in the previous page.



\begin{definition}
	A collection of real valued continuous functions $\{\varphi_\alpha:\rightarrow[0,1],\alpha\in A\}$ is called
	a {\em partition of unity } if (i)
	$\sum_{\alpha\in A}\varphi_\alpha(x)=1$ for all $x\in X$,
	(ii) For every $x\in X$, there are only finitely many $\alpha\in A$ such that $\varphi_\alpha(x)>0.$
	
	If $\mathcal{U} = \{U_\alpha,\,\alpha\in A\}$ is any open cover of $X$, then a partition of unity $\{\varphi_\alpha,\,\alpha\in A\}$ is \emph{subordinate} to $\mathcal{U}$ if $\mathrm{supp}(\varphi_\alpha)$ is contained in $U_\alpha$ for each $\alpha\in A$. 
\end{definition}

\begin{figure*}[ht]
	\begin{center}
		\includegraphics[width=0.8\textwidth]{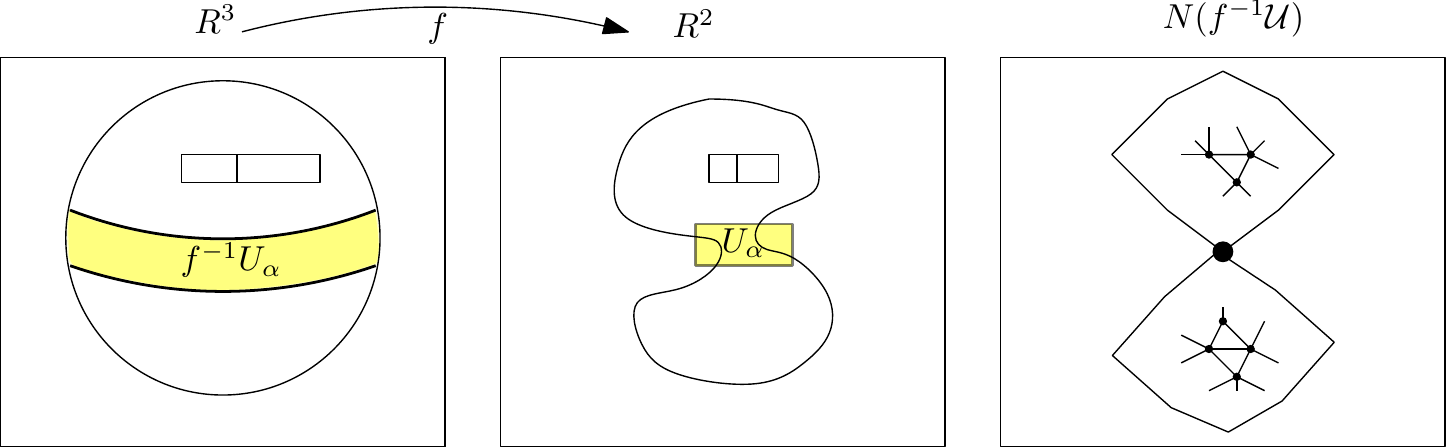}
	\end{center}
	\vspace*{-0.1in}
	\caption{The map $f:\mathbb{S}^2\subset \mathbb{R}^3\rightarrow \mathbb{R}^2$ takes the sphere to $\mathbb{R}^2$. The pullback of the cover element $U_\alpha$ makes a band surrounding the equator which causes the nerve $N(f^{-1}{\mathcal U})$ to pinch in the middle creating two $2$-cycles. This shows that the map $\phi_*: X\rightarrow N(*)$ may not induce a surjection in $H_2$.}
	\label{non-surject-fig}
\end{figure*}



Since $X$ is paracompact, for any open cover $\mathcal{U}=\{U_\alpha,\,\alpha\in A\}$ of $X$, there exists a partition of unity $\{\varphi_\alpha,\alpha\in A\}$ subordinate to $\mathcal U$ \cite{prasolov}. For each $x\in X$ such that $x\in U_\alpha$, denote by $x_\alpha$ the corresponding copy of $x$ residing in $X_\mathcal{U}$.  Then, the map $\zeta:X\rightarrow X_\mathcal{U}$ is defined as follows: for any $x\in X$, 
$$ \zeta(x):=\sum_{\alpha\in A}\varphi_\alpha(x)\,x_\alpha.$$

The map $\pi:X_\mathcal{U}\rightarrow |N(\mathcal{U})|$ is induced by the individual projection maps  $$U_{\alpha_0,\ldots,\alpha_n}\times \Delta^n_{\alpha_0,\ldots,\alpha_n}\rightarrow  \Delta^n_{\alpha_0,\ldots,\alpha_n}.$$

Then,  it follows that $\phi_{\mathcal U}=\pi\circ\zeta:X\rightarrow |N(\mathcal{U})|$  satisfies, for $x\in X$,
\begin{align}\label{eq:phiU}
\phi_{\mathcal U}(x)&=\sum_{\alpha\in A}\varphi_\alpha(x)\,u_\alpha.
\end{align}

We have the following fact~\cite[pp. 108]{prasolov}:
\begin{fact}
	$\zeta$ is a homotopy equivalence. 
	\label{homotopy-fact}
\end{fact}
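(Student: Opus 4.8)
The plan is to exhibit an explicit homotopy inverse of $\zeta$ by using the ``other'' projection out of $X_{\mathcal U}$. Besides $\pi$, the disjoint union $M$ carries the projections $U_{\alpha_0,\ldots,\alpha_n}\times\Delta^n_{\alpha_0,\ldots,\alpha_n}\to U_{\alpha_0,\ldots,\alpha_n}\hookrightarrow X$ onto the first factor. Since the identification $\sim$ only replaces a product $U_{\alpha_0,\ldots,\alpha_n}\times\Delta^n$ by $U_{\alpha_0,\ldots,\widehat{\alpha}_i,\ldots,\alpha_n}\times\Delta^{n-1}$ through the inclusion $U_{\alpha_0,\ldots,\alpha_n}\subseteq U_{\alpha_0,\ldots,\widehat{\alpha}_i,\ldots,\alpha_n}$, all these projections carry identified points to the same point of $X$, so they glue to a continuous map $q\colon X_{\mathcal U}\to X$ (continuity being immediate from the universal properties of the disjoint union and the quotient). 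The first observation I would record is that $q\circ\zeta=\id_X$: by \eqref{eq:phiU} the image $\zeta(x)=\sum_\alpha\varphi_\alpha(x)\,x_\alpha$ is a convex combination of copies of the single point $x$, all lying in the fibre $q^{-1}(x)$, whence $q(\zeta(x))=x$.

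Next I would verify that $\zeta\circ q\simeq\id_{X_{\mathcal U}}$. Write a point of $X_{\mathcal U}$ lying over $x$ as a pair $(x,\mathbf t)$, where $\mathbf t=\sum_{\alpha}t_\alpha u_\alpha$ is a finite formal convex combination supported on $A_x:=\{\alpha\in A:\,x\in U_\alpha\}$; here one uses that the face gluings make the fibre $q^{-1}(x)$ exactly the full simplex on $\{u_\alpha:\alpha\in A_x\}$, and in these coordinates $\zeta(q(x,\mathbf t))=(x,\sum_\alpha\varphi_\alpha(x)u_\alpha)$. I would then use the fibrewise straight-line homotopy
$$H\big((x,\mathbf t),s\big)\;=\;\Big(x,\;(1-s)\,\mathbf t+s\sum_{\alpha\in A}\varphi_\alpha(x)\,u_\alpha\Big),\qquad s\in[0,1].$$
Because $\mathrm{supp}(\varphi_\alpha)\subseteq U_\alpha$, the $u_\alpha$-coefficient on the right is nonzero only when $\alpha\in A_x$, so the whole segment stays inside $q^{-1}(x)$ and $H$ does land in $X_{\mathcal U}$; moreover $H(\cdot,0)=\id_{X_{\mathcal U}}$, $H(\cdot,1)=\zeta\circ q$, and since $H$ fixes $\zeta(X)$ pointwise this is in fact a strong deformation retraction of $X_{\mathcal U}$ onto $\zeta(X)$.

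Combining the two steps, $q$ is a two-sided homotopy inverse of $\zeta$, so $\zeta$ is a homotopy equivalence. The one step that requires genuine care — and the main obstacle — is checking that the formula for $H$ is well defined and continuous on the quotient $X_{\mathcal U}=M/\!\sim$: one has to confirm that it depends only on the underlying point of the fibre simplex (not on which product $U_{\alpha_0,\ldots,\alpha_n}\times\Delta^n$ represents it), which comes down to compatibility of the convex-combination formula with the face identifications, and that the output again sits in some product $U_{\beta_0,\ldots,\beta_m}\times\Delta^m$, which holds because $\mathrm{supp}(\mathbf t)\cup\{\alpha:\varphi_\alpha(x)>0\}$ is a finite subset of $A_x$ — finiteness being exactly the local-finiteness clause in the definition of a partition of unity, available here since $X$ is paracompact.
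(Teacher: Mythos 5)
Your argument is correct and is essentially the standard one: the paper does not prove this fact itself but cites it from Prasolov, and the proof given there is exactly your construction — the first-factor projection $q\colon X_{\mathcal U}\to X$ as a homotopy inverse, with $q\circ\zeta=\id_X$ and the fibrewise straight-line homotopy from $\id_{X_{\mathcal U}}$ to $\zeta\circ q$. You also correctly flag the only delicate points (well-definedness on the quotient and local finiteness of the supports of the partition of unity), so nothing further is needed.
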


\subsection{From space to nerves}
Now, we show that the nerve maps at the homology level are surjective for one dimensional homology when the covers are path-connected. Interestingly, the result is not true beyond one dimensional homology (see Figure~\ref{non-surject-fig}) which is probably why this simple but important fact has not been observed before. First, we make a simple observation that connects the classes in singular homology of $|N(\mathcal U)|$ to those in the simplicial homology of $N(\mathcal U)$. The result follows immediately from the isomorphism between singular and simplicial homology induced by the geometric realization; see~\cite[Theorem 34.3]{munkres}. In what follows let $[c]$ denote the class of a cycle $c$.

\begin{proposition}
	Every $1$-cycle $\xi$ in $|N({\mathcal U})|$ has a $1$-cycle $\gamma$ in $N({\mathcal U})$ so that $[\xi]=[|\gamma|]$.
	\label{embed-prop}
\end{proposition}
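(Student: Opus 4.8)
The plan is to invoke the standard comparison between singular and simplicial homology for a geometric realization. Recall that for any simplicial complex $K$ there is a natural chain map from the simplicial chain complex of $K$ to the singular chain complex of $|K|$, sending an (oriented) simplex $\sigma$ to the singular simplex given by the characteristic affine embedding of the standard model into $|\sigma|\subset|K|$. The induced map on homology $H_k(K)\to H_k(|K|)$ is an isomorphism (see \cite[Theorem 34.3]{munkres}). All our coefficients are in $\Z_2$, so orientation issues disappear.

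First I would take an arbitrary singular $1$-cycle $\xi$ representing a class $[\xi]\in H_1(|N(\mathcal U)|)$. By the surjectivity half of the above isomorphism, there is a simplicial $1$-cycle $\gamma$ in $N(\mathcal U)$ whose image under the natural chain map is a singular cycle homologous to $\xi$; denote that image by $|\gamma|$ (the formal $\Z_2$-sum of the affine singular $1$-simplices supported on the edges appearing in $\gamma$). Then $[\xi]=[|\gamma|]$ in $H_1(|N(\mathcal U)|)$, which is exactly the claimed statement. Strictly speaking $|\gamma|$ here means the singular cycle associated to $\gamma$ under the canonical chain map, and I would state this identification explicitly so the notation $[|\gamma|]$ used later is unambiguous.

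This is essentially immediate and has no real obstacle; the only mild subtlety is purely notational, namely pinning down what ``$|\gamma|$'' denotes — the geometric $1$-chain in $|N(\mathcal U)|$ obtained by realizing each edge of $\gamma$ as an affine singular simplex — so that statements in later sections that compare $\phi_{\mathcal U}$-images of cycles in $X$ against $|\gamma|$ make sense. I would therefore spend one sentence fixing this convention and then cite \cite[Theorem 34.3]{munkres} for the isomorphism, concluding the proof in a couple of lines.
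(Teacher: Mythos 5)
Your proposal is correct and matches the paper's own argument, which likewise derives the statement immediately from the isomorphism between simplicial homology of $N(\mathcal U)$ and singular homology of $|N(\mathcal U)|$ via \cite[Theorem 34.3]{munkres}. The extra sentence fixing the meaning of $|\gamma|$ is a reasonable clarification but not a substantive difference.
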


\begin{proposition}
	If $\mathcal U$ is path connected,	$\phi_{{\mathcal U}*}: H_1(X)\rightarrow H_1(|N({\mathcal U})|)$ is a surjection.  \label{phi-surject-prop}
\end{proposition}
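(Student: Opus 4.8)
The plan is to show that every class in $H_1(|N(\mathcal{U})|)$ is the image of an explicit loop built inside $X$. First I would take an arbitrary $\omega\in H_1(|N(\mathcal{U})|)$, represent it by a singular $1$-cycle, and apply Proposition~\ref{embed-prop} to replace it (up to homology) by $|\gamma|$ for a genuine simplicial $1$-cycle $\gamma$ in $N(\mathcal{U})$. Working over $\Z_2$, I would decompose $\gamma$ as a sum of elementary (simple) cycles, so it suffices to treat one cyclic vertex sequence $u_{\alpha_0},u_{\alpha_1},\ldots,u_{\alpha_k}=u_{\alpha_0}$ in which each consecutive pair spans an edge, and to produce a loop $\Gamma$ in $X$ with $\phi_{\mathcal{U}*}[\Gamma]=[|C|]$, where $C$ is the corresponding edge loop; the formal sum of the loops $\Gamma$ over the elementary pieces of $\gamma$ then gives a $1$-cycle in $X$ mapped by $\phi_{\mathcal{U}*}$ to $[|\gamma|]=\omega$.

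To construct $\Gamma$: since each $\{u_{\alpha_{i-1}},u_{\alpha_i}\}$ is a simplex, $U_{\alpha_{i-1}}\cap U_{\alpha_i}\neq\emptyset$, so I pick $x_i$ in this intersection; then both $x_i$ and $x_{i+1}$ lie in $U_{\alpha_i}$, and since the cover is path connected I can join them by a path $\gamma_i$ inside $U_{\alpha_i}$. Concatenating $\gamma_1\ast\cdots\ast\gamma_k$ cyclically gives a loop $\Gamma$ in $X$. The heart of the proof is the homotopy $\phi_{\mathcal{U}}\circ\Gamma\simeq |C|$, and the key observation making it work is: for any $x\in U_\alpha$, the set $\{u_\alpha\}\cup\{u_\beta:\varphi_\beta(x)>0\}$ spans a simplex of $N(\mathcal{U})$, because $\mathrm{supp}(\varphi_\beta)\subseteq U_\beta$ forces $x$ into the relevant common intersection. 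Hence $\phi_{\mathcal{U}}(x)$ and $u_\alpha$ always lie in a common simplex, so $\phi_{\mathcal{U}}(U_\alpha)\subseteq\overline{\mathrm{St}}(u_\alpha)$, and the closed star $\overline{\mathrm{St}}(u_\alpha)$ deformation retracts to $u_\alpha$ (each point can be linearly pushed to $u_\alpha$ inside its carrier simplex, which, being a face of some simplex containing $u_\alpha$, still spans a simplex with $u_\alpha$), hence is simply connected. Therefore, inside $\overline{\mathrm{St}}(u_{\alpha_i})$, the path $\phi_{\mathcal{U}}\circ\gamma_i$ is homotopic rel endpoints to the broken path going straight from $\phi_{\mathcal{U}}(x_i)$ to $u_{\alpha_i}$ and then straight to $\phi_{\mathcal{U}}(x_{i+1})$, each segment being linear within a single simplex by the observation. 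Finally, at each junction $x_i$ the two segments $u_{\alpha_{i-1}}\!\to\!\phi_{\mathcal{U}}(x_i)\!\to\!u_{\alpha_i}$ lie in the one simplex spanned by $\{u_{\alpha_{i-1}},u_{\alpha_i}\}\cup\{u_\beta:\varphi_\beta(x_i)>0\}$ (its intersection contains $x_i$), so they straighten to the edge $[u_{\alpha_{i-1}},u_{\alpha_i}]$; chaining all these local homotopies converts $\phi_{\mathcal{U}}\circ\Gamma$ into the edge loop $|C|$.

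With this homotopy established, homotopic loops carry the same $H_1$-class, so $\phi_{\mathcal{U}*}[\Gamma]=[|C|]$; summing over the elementary cycles of $\gamma$ yields the desired preimage of $\omega$, and surjectivity follows. I expect the main obstacle to be the bookkeeping in the homotopy step — precisely, certifying that each ``straight-line'' segment really does lie within a single simplex of $N(\mathcal{U})$ (so the linear homotopies are legitimate even when $|N(\mathcal{U})|$ is infinite-dimensional, which is fine since path images are compact) and assembling the local homotopies on the $\gamma_i$ and around the junction points $x_i$ into one homotopy of loops. The contractibility of closed stars together with the support condition of the subordinate partition of unity are exactly what drive this; the remaining ingredients — passage from singular to simplicial cycles via Proposition~\ref{embed-prop}, the $\Z_2$ decomposition into simple cycles, and additivity of $\phi_{\mathcal{U}*}$ — are routine.
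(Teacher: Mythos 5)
Your proof is correct, but it takes a genuinely different route from the paper's. Both arguments share the same combinatorial core: reduce to a simplicial $1$-cycle via Proposition~\ref{embed-prop}, pick a point $x_i$ in each pairwise intersection $U_{\alpha_{i-1}}\cap U_{\alpha_i}$, and join consecutive points by paths inside the (path-connected) cover elements. The divergence is in how the resulting loop is certified to hit the right class. The paper never analyzes $\phi_{\mathcal U}$ directly at this stage: it lifts the edge/vertex path construction into the blowup space $X_{\mathcal U}$, where the identity $\pi(\gamma_{\mathcal U})=\gamma$ is immediate from the product structure, and then invokes Fact~\ref{homotopy-fact} (that $\zeta:X\to X_{\mathcal U}$ is a homotopy equivalence) to pull the class back to $H_1(X)$. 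You instead work entirely in $|N(\mathcal U)|$ and prove directly that $\phi_{\mathcal U}\circ\Gamma$ is homotopic to the edge loop, using the observation that $\mathrm{supp}(\varphi_\beta)\subseteq U_\beta$ forces $\{u_\alpha\}\cup\{u_\beta:\varphi_\beta(x)>0\}$ to span a simplex, so that $\phi_{\mathcal U}(U_\alpha)$ lands in the contractible closed star of $u_\alpha$. That observation and the straightening homotopies are sound (including the continuity of the linear retraction in the weak topology), and the $\Z_2$ decomposition into simple cycles is routine. In effect you are proving the paper's later Proposition~\ref{P1-mapper} (that $[\phi_{\mathcal U}(\rho(\gamma))]=[|\gamma|]$) as part of the surjectivity argument — and indeed the paper's proof of that proposition uses essentially your star/simplex straightening. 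So your route is more self-contained and yields a stronger intermediate statement in one pass, at the cost of heavier homotopy bookkeeping; the paper's route trades that bookkeeping for reliance on the cited homotopy equivalence $\zeta$ and the clean projection $\pi$.
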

\begin{proof}
	Let $[\gamma]$ be any class in $H_1(|N({\mathcal U})|)$. Because of Proposition~\ref{embed-prop}, we can assume that $\gamma = |\gamma'|$, where $\gamma'$ is a $1$-cycle in the $1$-skeleton of $N({\mathcal U})$. We construct a $1$-cycle $\gamma_{\mathcal U}$ in $X_{\mathcal U}$ so that $\pi(\gamma_{\mathcal U})=\gamma$. Recall the map $\zeta: X\rightarrow X_{\mathcal U}$ in the construction of the nerve map
	$\phi_{\mathcal U}$ where
	$\phi_{\mathcal U}=\pi \circ \zeta$. There exists a class $[\gamma_X]$ in $H_1(X)$ so that $\zeta_*([\gamma_X])=[\gamma_{\mathcal U}]$ because $\zeta_*$ is an isomorphism by Fact~\ref{homotopy-fact}. Then, $\phi_{{\mathcal U}*}([\gamma_X])=\pi_*(\zeta_*([\gamma_X]))$ because $\phi_{{\mathcal U}*}=\pi_*\circ\zeta_*$. It follows $\phi_{{\mathcal U}*}([\gamma_X])=\pi_*([\gamma_{\mathcal U}])=[\gamma]$ showing that $\phi_{{\mathcal U}*}$ is surjective.
	
	\ifpaper
	Therefore, it remains only to show that a $1$-cycle $\gamma_{\mathcal U}$ can be constructed given $\gamma'$ in $N({\mathcal U})$ so that $\pi(\gamma_{\mathcal U})=\gamma=|\gamma'|$. Let $e_0,e_1,\ldots,e_{r-1},e_r=e_0$ be an ordered sequence of edges on $\gamma$. Recall the construction of the space $X_{\mathcal U}$. In that terminology, let $e_i=\Delta^n_{\alpha_i\alpha_{(i+1)\mod r}}$. Let $v_i=e_{(i-1)\mod r}\cap e_{i}$ for $i\in [0,r-1]$.
	The vertex $v_i=v_{\alpha_i}$ corresponds to the cover element $U_{\alpha_i}$ where
	$U_{\alpha_i}\cap U_{\alpha_{(i+1)\mod r}}\not=\emptyset$ for every $i\in [0,r-1]$.
	Choose a point $x_i$ in the common intersection $U_{\alpha_i}\cap U_{\alpha_{(i+1)\mod r}}$ for every $i\in [0,r-1]$.  Then, the {\em edge} path $\tilde{e_i}=e_i\times x_{i}$ is in $X_{\mathcal U}$ by construction. Also, letting $x_{\alpha_i}$ to be the lift of $x_i$ in the lifted $U_{\alpha_i}$, we can choose a {\em vertex} path $x_{\alpha_i}\leadsto x_{\alpha_{(i+1)\mod r}}$ residing in the lifted $U_{\alpha_i}$ and hence in $X_{\mathcal U}$ because $U_{\alpha_i}$ is path connected. Consider the following cycle obtained by concatenating the edge and vertex paths
	$$
	\gamma_{\mathcal U}=\tilde{e}_0x_{\alpha_0}\leadsto x_{\alpha_1}\tilde{e}_1\cdots \tilde{e}_{r-1}x_{\alpha_{r-1}}\leadsto x_{\alpha_0}
	$$
	By projection, we have $\pi(\tilde e_i)=e_i$ for every $i\in [0,r-1]$ and
	$\pi(x_{\alpha_i}\leadsto x_{\alpha_{(i+1)\mod r}})=v_{\alpha_i}$ and thus
	$\pi(\gamma_{\mathcal U})=\gamma$ as required.
	\else
	Therefore, it remains only to show that a $1$-cycle $\gamma_{\mathcal U}$ can be constructed given $\gamma$ in $N({\mathcal U})$ so that $\pi(\gamma_{\mathcal U})=\gamma$. See the full version for this construction.
	\fi
\end{proof}
Since we are eventually interested in the simplicial homology groups of the nerves rather than the singular homology groups of their geometric realizations, we make one more transition using the known isomorphism between the two homology groups. Specifically, if $\iota_{\mathcal U}: H_k(|N({\mathcal U})|)\rightarrow H_k(N({\mathcal U}))$ denotes this isomorphism, we let $\bar{\phi}_{{\mathcal U}*}$ denote the composition $\iota_{\mathcal U}\circ\phi_{{\mathcal U}*}$. As a corollary to Proposition~\ref{phi-surject-prop}, we obtain:
\begin{theorem}
	If $\mathcal U$ is path connected, $\bar{\phi}_{{\mathcal U}*}: H_1(X)\rightarrow H_1(N({\mathcal U}))$ is a surjection.
	\label{surj-map-thm}
\end{theorem}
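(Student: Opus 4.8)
The plan is to obtain Theorem~\ref{surj-map-thm} essentially for free from the machinery already assembled. We have two facts in hand: Proposition~\ref{phi-surject-prop} tells us that $\phi_{{\mathcal U}*}\colon H_1(X)\rightarrow H_1(|N({\mathcal U})|)$ is surjective when ${\mathcal U}$ is path connected, and the standard isomorphism $\iota_{\mathcal U}\colon H_1(|N({\mathcal U})|)\rightarrow H_1(N({\mathcal U}))$ between the singular homology of the geometric realization and the simplicial homology of the complex (see \cite[Theorem 34.3]{munkres}) is, by definition, a bijection. So the whole argument is: compose a surjection with an isomorphism and observe the composite is a surjection.

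Concretely, I would first recall the definition $\bar{\phi}_{{\mathcal U}*}:=\iota_{\mathcal U}\circ\phi_{{\mathcal U}*}$ introduced just before the statement. Then, let $[z]\in H_1(N({\mathcal U}))$ be an arbitrary simplicial homology class. Since $\iota_{\mathcal U}$ is an isomorphism, there is a unique class $[\xi]\in H_1(|N({\mathcal U})|)$ with $\iota_{\mathcal U}([\xi])=[z]$. By Proposition~\ref{phi-surject-prop}, since ${\mathcal U}$ is path connected, there exists $[\gamma_X]\in H_1(X)$ with $\phi_{{\mathcal U}*}([\gamma_X])=[\xi]$. Therefore $\bar{\phi}_{{\mathcal U}*}([\gamma_X])=\iota_{\mathcal U}(\phi_{{\mathcal U}*}([\gamma_X]))=\iota_{\mathcal U}([\xi])=[z]$, so every class in $H_1(N({\mathcal U}))$ lies in the image of $\bar{\phi}_{{\mathcal U}*}$, which is precisely surjectivity.

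There is no genuine obstacle here — this is the routine "corollary" packaging step. The only point worth a sentence of care is making sure that $\iota_{\mathcal U}$ really is a two-sided isomorphism on $H_1$ (not merely on, say, $\pi_1$ or on chain level), which is exactly the content of the cited isomorphism between simplicial and singular homology of a simplicial complex and its geometric realization; I would cite \cite[Theorem 34.3]{munkres} for this, as the surrounding text already does. All the real work — the explicit lifting of a $1$-cycle of $N({\mathcal U})$ through $\pi$ into $X_{\mathcal U}$, and the use of Fact~\ref{homotopy-fact} that $\zeta$ is a homotopy equivalence — has been done inside the proof of Proposition~\ref{phi-surject-prop}, so nothing new needs to be invoked.
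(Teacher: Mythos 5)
Your proposal is correct and matches the paper exactly: the paper states Theorem~\ref{surj-map-thm} as an immediate corollary of Proposition~\ref{phi-surject-prop}, obtained by composing the surjection $\phi_{{\mathcal U}*}$ with the isomorphism $\iota_{\mathcal U}$ between singular homology of $|N({\mathcal U})|$ and simplicial homology of $N({\mathcal U})$. Nothing further is needed.
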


\subsection{From nerves to nerves}
In this section we extend the result in Theorem~\ref{surj-map-thm} to simplicial maps between two nerves induced by cover maps. The following proposition is key to establishing the result.

\begin{proposition}[Coherent partitions of unity]\label{prop:coherent}
	Suppose $\{U_\alpha\}_{\alpha\in A}=\mathcal{U}\stackrel {\theta}{\longrightarrow}\mathcal{V}=\{V_\beta\}_{\beta\in B}$ are open covers of the paracompact topological space $X$ and $\theta: A\rightarrow B$ is a map of covers. Then there exists a partition of unity $\{\varphi_\alpha\}_{\alpha\in A}$ subordinate to the cover $\mathcal{U}$ such that if for each $\beta\in B$ we define 
	$$\psi_\beta:=\left\{
	\begin{array}{ll}
	\sum_{\alpha\in\theta^{-1}(\beta)}\varphi_\alpha & \mbox{if $\beta\in\mathrm{im}(\theta)$;}\\
	0&\mbox{otherwise}.
	\end{array}\right.$$
	then the set of functions $\{\psi_\beta\}_{\beta\in B}$ is a partition of unity subordinate to the cover $\mathcal{V}.$
\end{proposition}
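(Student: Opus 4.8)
The plan is to observe that the proposition is essentially automatic once the partition of unity subordinate to $\mathcal{U}$ is chosen to be \emph{locally finite} --- a stronger property than the mere pointwise finiteness in the definition above, but one that paracompactness of $X$ supplies. So the first step is: since $X$ is paracompact, the cover $\mathcal{U}$ admits a subordinate partition of unity $\{\varphi_\alpha\}_{\alpha\in A}$ such that every point of $X$ has a neighbourhood on which all but finitely many of the $\varphi_\alpha$ vanish; fix one such $\{\varphi_\alpha\}$ and define the $\psi_\beta$ by the displayed formula. It then remains to check the three defining properties of a partition of unity subordinate to $\mathcal{V}$.

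For the normalisation, I would use that $\theta: A\rightarrow B$ being a function means $\{\theta^{-1}(\beta)\}_{\beta\in\im(\theta)}$ partitions $A$, so for each $x\in X$,
$$\sum_{\beta\in B}\psi_\beta(x)=\sum_{\beta\in\im(\theta)}\ \sum_{\alpha\in\theta^{-1}(\beta)}\varphi_\alpha(x)=\sum_{\alpha\in A}\varphi_\alpha(x)=1.$$
For continuity and local finiteness of $\{\psi_\beta\}_{\beta\in B}$: given $x\in X$, pick a neighbourhood $W$ of $x$ and a finite set $F\subseteq A$ with $\varphi_\alpha|_W\equiv 0$ for all $\alpha\notin F$; then on $W$ each $\psi_\beta$ equals the \emph{finite} sum $\sum_{\alpha\in F\cap\theta^{-1}(\beta)}\varphi_\alpha$, hence is continuous, and is identically $0$ on $W$ unless $\beta\in\theta(F)$, a finite subset of $B$.

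For subordination, I would argue $\{x:\psi_\beta(x)>0\}\subseteq\bigcup_{\alpha\in\theta^{-1}(\beta)}\{x:\varphi_\alpha(x)>0\}$; since the family $\{\{\varphi_\alpha>0\}\}_{\alpha\in A}$, and hence any subfamily of it, is locally finite, the closure of the right-hand union equals the union of the closures, which gives $\mathrm{supp}(\psi_\beta)\subseteq\bigcup_{\alpha\in\theta^{-1}(\beta)}\mathrm{supp}(\varphi_\alpha)$. Finally, because $\theta$ is a map of covers, $\alpha\in\theta^{-1}(\beta)$ forces $U_\alpha\subseteq V_{\theta(\alpha)}=V_\beta$, so $\mathrm{supp}(\psi_\beta)\subseteq\bigcup_{\alpha\in\theta^{-1}(\beta)}U_\alpha\subseteq V_\beta$, as required.

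The one point that is not pure bookkeeping, and where I would be most careful, is the use of local finiteness: it is invoked twice --- once so that the a priori infinite sums defining the $\psi_\beta$ are locally finite and therefore continuous, and once to move the closure operation inside the union of supports --- and it is not part of the paper's stated definition of a partition of unity. So the genuine content of the proposition is the choice of a \emph{locally finite} subordinate partition of unity for $\mathcal{U}$, which paracompactness of $X$ guarantees; everything after that is the verification sketched above.
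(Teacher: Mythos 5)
Your proposal is correct and takes essentially the same route as the paper: the paper's proof also first produces a partition of unity $\{\varphi_\alpha\}_{\alpha\in A}$ whose supports form a \emph{locally finite} family (by pushing forward a partition of unity subordinate to a locally finite refinement of $\mathcal{U}$ along the refinement map), and then notes that summing over the fibres of the cover map preserves all the required properties --- precisely the verification you carry out. The only difference is expository: the paper constructs the locally finite $\{\varphi_\alpha\}$ from scratch following Prasolov and then says ``the same argument applies to $\{\psi_\beta\}$'', whereas you cite that construction as a standard consequence of paracompactness and instead write out the second step in detail.
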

\ifpaper
\begin{proof}
	The proof closely follows that of \cite[Corollary pp. 97]{prasolov}. Since $X$ is paracompact, there exists a locally finite refinement $\mathcal{W}=\{W_\lambda\}_{\lambda\in  L}$ of $\mathcal{U}$, a refinement map $L\stackrel{\xi}{\rightarrow} A$,  and a partition of unity $\{\omega_\lambda\}_{\lambda\in L}$ subordinate to $\mathcal{W}.$ For each $\alpha\in A$ define 
	$$\varphi_\alpha := 
	\left\{\begin{array}{ll}
	\sum_{\lambda\in\xi^{-1}(\alpha)}\omega_\lambda & \mbox{if $\alpha\in\mathrm{im}(\xi)$;}\\
	0&\mbox{otherwise}.
	\end{array}\right.$$
	The fact that the sum is well defined and continuous follows from the fact that $\mathcal{W}$ is locally finite. Let $C_\alpha:=\bigcup_{\lambda\in\xi^{-1}(\alpha)}\mathrm{supp}(\omega_\lambda)$. The set $C_\alpha$ is closed, $C_\alpha\subset U_\alpha$, and $\varphi_\alpha(x) = 0$ for $x\notin C_\alpha$ so that $\mathrm{supp}(\varphi_\alpha)\subset C_\alpha\subset U_\alpha.$ Now, to check that the family $\{C_\alpha\}_{\alpha\in A}$ is locally finite pick any point $x\in X$. Since $\mathcal{W}$ is locally finite there is an open set $O$ containing $x$ such that $O$ intersects only finitely many elements in $\mathcal{W}$. Denote these cover elements by $W_{\lambda_1},\ldots,W_{\lambda_N}.$ Now, notice if $\alpha\in A$  and $\alpha\notin\{\xi(\lambda_i),i=1,\ldots,N\}$, then $O$ does not intersect $C_\alpha$. Then, the family $\{\mathrm{supp}(\varphi_\alpha)\}_{\alpha\in A}$ is locally finite. It then follows that for $x\in X$ one has
	$$\sum_{\alpha\in A}\varphi_\alpha(x) = \sum_{\alpha\in A}\sum_{\lambda\in\xi^{-1}(\alpha)}\omega_\lambda(x)=\sum_{\lambda\in L}\omega_{\lambda}(x)=1.$$
	
	We have obtained that $\{\varphi_\alpha\}_{\alpha\in A}$ is a partition of unity subordinate to $\mathcal{U}$. Now, the same argument can be applied to the family $\{\psi_\beta\}_{\beta\in B}$ to obtain the proof of the proposition.
\end{proof}
\else
Proof is deferred to the full version.
\fi

Let $\{U_\alpha\}_{\alpha\in A}=\mathcal{U}\stackrel {\theta}{\longrightarrow}\mathcal{V}=\{V_\beta\}_{\beta\in B}$  be two open covers of $X$ connected by a map of covers. Apply Proposition \ref{prop:coherent} to obtain coherent partitions of unity $\{\varphi_\alpha\}_{\alpha\in A}$ and $\{\psi_\beta\}_{\beta\in B}$ subordinate to $\mathcal{U}$ and $\mathcal{V}$, respectively. Let the nerve maps $\phi_{{\mathcal U}}: X\rightarrow |N({\mathcal U})|$ and  $\phi_{{\mathcal V}}: X\rightarrow |N({\mathcal V})|$ be defined as in (\ref{eq:phiU}) above.	Let $N({\mathcal U})\stackrel{\tau}{\rightarrow} N({\mathcal V})$ be the simplicial map induced by the cover map $\theta$. Then, $\tau$ can be extended to a continuous map $\hat{\tau}$ on the image of $\phi_{\mathcal U}$ as follows: for $x\in X$,  $\hat{\tau}(\phi_{\mathcal U}(x))=\Sigma_{\alpha\in A}\varphi_{\alpha}(x)\,v_{\theta(\alpha)}$.

\begin{proposition}
	Let ${\mathcal U}$ and ${\mathcal V}$ be two covers of $X$ connected by a cover map ${\mathcal U}\stackrel{\theta}{\rightarrow} {\mathcal V}$. Then, the nerve maps $\phi_{\mathcal U}$ and
	$\phi_{\mathcal V}$ satisfy $\phi_{\mathcal V}=\hat{\tau}\circ \phi_{\mathcal U}$ where
	$\tau: N({\mathcal U})\rightarrow N({\mathcal V})$ is the simplicial map induced by the cover map $\theta$.
\end{proposition}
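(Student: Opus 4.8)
The plan is to verify the identity $\phi_{\mathcal V}=\hat\tau\circ\phi_{\mathcal U}$ pointwise on $X$, using the explicit formula \eqref{eq:phiU} for each nerve map together with the \emph{coherence} of the two partitions of unity guaranteed by Proposition~\ref{prop:coherent}. First I would fix $x\in X$ and write out both sides. On one hand, by \eqref{eq:phiU} applied to the cover $\mathcal V$ with its subordinate partition $\{\psi_\beta\}_{\beta\in B}$,
$$\phi_{\mathcal V}(x)=\sum_{\beta\in B}\psi_\beta(x)\,v_\beta.$$
On the other hand, by the definition of $\hat\tau$ given just above the statement,
$$(\hat\tau\circ\phi_{\mathcal U})(x)=\hat\tau(\phi_{\mathcal U}(x))=\sum_{\alpha\in A}\varphi_\alpha(x)\,v_{\theta(\alpha)}.$$
So the whole claim reduces to checking that these two convex combinations of vertices of $|N(\mathcal V)|$ coincide.

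The key step is to regroup the sum on the right according to the fibers of $\theta$. Splitting $A=\bigsqcup_{\beta\in\mathrm{im}(\theta)}\theta^{-1}(\beta)$ and using that $v_{\theta(\alpha)}=v_\beta$ is constant on each fiber $\theta^{-1}(\beta)$, I get
$$\sum_{\alpha\in A}\varphi_\alpha(x)\,v_{\theta(\alpha)}=\sum_{\beta\in\mathrm{im}(\theta)}\Bigl(\sum_{\alpha\in\theta^{-1}(\beta)}\varphi_\alpha(x)\Bigr)v_\beta=\sum_{\beta\in\mathrm{im}(\theta)}\psi_\beta(x)\,v_\beta,$$
where the last equality is exactly the defining relation for $\psi_\beta$ from Proposition~\ref{prop:coherent}. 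Since $\psi_\beta\equiv 0$ for $\beta\notin\mathrm{im}(\theta)$, adding those vanishing terms back in yields $\sum_{\beta\in B}\psi_\beta(x)\,v_\beta=\phi_{\mathcal V}(x)$, which is what we wanted. All sums here are finite for a fixed $x$ because both partitions of unity have the local finiteness property, so the regrouping is legitimate.

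There is one subtlety I would be careful to address, and it is really the only place where anything nontrivial happens: I must check that $\hat\tau$ is well-defined and that the displayed manipulation takes place inside $|N(\mathcal V)|$, i.e.\ that $\sum_\alpha\varphi_\alpha(x)\,v_{\theta(\alpha)}$ genuinely represents a point of a simplex of $N(\mathcal V)$. This is where the map-of-covers hypothesis $U_\alpha\subseteq V_{\theta(\alpha)}$ enters: if $\varphi_{\alpha_0}(x),\dots,\varphi_{\alpha_k}(x)$ are the nonzero coefficients at $x$, then $x\in\bigcap_i U_{\alpha_i}\subseteq\bigcap_i V_{\theta(\alpha_i)}$, so $\{\theta(\alpha_0),\dots,\theta(\alpha_k)\}$ spans a simplex of $N(\mathcal V)$ (possibly with repeats, which only lowers the dimension), and the barycentric coordinates collected over repeated images add up correctly — precisely the regrouping above. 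This also shows $\hat\tau$ agrees with the simplicial map $\tau=N(\theta)$ on the vertices $u_\alpha$ and is the affine/barycentric extension of $\tau$ restricted to $\phi_{\mathcal U}(X)$, so the notation $\hat\tau$ is consistent with calling $\tau$ the induced simplicial map. I expect the bookkeeping of repeated images $\theta(\alpha)$ to be the main (and essentially only) obstacle; once one is comfortable that collecting equal vertices and summing their barycentric coordinates is harmless, the proof is the one-line computation above.
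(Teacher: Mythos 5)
Your proposal is correct and follows essentially the same route as the paper: both write out $\phi_{\mathcal V}$ and $\hat\tau\circ\phi_{\mathcal U}$ via the explicit formula for the nerve map and then regroup the sum over the fibers $\theta^{-1}(\beta)$, invoking the coherent partitions of unity $\psi_\beta=\sum_{\alpha\in\theta^{-1}(\beta)}\varphi_\alpha$ from Proposition~\ref{prop:coherent}. Your extra check that the regrouped combination genuinely lies in a simplex of $N(\mathcal V)$ is a welcome addition the paper leaves implicit.
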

\begin{proof}
For any point  $p\in \mathrm{im}(\phi_{\mathcal{U}})$, there is $x\in X$ where $p=\phi_{\mathcal U}(x)=\Sigma_{\alpha\in A}\varphi_{\alpha}(x)u_{\alpha}$. Then,
\begin{align*}
\hat{\tau}\circ \phi_{\mathcal U}(x)&=\hat{\tau}\left(\sum_{\alpha\in A}\varphi_{\alpha}(x) u_\alpha\right)
= \sum_{\alpha\in A}\varphi_{\alpha}(x)\tau(u_\alpha)
= \sum_{\alpha\in A}\varphi_{\alpha}(x)\,v_{\theta(\alpha)}\\
&= \sum_{\beta\in B}\sum_{\alpha\in \theta^{-1}(\beta)}\varphi_{\alpha}(x)\,v_{\theta(\alpha)}
= \sum_{\beta\in B}\psi_\beta(x) v_\beta
=\phi_{\mathcal V}(x)
\end{align*}
\end{proof}

An immediate corollary of the above Proposition is:
\begin{corollary}
	The induced maps of $\phi_{{\mathcal U}*}: H_k(X)\rightarrow H_k(|N({\mathcal U})|)$, $\phi_{{\mathcal V}*}: H_k(X)\rightarrow H_k(|N({\mathcal V})|)$, and 
	$\hat{\tau}_*: H_k(|N({\mathcal U})|)\rightarrow H_k(|N({\mathcal V})|)$ at the homology levels commute, that is, $\phi_{{\mathcal V}*}=\hat{\tau}_*\circ \phi_{{\mathcal U}*}$.
	\label{cor-commute}
\end{corollary}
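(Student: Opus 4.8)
The statement to prove is the Corollary labeled \verb|cor-commute|: given a cover map $\theta:\mathcal U\to\mathcal V$, the induced homology maps satisfy $\phi_{\mathcal V *}=\hat\tau_*\circ\phi_{\mathcal U *}$.

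\textbf{Plan.} The proof is essentially immediate from the Proposition just stated, namely the identity $\phi_{\mathcal V}=\hat\tau\circ\phi_{\mathcal U}$ of continuous maps. The plan is to apply the functoriality of singular homology: a commuting triangle of continuous maps between topological spaces induces a commuting triangle of homomorphisms between their homology groups. First I would invoke the preceding Proposition to record that $\phi_{\mathcal V}=\hat\tau\circ\phi_{\mathcal U}$ holds pointwise on $X$, where $\hat\tau$ is defined on $\mathrm{im}(\phi_{\mathcal U})$ by $\hat\tau(\phi_{\mathcal U}(x))=\sum_{\alpha\in A}\varphi_\alpha(x)\,v_{\theta(\alpha)}$, and both $\phi_{\mathcal U}$ and $\phi_{\mathcal V}$ are continuous by their construction in~\eqref{eq:phiU}. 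One should note that $\hat\tau$ is continuous: it agrees with the restriction to $\mathrm{im}(\phi_{\mathcal U})$ of the geometric realization $|\tau|:|N(\mathcal U)|\to|N(\mathcal V)|$ of the simplicial map $\tau$, which is visibly continuous, so $\hat\tau$ is continuous on its domain.

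\textbf{Key steps, in order.} (1) Cite the Proposition: $\phi_{\mathcal V}=\hat\tau\circ\phi_{\mathcal U}$ as maps $X\to|N(\mathcal V)|$, where we regard $\hat\tau$ as a continuous map $\mathrm{im}(\phi_{\mathcal U})\to|N(\mathcal V)|$, or, without loss, as $|\tau|$ restricted appropriately. (2) Apply the homology functor $H_k(\cdot)$ in degree $k$. Functoriality gives $H_k(\phi_{\mathcal V})=H_k(\hat\tau\circ\phi_{\mathcal U})=H_k(\hat\tau)\circ H_k(\phi_{\mathcal U})$, i.e.\ $\phi_{\mathcal V *}=\hat\tau_*\circ\phi_{\mathcal U *}$ as maps $H_k(X)\to H_k(|N(\mathcal V)|)$. (3) Conclude, recording that this holds for every $k$, which is exactly the asserted commutativity of the triangle of induced maps.

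\textbf{Main obstacle.} There is no real obstacle; the content was already extracted into the preceding Proposition, and the Corollary is a one-line consequence of functoriality. The only point requiring a sentence of care is making sure $\hat\tau$ is a bona fide continuous map between the relevant spaces so that $\hat\tau_*$ is defined as claimed --- this is handled by identifying $\hat\tau$ with (a restriction of) the continuous realization $|\tau|$ of the simplicial map $\tau=N(\theta)$. Once that is observed, applying the functor $H_k$ to the equality $\phi_{\mathcal V}=\hat\tau\circ\phi_{\mathcal U}$ finishes the proof.
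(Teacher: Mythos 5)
Your proposal is correct and matches the paper's (implicit) argument: the paper presents this corollary as an immediate consequence of the preceding proposition $\phi_{\mathcal V}=\hat\tau\circ\phi_{\mathcal U}$, obtained by applying the homology functor. Your added remark that $\hat\tau$ is continuous as a restriction of $|\tau|$ is a reasonable point of care but does not change the route.
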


With transition from singular to simplicial homology, Corollary~\ref{cor-commute} implies that:

\begin{proposition}
	$\bar{\phi}_{{\mathcal V}*}=\tau_*\circ \bar{\phi}_{{\mathcal U}*}$ where 
	$\bar{\phi}_{{\mathcal V}*}: H_k(X)\rightarrow H_k(N({\mathcal V}))$, 
	$\bar{\phi}_{{\mathcal U}*}: H_k(X)\rightarrow H_k(N({\mathcal U}))$
	and $\tau: N({\mathcal U})\rightarrow N({\mathcal V})$ is the simplicial map induced by a cover map ${\mathcal U}\rightarrow {\mathcal V}$.
	\label{commute-prop}
\end{proposition}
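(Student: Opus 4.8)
The plan is to derive Proposition~\ref{commute-prop} directly from Corollary~\ref{cor-commute} by ``pushing'' the commuting triangle of singular-homology maps through the canonical isomorphisms between singular homology of geometric realizations and simplicial homology of the underlying complexes. Concretely, recall that $\bar\phi_{{\mathcal U}*} = \iota_{\mathcal U}\circ \phi_{{\mathcal U}*}$ and $\bar\phi_{{\mathcal V}*} = \iota_{\mathcal V}\circ \phi_{{\mathcal V}*}$, where $\iota_{\mathcal U}: H_k(|N({\mathcal U})|)\to H_k(N({\mathcal U}))$ and $\iota_{\mathcal V}: H_k(|N({\mathcal V})|)\to H_k(N({\mathcal V}))$ are the isomorphisms from \cite[Theorem 34.3]{munkres}. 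So the goal reduces to showing that the square relating $\hat\tau_*$ (on singular homology of realizations) and $\tau_*$ (on simplicial homology) commutes, i.e. $\iota_{\mathcal V}\circ \hat\tau_* = \tau_*\circ \iota_{\mathcal U}$ on $H_k(|N({\mathcal U})|)$. Granting this square, one computes $\bar\phi_{{\mathcal V}*} = \iota_{\mathcal V}\circ\phi_{{\mathcal V}*} = \iota_{\mathcal V}\circ\hat\tau_*\circ\phi_{{\mathcal U}*} = \tau_*\circ\iota_{\mathcal U}\circ\phi_{{\mathcal U}*} = \tau_*\circ\bar\phi_{{\mathcal U}*}$, which is exactly the claim.

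The key step is therefore the naturality square $\iota_{\mathcal V}\circ\hat\tau_* = \tau_*\circ\iota_{\mathcal U}$. First I would observe that, on the image of $\phi_{\mathcal U}$, the map $\hat\tau$ is by definition the barycentric extension of the vertex map $u_\alpha\mapsto v_{\theta(\alpha)}$, i.e. it is precisely the restriction to $\mathrm{im}(\phi_{\mathcal U})$ of the topological realization $|\tau|:|N({\mathcal U})|\to|N({\mathcal V})|$ of the simplicial map $\tau$. (This is immediate from the formula $\hat\tau(\sum_\alpha\varphi_\alpha(x)u_\alpha) = \sum_\alpha\varphi_\alpha(x)v_{\theta(\alpha)}$, since $|\tau|$ sends a point with barycentric coordinates $\{t_\alpha\}$ to the point with the corresponding coordinates on the image vertices.) Then the square in question is just the standard fact that the isomorphism between singular and simplicial homology is \emph{natural} with respect to simplicial maps: for a simplicial map $\tau: K\to L$ one has $\iota_L\circ |\tau|_* = \tau_*\circ \iota_K$, which is part of the statement in \cite[Theorem 34.3]{munkres} (or can be cited from \cite{hatcher}). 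I would phrase the proof as: invoke Corollary~\ref{cor-commute}, identify $\hat\tau$ with $|\tau|$ restricted to $\mathrm{im}(\phi_{\mathcal U})$ (so $\hat\tau_* = |\tau|_*$ on the relevant classes), apply naturality of $\iota$, and chase the diagram.

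The main obstacle — really the only subtle point — is the identification $\hat\tau = |\tau|\big|_{\mathrm{im}(\phi_{\mathcal U})}$, and more precisely ensuring that this identification is compatible at the homology level, i.e. that $\hat\tau_*[\,c\,] = |\tau|_*[\,c\,]$ for every class $[c]\in H_k(|N({\mathcal U})|)$ represented by a cycle supported in $\mathrm{im}(\phi_{\mathcal U})$. Since $\phi_{{\mathcal U}*}$ is what feeds classes into $\hat\tau_*$ in Corollary~\ref{cor-commute}, every class we care about is the image under $\phi_{{\mathcal U}*}$ of a class of $H_k(X)$, hence is represented by a cycle whose image lies in $\mathrm{im}(\phi_{\mathcal U})$; on such cycles $\hat\tau$ and $|\tau|$ literally agree as maps, so they induce the same map on the corresponding homology classes, and we may freely replace $\hat\tau_*$ by $|\tau|_*$ in the diagram chase. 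Once this is pinned down, the rest is the routine naturality argument sketched above; I would keep the written proof to a few lines, citing \cite{munkres} for naturality of the singular-simplicial isomorphism.
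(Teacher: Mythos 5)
Your proposal is correct and follows the same route as the paper, which simply asserts that Proposition~\ref{commute-prop} follows from Corollary~\ref{cor-commute} by ``transition from singular to simplicial homology.'' You supply the details the paper leaves implicit --- the naturality square $\iota_{\mathcal V}\circ\hat{\tau}_* = \tau_*\circ\iota_{\mathcal U}$ and the identification of $\hat{\tau}$ with $|\tau|$ on $\mathrm{im}(\phi_{\mathcal U})$ --- and both points are handled correctly.
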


Proposition~\ref{commute-prop} extends Theorem~\ref{surj-map-thm} to the simplicial maps between two nerves.

\begin{theorem}
	Let $\tau: N({\mathcal U})\rightarrow N({\mathcal V})$ be a simplicial map induced by a cover map ${\mathcal U}\rightarrow {\mathcal V}$ where both $\mathcal U$ and $\mathcal V$ are path connected. Then, 
	$\tau_*: H_1(N({\mathcal U}))\rightarrow H_1(N({\mathcal V}))$ is a surjection.
\end{theorem}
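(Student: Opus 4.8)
The plan is to deduce this from the already-established surjectivity of the nerve map in degree one, namely Theorem~\ref{surj-map-thm}, together with the commutativity relation of Proposition~\ref{commute-prop}. In other words, I would not argue directly about cycles in $N(\mathcal{U})$ and $N(\mathcal{V})$; instead I would factor every class of $H_1(N(\mathcal{V}))$ through $H_1(X)$ and then route it through $H_1(N(\mathcal{U}))$ using the square that commutes at the homology level.

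Concretely, let $\mathcal{U}\stackrel{\theta}{\to}\mathcal{V}$ be the cover map inducing $\tau$, and fix coherent partitions of unity subordinate to $\mathcal{U}$ and $\mathcal{V}$ as supplied by Proposition~\ref{prop:coherent}, so that the nerve maps $\phi_{\mathcal U}$ and $\phi_{\mathcal V}$ and the map $\hat\tau$ are defined and Corollary~\ref{cor-commute} / Proposition~\ref{commute-prop} apply. Since $\mathcal{V}$ is path connected, Theorem~\ref{surj-map-thm} tells us that $\bar\phi_{{\mathcal V}*}\colon H_1(X)\to H_1(N(\mathcal{V}))$ is onto. Now take an arbitrary class $[\gamma]\in H_1(N(\mathcal{V}))$. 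Pick $[\omega]\in H_1(X)$ with $\bar\phi_{{\mathcal V}*}([\omega])=[\gamma]$. By Proposition~\ref{commute-prop}, $\bar\phi_{{\mathcal V}*}=\tau_*\circ\bar\phi_{{\mathcal U}*}$, hence
\[
[\gamma]=\bar\phi_{{\mathcal V}*}([\omega])=\tau_*\big(\bar\phi_{{\mathcal U}*}([\omega])\big)\in\operatorname{im}(\tau_*).
\]
Since $[\gamma]$ was arbitrary, $\tau_*\colon H_1(N(\mathcal{U}))\to H_1(N(\mathcal{V}))$ is surjective.

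This argument is essentially a diagram chase, so there is no real obstacle left at this point: all the substance has already been paid for. In particular, the only nontrivial ingredient is the existence of \emph{coherent} partitions of unity (Proposition~\ref{prop:coherent}), which is what makes the factorization $\phi_{\mathcal V}=\hat\tau\circ\phi_{\mathcal U}$ hold and hence makes the homology square commute; without it the reduction to Theorem~\ref{surj-map-thm} would not go through. I would also remark that, strictly speaking, this proof only uses that $\mathcal{V}$ is path connected (via Theorem~\ref{surj-map-thm}); the hypothesis that $\mathcal{U}$ is path connected is not needed for surjectivity of $\tau_*$, though it is natural to state both since the companion injectivity-type statements do require it.
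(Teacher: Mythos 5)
Your argument is correct and is essentially the paper's own proof: both deduce surjectivity of $\tau_*$ from the factorization $\bar{\phi}_{{\mathcal V}*}=\tau_*\circ\bar{\phi}_{{\mathcal U}*}$ (Proposition~\ref{commute-prop}) together with the surjectivity of $\bar{\phi}_{{\mathcal V}*}$ (Theorem~\ref{surj-map-thm}). Your closing observation that only the path-connectedness of $\mathcal{V}$ is actually used is also accurate.
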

\begin{proof}
	Consider the maps 
	$$
	H_1(X)\stackrel{\bar{\phi}_{{\mathcal U}*}}{\rightarrow}H_1(N({\mathcal U}))\stackrel{\tau_*}{\rightarrow} H_1(N({\mathcal V})), \mbox{ and } H_1(X)\stackrel{\bar{\phi}_{{\mathcal V}*}}{\rightarrow} H_1(N({\mathcal V})).
	$$
	By Proposition~\ref{commute-prop}, $\tau_*\circ \bar{\phi}_{{\mathcal U}*}=\bar{\phi}_{{\mathcal V}*}$. By Theorem~\ref{surj-map-thm}, the map $\bar{\phi}_{{\mathcal V}*}$ is a surjection. It follows that $\tau_*$ is a surjection.
\end{proof}

\subsection{Mapper and multiscale mapper}
In this section we extend the previous results to the structures called
mapper and multiscale mapper. Recall that $X$ is assumed to be compact.
Consider a cover of $X$ obtained indirectly as a pullback
of a cover of another space $Z$. This gives rise to the so called
{\em Mapper} and {\em Multiscale Mapper}.
Let $f: X \rightarrow Z$ be a continuous map
where $Z$ is equipped with an open cover ${\mathcal U} = \{
U_{\alpha} \}_{\alpha \in A}$
for some index set
$A$.  Since $f$ is continuous, the sets $\{f^{-1}(U_{\alpha}),\,\alpha\in A\}$ 
form an open cover of $X$.  For each $\alpha$, we can now consider
the decomposition of $f^{-1}(U_{\alpha })$ into its path connected
components, so we write $f^{-1}(U_{\alpha }) = \bigcup _{i =
	1}^{j_{\alpha}} V_{\alpha , i }$, where $j_{\alpha}$ is the number of
path connected components $V_{\alpha,i}$'s in $f^{-1}(U_{\alpha })$.  We write
$f^\ast{\mathcal U} $ for the cover of $X$ obtained this way
from the cover ${\mathcal U}$ of $Z$ and refer to it as the \emph{pullback} cover of $X$ induced by $\mathcal{U}$ via $f$. 
Note that by its construction, this pullback cover $f^\ast{\mathcal U} $ is path-connected. 

Notice that there are pathological examples of $f$
where $f^{-1}(U_\alpha)$ may shatter into infinitely many path components. This motivates us to
consider \emph{well-behaved} functions $f$: we require that for every 
path connected open set $U\subseteq Z$, the preimage $f^{-1}(U)$ has \emph{finitely} many open path connected components. Henceforth, all such functions are assumed to be well-behaved.


\begin{definition}[Mapper \cite{mapper}]\label{def:mapper}
	Let $f:X\rightarrow Z$ be a continuous map. Let $\mathcal{U} = \{U_\alpha\}_{\alpha\in A}$ be an open cover of $Z$.
	The \emph{mapper} arising from these data is defined to be the nerve simplicial complex of the pullback cover:
	$\mathrm{M}(\mathcal{U},f) := {N}(f^\ast\mathcal{U}).$ 
\end{definition}

When we consider a continuous map $f: X \rightarrow Z$ and we are given a map of covers $\xi:{\mathcal
	U} \rightarrow {\mathcal V}$ between covers of $Z$, we observed in~\cite{DMW16} that there is a corresponding map of
covers between the respective pullback covers of $X$: 
$f^\ast(\xi):f^\ast{\mathcal U} \longrightarrow f^\ast{\mathcal
	V}.$ 
Furthermore, if $\mathcal{U}\stackrel{\xi}{\rightarrow}\mathcal{V}\stackrel{\theta}{\rightarrow}\mathcal{W}$ are three different covers of a topological space with the intervening maps of covers between them, then $f^\ast(\theta\circ\xi) = f^\ast(\theta)\circ f^\ast(\xi).$

In the definition below, \emph{objects} can be covers, simplicial complexes, or vector spaces.
\begin{definition}[Tower]\label{def:hfc}
	A \emph{tower} $\mathfrak{W}$ with resolution $r\in\R$ 
	is any collection 
	$\mathfrak{W}=\big\{\mathcal{W}_\eps\big\}_{\eps\geq r}$
	of objects $\mathcal{W}_\varepsilon$ indexed in $\R$ together with maps $w_{\varepsilon,\varepsilon'}: {\mathcal W} _\varepsilon \rightarrow {\mathcal W}_{\varepsilon'}$  
	so that $w_{\varepsilon,\varepsilon}=\id$ and 
	$w_{\varepsilon',\varepsilon''}\circ w_{\varepsilon,\varepsilon'} = w_{\varepsilon,\varepsilon''}$ for all $r\leq \varepsilon\leq \varepsilon'\leq \varepsilon''$.
	Sometimes we write
	$\mathfrak{W}=\big\{\mathcal{W}_\varepsilon\overset{\tiny{w_{\varepsilon,\varepsilon'}}}{\longrightarrow} \mathcal{W}_{\varepsilon'}\big\}_{r\leq\varepsilon\leq \varepsilon'}$ to denote the collection with the maps. 
	Given such a tower $\mathfrak{W}$, $\res(\mathfrak{W})$ refers to its resolution. 
	
	When $\mathfrak{W}$ is a collection of covers equipped with maps of covers between them, we call it \emph{a tower of covers}. 
	When $\mathfrak{W}$ is a collection of simplicial complexes equipped with simplicial maps between them, we call it \emph{a tower of simplicial complexes}. 
\end{definition}

The pullback properties described at the end of section~\ref{sec:background} make it possible to take the pullback of a given  tower of covers of a space via a given continuous function into another space, so that we obtain the following.

\begin{proposition}[\cite{DMW16}]\label{coro:pullback-cover}
	Let $\mathfrak{U}=\{{\mathcal U}_\eps\}$ be a tower of covers of $Z$ and $f:X\rightarrow Z$ be a continuous function.
	Then, $f^\ast\mathfrak{U}=\{f^\ast{\mathcal U}_\eps\}$ is a tower of (path-connected) covers of $X$.
\end{proposition}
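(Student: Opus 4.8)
The plan is to build the tower $f^\ast\mathfrak{U}$ levelwise and then verify the two tower axioms of Definition~\ref{def:hfc} using the pullback facts recalled at the end of Section~\ref{sec:background}. First, for each $\eps\geq\res(\mathfrak{U})$, continuity of $f$ makes $\{f^{-1}(U):U\in\mathcal{U}_\eps\}$ an open cover of $X$, and decomposing each $f^{-1}(U)$ into its (finitely many, by well-behavedness of $f$) path components yields the path-connected open cover $f^\ast\mathcal{U}_\eps$. So every object of the candidate tower is already a path-connected cover of $X$, which is the parenthetical claim in the statement.

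Next I would install the structure maps. For $r\leq\eps\leq\eps'$, let $u_{\eps,\eps'}:\mathcal{U}_\eps\to\mathcal{U}_{\eps'}$ be the map of covers carried by $\mathfrak{U}$. As recalled in Section~\ref{sec:background}, this induces a map of covers $f^\ast(u_{\eps,\eps'}):f^\ast\mathcal{U}_\eps\to f^\ast\mathcal{U}_{\eps'}$ of $X$, and these are the proposed structure maps of $f^\ast\mathfrak{U}$. It remains to check $f^\ast(u_{\eps,\eps})=\id$ and $f^\ast(u_{\eps',\eps''})\circ f^\ast(u_{\eps,\eps'})=f^\ast(u_{\eps,\eps''})$. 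For the identity axiom, $u_{\eps,\eps}=\id$ in $\mathfrak{U}$, and unwinding the pullback construction shows $f^\ast(\id_{\mathcal{U}_\eps})$ sends each path component of $f^{-1}(U)$ to itself, hence is the identity on $f^\ast\mathcal{U}_\eps$. For the composition axiom, the functoriality identity $f^\ast(\theta\circ\xi)=f^\ast(\theta)\circ f^\ast(\xi)$ quoted in the excerpt gives $f^\ast(u_{\eps',\eps''})\circ f^\ast(u_{\eps,\eps'})=f^\ast(u_{\eps',\eps''}\circ u_{\eps,\eps'})$, and $u_{\eps',\eps''}\circ u_{\eps,\eps'}=u_{\eps,\eps''}$ because $\mathfrak{U}$ is itself a tower; composing the two equalities yields the claim. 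This verifies Definition~\ref{def:hfc}, so $f^\ast\mathfrak{U}$ is a tower of path-connected covers of $X$.

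The only non-formal content is what is absorbed into the cited results of \cite{DMW16}: that a map of covers $\mathcal{U}\to\mathcal{V}$ of $Z$ genuinely pulls back to a \emph{well-defined} map of covers $f^\ast\mathcal{U}\to f^\ast\mathcal{V}$ of $X$ (one must see that whenever $U\subseteq V$ every path component of $f^{-1}(U)$ lies inside a single path component of $f^{-1}(V)$, which uses maximality of path components), and that this assignment is strictly functorial. Well-behavedness of $f$ only ensures finiteness of the component decomposition, needed for later paracompactness-type arguments; it is irrelevant to the tower axioms. Hence the main — and essentially only — obstacle is bookkeeping: keeping the indexing of path components consistent so that the identity and composition relations hold on the nose, not merely up to the contiguity of Proposition~\ref{prop:cover-contiguity}. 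Given the stated results, nothing further is required.
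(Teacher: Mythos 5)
Your proposal is correct and follows essentially the same route the paper takes: the paper states this proposition as a citation to \cite{DMW16}, justified only by the pullback properties recalled just before it (each $f^\ast\mathcal{U}_\eps$ is a path-connected cover by construction, and the induced cover maps $f^\ast(\xi)$ satisfy $f^\ast(\theta\circ\xi)=f^\ast(\theta)\circ f^\ast(\xi)$), which is exactly the verification you carry out. Your added remark that well-definedness of $f^\ast(\xi)$ rests on each path component of $f^{-1}(U)$ lying in a unique path component of $f^{-1}(V)$ is the right observation and is what makes the functoriality hold on the nose rather than merely up to contiguity.
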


In general, given a tower of covers $\mathfrak{W}$ of a space $X$, the nerve 
of each cover  in $\mathfrak{W}$ together with each map of $\mathfrak{W}$ provides a tower of simplicial complexes which we denote by $N(\mathfrak{W})$. 
\begin{definition}[Multiscale Mapper \cite{DMW16}]
	Let $f:X\rightarrow Z$ be a continuous map. Let $\mathfrak{U}$ be a tower of covers of $Z$.
	Then, the \emph{multiscale mapper} is defined to be the tower of the nerve simplicial complexes of the pullback:
	$\mathrm{MM}(\mathfrak{U},f):=N(f^\ast\mathfrak{U}).$
\end{definition}

As we indicated earlier, in general, no surjection between $X$ and its nerve may exist at the homology level. It follows that the same is true for the mapper $N(f^\ast{\mathcal U})$. But for $H_1$, we can apply the results contained in previous section to claim the following. 
\begin{theorem}
	Consider the following multiscale mapper arising out of a tower of path connected covers:
	\begin{equation*}
	N(f^\ast\mathcal{U}_{{0}})\rightarrow N(f^\ast\mathcal{U}_{{1}})\rightarrow\cdots\rightarrow N(f^\ast\mathcal{U}_{{n}})
	\end{equation*}
	\begin{itemize}
		\item There is a surjection from $H_1(X)$ to $H_1(N(f^*{\mathcal U}_{i}))$ for each $i\in [0,n]$.
		\item Consider a $H_1$-persistence module of a multiscale mapper as shown below.
		\begin{equation}\label{small-sc-vec}
		\mathrm{H}_1\big(N(f^\ast\mathcal{U}_{0})\big)\rightarrow \mathrm{H}_1\big(N(f^\ast\mathcal{U}_{1})\big)\rightarrow\cdots\rightarrow \mathrm{H}_1\big(N(f^\ast\mathcal{U}_{n})\big)
		\end{equation}
		All connecting maps in the above module are surjections.
	\end{itemize}
\end{theorem}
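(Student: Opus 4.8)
The plan is to deduce this theorem directly from the two main results already established in Section~\ref{sec:H1}: Theorem~\ref{surj-map-thm} (the nerve map $\bar{\phi}_{{\mathcal U}*}\colon H_1(X)\to H_1(N(\mathcal{U}))$ is surjective for any path-connected cover $\mathcal{U}$) and the last unlabeled theorem of the ``From nerves to nerves'' subsection (a simplicial map $\tau_*$ induced by a cover map between path-connected covers is surjective on $H_1$). The key observation is that, by Proposition~\ref{coro:pullback-cover}, each pullback cover $f^\ast\mathcal{U}_i$ of the compact (hence paracompact) space $X$ is path-connected, and by the pullback functoriality recalled from \cite{DMW16}, the maps of covers $\xi_{i}\colon\mathcal{U}_i\to\mathcal{U}_{i+1}$ in the tower induce maps of covers $f^\ast(\xi_i)\colon f^\ast\mathcal{U}_i\to f^\ast\mathcal{U}_{i+1}$, whose induced simplicial maps $N(f^\ast(\xi_i))$ are exactly the connecting maps in the multiscale mapper tower displayed in the statement.

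First I would prove the first bullet: fix $i\in[0,n]$; since $f^\ast\mathcal{U}_i$ is a path-connected cover of the compact (paracompact) space $X$, Theorem~\ref{surj-map-thm} applies verbatim to give a surjection $\bar{\phi}_{(f^\ast\mathcal{U}_i)*}\colon H_1(X)\twoheadrightarrow H_1(N(f^\ast\mathcal{U}_i)) = H_1(\M(\mathcal{U}_i,f))$. That is the whole of the first bullet. Second, for the connecting maps, I would argue one step at a time: for each $i$, the simplicial map $\M(\mathcal{U}_i,f)\to\M(\mathcal{U}_{i+1},f)$ is $N(f^\ast(\xi_i))$, induced by the cover map $f^\ast(\xi_i)$ between the two path-connected covers $f^\ast\mathcal{U}_i$ and $f^\ast\mathcal{U}_{i+1}$ of $X$; hence the last theorem of the ``From nerves to nerves'' subsection gives directly that the induced map on $H_1$ is surjective. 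Since every connecting map in~(\ref{small-sc-vec}) between consecutive terms is of this form, and a composition of surjections is a surjection, every connecting map $\mathrm{H}_1(N(f^\ast\mathcal{U}_j))\to\mathrm{H}_1(N(f^\ast\mathcal{U}_k))$ for $j\le k$ is surjective. (One should remark that Proposition~\ref{prop:cover-contiguity} ensures the induced map is independent of the chosen cover map, so ``the'' connecting map is well defined.)

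I do not expect a genuine obstacle here: the theorem is essentially a packaging of Theorems~\ref{surj-map-thm} and its nerve-to-nerve strengthening, specialized to pullback covers. The only points requiring care are bookkeeping ones: (i) verifying that $X$ compact $\Rightarrow$ paracompact so that the partition-of-unity machinery (and hence Theorem~\ref{surj-map-thm}) is available for $f^\ast\mathcal{U}_i$ — this is already noted in Section~\ref{sec:background}; (ii) confirming that well-behavedness of $f$ is what guarantees each $f^{-1}(U)$ has finitely many path components so that $f^\ast\mathcal{U}_i$ is a bona fide (open) cover with an honest index set, which is exactly the standing assumption made just before Definition~\ref{def:mapper}; and (iii) identifying the connecting maps of the multiscale mapper with the simplicial maps $N(f^\ast(\xi_i))$, which is the definition of $\MM(\mathfrak{U},f)=N(f^\ast\mathfrak{U})$ together with the functoriality $f^\ast(\theta\circ\xi)=f^\ast(\theta)\circ f^\ast(\xi)$. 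Once these are in place the proof is a two-line invocation of prior results.
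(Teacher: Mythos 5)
Your proposal is correct and is exactly the argument the paper intends: the paper states this theorem as an immediate consequence of Theorem~\ref{surj-map-thm} and the nerve-to-nerve surjectivity theorem, applied to the path-connected pullback covers $f^\ast\mathcal{U}_i$ and the induced cover maps $f^\ast(\xi_i)$. Your bookkeeping remarks (paracompactness, well-behavedness of $f$, and identification of the connecting maps with $N(f^\ast(\xi_i))$) match the standing assumptions the paper relies on.
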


The above result implies that, as we proceed forward through the multiscale mapper, no
new homology classes are born. They can only die. Consequently, all bar codes in the
persistence diagram of the $H_1$-persistence module induced by it have the left endpoint at $0$.

\section{Analysis of persistent $H_1$-classes}
\label{sec:persistentH1}

Using the language of persistent homology, the results in the previous section imply
that one dimensional homology classes can die in the nerves,
but they cannot be born. In this section, we analyze further to identify
the classes that survive. The distinction among the classes is made via
a notion of `size'. Intuitively, we show that the classes with `size' much larger
than the `size' of the cover survive. The `size' is defined with the pseudometric
that the space $X$ is assumed to be equipped with. Precise statements are made 
in the subsections.

\subsection{$H_1$-classes of nerves of pseudometric spaces}
\label{H1size-sec}
Let $(X,d)$ be a pseudometric space, that is, $d$ satisfies the axioms of a metric except
that $d(x,x')=0$ may not necessarily imply $x=x'$. Assume $X$ to be compact as before. We define a `size' for a homology class that reflects how big the smallest generator in the class is in the metric $d$.

\begin{definition}
	The size $s(X')$ of a subset $X'$ of the pseudometric space $(X,d)$ is defined to be its diameter, that is, $s(X')=\sup_{x,x'\in X'\times X'} d(x,x')$. The size of a class
	$c\in H_k(X)$ is defined as $s(c)=\inf_{z\in c} s(z)$.
\end{definition}

\begin{definition}
	A set of  $k$-cycles $z_1,z_2,\ldots, z_n$ of $H_k(X)$ is called
	a generator basis if the classes $[z_1], [z_2],\ldots, [z_n]$
	together form a basis of $H_k(X)$. It is called a minimal generator basis if $\Sigma_{i=1}^n s(z_i)$ is minimal among all generator bases.
\end{definition}


\myparagraph{Lebesgue number of a cover.} Our goal is to characterize the classes in the nerve of $\mathcal{U}$ with respect to the sizes of their preimages in $X$ via the map $\phi_\mathcal{U}$ where  $\mathcal U$ is assumed to be path connected. The Lebesgue number  of such a cover $\mathcal U$ becomes useful in this characterization. It is the largest number $\lambda(\mathcal U)$ so that any subset of $X$ with size at most $\leb({\mathcal U})$ is contained in at least one element of $\mathcal U$. Formally,
$$
\leb({\mathcal U})= \sup \{\delta\,|\, \forall X'\subseteq X \mbox{ with } s(X')\leq \delta, \exists U_\alpha \in {\mathcal U} \mbox{ where } U_\alpha\supseteq X'\}
$$

In the above definition, we can assume $X'$ to be path-connected because if it were not, then a connected superset containing all components of $X'$ is contained in $U_\alpha$ because $U_\alpha$ is path connected itself.
We observe that a homology class of size no more than $\leb({\mathcal U})$ cannot survive in the nerve. Further, the homology classes whose sizes are significantly larger than the maximum size of a cover do necessarily survive where we define the maximum size of a cover as $s_{max}(\mathcal U):= \max_{U\in {\mathcal U}} \{s(U)\}$.

Let $z_1,z_2,\ldots,z_g$ be a non-decreasing sequence of the generators with respect to their sizes in a minimal generator basis of $H_1(X)$. Consider the map
$\phi_{\mathcal U}: X \rightarrow |N({\mathcal U})|$ as introduced in Section~\ref{sec:H1}. 
We have the following result.

\begin{theorem}
	Let $\mathcal U$ be a path-connected cover of $X$. 
	\begin{itemize}
		\item[i.]
		
		Let $\ell=g+1$ if $\lambda(\mathcal U) > s(z_g)$. Otherwise, let $\ell\in[1,g]$ be the smallest integer so that $s(z_\ell)> \lambda(\mathcal U)$. If
		$\ell\not=1$, the class $\bar{\phi}_{{\mathcal U}*}[z_j]=0$ for $j=1,\ldots, \ell-1$.
		Moreover, if $\ell\not=g+1$, the			
		classes $\{\bar{\phi}_{{\mathcal U}*}[z_j]\}_{j=\ell,\ldots,g}$ 
		generate $H_1(N(\mathcal U))$.
		
		\item[ii.] The classes $\{\bar{\phi}_{{\mathcal U}*}[z_j]\}_{j=\ell',\ldots, g}$ are linearly independent where
		$s(z_{\ell'})> 4s_{max}({\mathcal U})$.
	\end{itemize}
	\label{H1prop-mapper}
\end{theorem}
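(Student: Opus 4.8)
\textbf{Proof plan for Theorem~\ref{H1prop-mapper}.}

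\emph{Part (i).} The plan is to translate size bounds through the nerve map $\phi_{\mathcal U}$ using the geometry of the construction of $\phi_{\mathcal U}$ given in~\eqref{eq:phiU}. For the vanishing claim, I would first show that $\phi_{\mathcal U}$ does not increase size too much: if $z$ is a cycle in $X$ of size $s(z)\le\lambda(\mathcal U)$, then by the definition of the Lebesgue number the whole carrier of $z$ lies in a single cover element $U_\alpha$. The image $\phi_{\mathcal U}(z)$ then lies inside the (contractible) star of the vertex $u_\alpha$ in $|N(\mathcal U)|$, because for every $x$ in the carrier of $z$ we have $\varphi_\alpha(x)>0$ (after possibly enlarging the carrier to a path-connected superset, as noted in the discussion of the Lebesgue number), so $\phi_{\mathcal U}(x)$ lies in the open star of $u_\alpha$; hence $\bar\phi_{{\mathcal U}*}[z]=0$. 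Applying this to each generator $z_j$ with $j<\ell$, where by definition $s(z_j)\le s(z_{\ell-1})\le\lambda(\mathcal U)$, gives $\bar\phi_{{\mathcal U}*}[z_j]=0$. For the generation claim: by Theorem~\ref{surj-map-thm}, $\bar\phi_{{\mathcal U}*}\colon H_1(X)\to H_1(N(\mathcal U))$ is surjective, so the images $\{\bar\phi_{{\mathcal U}*}[z_j]\}_{j=1}^g$ of a full basis of $H_1(X)$ generate $H_1(N(\mathcal U))$; since the first $\ell-1$ of them are zero, the remaining ones $\{\bar\phi_{{\mathcal U}*}[z_j]\}_{j=\ell}^g$ already generate.

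\emph{Part (ii).} Here the plan is to build a left inverse on the relevant classes. Suppose $\sum_{j\in S}\bar\phi_{{\mathcal U}*}[z_j]=0$ for some nonempty $S\subseteq\{\ell',\dots,g\}$; I want a contradiction with the minimality of the generator basis. The idea is to pull a bounding chain back: if $\sum_{j\in S}|\gamma_j'|$ bounds in $|N(\mathcal U)|$ (where $\gamma_j'$ is a $1$-cycle in $N(\mathcal U)$ representing $\bar\phi_{{\mathcal U}*}[z_j]$), one constructs, cover-element by cover-element, a $2$-chain and hence a $1$-cycle $z'$ in $X$ with $[z']=\sum_{j\in S}[z_j]$ in $H_1(X)$ but of controlled size. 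Concretely, each edge/triangle of the bounding complex in $N(\mathcal U)$ is realized inside a single cover element of size $\le s_{max}(\mathcal U)$, and stitching these realizations together (using path-connectivity of the cover elements to join chosen basepoints) produces a representative of $\sum_{j\in S}[z_j]$ whose size is bounded by a small constant multiple of $s_{max}(\mathcal U)$ — the factor $4$ in the statement is exactly what the stitching argument yields. Then replacing $z_{j_0}$ (the largest-index, hence largest-size, element of $S$) by $z'$ in the basis produces a new generator basis: it is still a basis since $[z']=\sum_{j\in S}[z_j]$ and $j_0=\max S$, and its total size is strictly smaller because $s(z')\le 4s_{max}(\mathcal U)<s(z_{\ell'})\le s(z_{j_0})$ while all other generators are unchanged. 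This contradicts minimality, so no such relation $S$ exists and the classes are linearly independent.

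\emph{Main obstacle.} The routine parts are the surjectivity input and the bookkeeping in the basis-exchange argument; the technical heart is the explicit geometric construction in part (ii) that converts a nullhomologous combination of nerve cycles back into an $X$-cycle of size at most $4\,s_{max}(\mathcal U)$. I expect this to require care in choosing compatible basepoints in overlapping cover elements and in bounding the diameter of the concatenated paths — essentially a quantitative version of the lifting argument used in the proof of Proposition~\ref{phi-surject-prop}, but now tracking sizes through a $2$-chain rather than just through a $1$-cycle. Getting the constant down to $4$ (rather than some larger bound) is where the choices must be made most carefully: each triangle of the bounding chain contributes its diameter, and adjacent realizations share a vertex so the triangle inequality must be applied along a path that visits at most a bounded number of cover elements whose sizes sum, after the optimal choice of basepoints, to at most $4\,s_{max}(\mathcal U)$.
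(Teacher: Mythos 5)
\emph{Part (i).} Your argument is essentially the paper's (Proposition~\ref{P2-mapper} combined with Theorem~\ref{surj-map-thm}): a cycle of size below the Lebesgue number lies in a single cover element $U_\alpha$, its image under $\phi_{\mathcal U}$ lies in the star of $u_\alpha$ and is therefore nullhomotopic, and surjectivity of $\bar\phi_{{\mathcal U}*}$ then gives the generation claim. One small slip: $x\in U_\alpha$ does not imply $\varphi_\alpha(x)>0$ (the partition of unity is only subordinate to $\mathcal U$). The correct observation is that every $\beta$ with $\varphi_\beta(x)>0$ satisfies $U_\beta\cap U_\alpha\neq\emptyset$, so $\phi_{\mathcal U}(x)$ lies in a simplex having $u_\alpha$ as a vertex; the conclusion is unaffected.

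\emph{Part (ii).} You follow the paper's strategy (pull the bounding $2$-chain back to $X$ via the chain map $\rho$, with the $4s_{max}$ bound coming from realizing each edge as a path inside the union of two overlapping cover elements), but your closing step has a genuine gap. The pullback produces $\rho(\partial D)=\sum_t\rho(\partial t)$, a sum of \emph{many} cycles, each individually of diameter at most $4s_{max}(\mathcal U)$ but scattered all over $X$; it does not produce a single representative $z'$ of $\sum_{j\in S}[z_j]$ with $s(z')\le 4s_{max}(\mathcal U)$, and in general no such representative exists (think of two small loops joined by a long arc: $[z_1]+[z_2]$ has no small representative even though each $[z_i]$ does). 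Hence the one-shot exchange ``replace $z_{j_0}$ by $z'$'' does not decrease the total size of the basis, and the contradiction does not follow as you state it. The repair, which is what the paper does, is to apply the exchange argument to each small cycle $\rho(\partial t)$ separately, concluding that each of their classes --- and hence $[\rho(\gamma)]$ --- lies in the span of basis elements of size at most $4s_{max}(\mathcal U)$; the contradiction is then with the linear independence of the minimal generator basis, since every $z_j$ with $j\in S$ has size strictly greater than $4s_{max}(\mathcal U)$. A second point you gloss over: you assert $[z']=\sum_{j\in S}[z_j]$ in $H_1(X)$, which would require a ``$\rho$ is a left inverse of $\phi_{\mathcal U}$ up to small cycles'' lemma that your sketch does not supply; the paper instead closes the loop on the nerve side using Proposition~\ref{P1-mapper}, i.e.\ $[\phi_{\mathcal U}(\rho(\gamma))]=[|\gamma|]$, which is the composition its homotopy argument actually controls.
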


The result above says that only the classes of $H_1(X)$ generated by generators of large enough size survive in the nerve.
To prove this result, we use a map $\rho$ that sends each $1$-cycle in $N({\mathcal U})$
to a $1$-cycle in $X$. We define a chain map 
$\rho: {\cal C}_1(N({\mathcal U}))\rightarrow {\cal C}_1(X)$ among one dimensional chain groups
as follows \footnote{We note that the high level framework of defining such a chain map and analyzing what it does to homologous cycles is similar to the work by Gasparovic et al. \cite{GGP16}. The technical details are different.}. It is sufficient to exhibit the map for an elementary chain 
of an edge, say $e=\{u_{\alpha},u_{\alpha'}\} \in {\cal C}_1(N({\mathcal U}))$.
Since $e$ is an edge in $N({\mathcal U})$, the two cover elements
$U_{\alpha}$ and $U_{\alpha'}$ in $X$ have a common intersection. Let $a\in U_\alpha$ and $b\in U_{\alpha'}$ be two points that are arbitrary but fixed for $U_{\alpha}$ and
$U_{\alpha'}$ respectively. 
Pick a path $\xi(a,b)$ (viewed as a singular chain) in the union of $U_\alpha$ and $U_{\alpha'}$ which is path connected as both $U_\alpha$ and $U_{\alpha'}$ are. Then, define $\rho(e)=\xi(a,b)$. 
The following properties of $\phi_{\mathcal U}$ and $\rho$ turn out to be useful.

\begin{proposition}
	Let $\gamma$ be any $1$-cycle in $N({\mathcal U})$. Then, $[\phi_{\mathcal U}(\rho(\gamma))]=[|\gamma|]$.
	\label{P1-mapper}
\end{proposition}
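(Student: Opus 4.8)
The plan is to set up a prism-style chain homotopy between the singular $1$-cycle $\phi_{\mathcal U}(\rho(\gamma))$ in $|N(\mathcal U)|$ and the geometric cycle $|\gamma|$, edge by edge along $\gamma$. Write $\gamma = e_0 + e_1 + \cdots + e_{r-1}$ as an oriented closed edge path in the $1$-skeleton of $N(\mathcal U)$, with $e_i = \{u_{\alpha_i}, u_{\alpha_{i+1}}\}$ and indices read mod $r$. The chain map $\rho$ sends each $e_i$ to a chosen path $\xi_i := \xi(a_{\alpha_i}, b_{\alpha_{i+1}})$ inside $U_{\alpha_i}\cup U_{\alpha_{i+1}}$, where I must be slightly careful about the fixed basepoints: the definition of $\rho$ fixes one point $a_\alpha\in U_\alpha$ per cover element, so $\rho(e_i)$ is a path from $a_{\alpha_i}$ to $a_{\alpha_{i+1}}$ and the $\rho(e_i)$ concatenate into an honest closed loop $\rho(\gamma)$ in $X$. (If the two endpoints used for $e_{i-1}$ and $e_i$ at the shared vertex differed, I would first need to splice in a connecting path in $U_{\alpha_i}$; with the ``arbitrary but fixed'' convention they agree, so $\rho(\gamma)$ is literally a $1$-cycle.)

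Next I would push this loop forward by $\phi_{\mathcal U}$ and compare it to $|\gamma|$ piece by piece. The key local fact is: for a single path $\xi_i$ running inside $U_{\alpha_i}\cup U_{\alpha_{i+1}}$, every point $x$ on $\xi_i$ has $\varphi_\beta(x) = 0$ for $\beta \notin \{\alpha_i, \alpha_{i+1}\}$ unless $x$ lies in yet another cover element — but in any case, by the support condition, $\varphi_\beta(x)\ne 0$ only when $x\in U_\beta$, and in particular the barycentric-coordinate image $\phi_{\mathcal U}(x) = \sum_\beta \varphi_\beta(x) u_\beta$ lies in the closed star of the edge $e_i$... more precisely, for $x\in U_{\alpha_i}\cup U_{\alpha_{i+1}}$ the point $\phi_{\mathcal U}(x)$ lies in a simplex of $N(\mathcal U)$ all of whose vertices lie in a common simplex containing $\{u_{\alpha_i},u_{\alpha_{i+1}}\}$; the salient consequence is that $\phi_{\mathcal U}(\xi_i)$ is a path from $\phi_{\mathcal U}(a_{\alpha_i})$ to $\phi_{\mathcal U}(a_{\alpha_{i+1}})$ lying in the subcomplex-neighborhood of the edge $e_i$, hence homotopic rel endpoints (inside that contractible-enough piece, or simply by a straight-line homotopy in barycentric coordinates) to the straight segment joining $\phi_{\mathcal U}(a_{\alpha_i})$ to $\phi_{\mathcal U}(a_{\alpha_{i+1}})$. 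Then a second straight-line homotopy moves $\phi_{\mathcal U}(a_{\alpha_i})$ along $|e_{i-1}|\cup|e_i|$ to the vertex $u_{\alpha_i}$ itself (since $\phi_{\mathcal U}(a_{\alpha_i})$ lies on a simplex whose vertex set includes $u_{\alpha_i}$ — namely $a_{\alpha_i}\in U_{\alpha_i}$ forces $u_{\alpha_i}$ into the carrier). Concatenating these homotopies over all $i$ produces a homotopy of loops from $\phi_{\mathcal U}(\rho(\gamma))$ to the edge-loop $|e_0| + \cdots + |e_{r-1}| = |\gamma|$, hence equality of homology classes.

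The cleanest way to organize the above is to build an explicit chain homotopy $P\colon \mathcal C_1(N(\mathcal U))\to \mathcal C_2(|N(\mathcal U)|)$ with $\partial P(e) + P(\partial e) = \phi_{\mathcal U}(\rho(e)) - |e|$ for each edge $e$; summing over the edges of $\gamma$ the boundary-of-vertex terms telescope (because $\gamma$ is a cycle and the basepoints match at shared vertices), giving $\phi_{\mathcal U}(\rho(\gamma)) - |\gamma| = \partial P(\gamma)$, i.e. $[\phi_{\mathcal U}(\rho(\gamma))] = [|\gamma|]$ in $H_1(|N(\mathcal U)|)$. Here $P(e)$ is the singular $2$-chain traced out by the linear homotopy (in barycentric coordinates on $|N(\mathcal U)|$) that sweeps the path $\phi_{\mathcal U}(\xi(a_\alpha,b_{\alpha'}))$ onto the affine edge $|e|$; this is well-defined because both the path's image and the target edge sit inside one common simplex of $N(\mathcal U)$ (the simplex carrying $\xi(a_\alpha,b_{\alpha'})$, which contains both $u_\alpha$ and $u_{\alpha'}$), and affine combinations within a single geometric simplex stay in $|N(\mathcal U)|$.

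The main obstacle I anticipate is precisely the ``single common simplex'' bookkeeping: one must verify that the entire image $\phi_{\mathcal U}(\xi(a_\alpha,b_{\alpha'}))$, for the chosen path $\xi$ lying in $U_\alpha\cup U_{\alpha'}$, is contained in the union of geometric simplices of $N(\mathcal U)$ that have $\{u_\alpha\}$ or $\{u_{\alpha'}\}$ as a vertex, and that this union — although not a single simplex in general — is star-shaped / deformation-retracts onto $|e|$ in a way compatible across consecutive edges so the prisms glue. If that global argument is awkward, the fallback is the homotopy-of-loops formulation of the previous paragraph, where I only need each local homotopy to exist and to agree at the shared endpoints $\phi_{\mathcal U}(a_{\alpha_i})\leadsto u_{\alpha_i}$, which is a strictly local (per-vertex, per-edge) check and avoids any global star-shapedness claim.
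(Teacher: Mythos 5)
Your overall strategy---decompose $\gamma$ into edges, homotope each $\phi_{\mathcal U}(\rho(e))$ onto the geometric edge $|e|$ by radial deformations toward the vertices, and glue the per-edge homotopies at the shared points $\phi_{\mathcal U}(a_{\alpha_i})\leadsto u_{\alpha_i}$---is the same as the paper's. But the step you yourself flag as ``the main obstacle'' is the actual content of the proof, and neither of your proposed resolutions closes it. The claim that the whole image $\phi_{\mathcal U}(\xi(a,b))$ and the target edge ``sit inside one common simplex'' is false: the path passes through many simplices. And the union of the closed stars of $u_\alpha$ and $u_{\alpha'}$ need not deformation retract onto $|e|$: if $U_\alpha, U_{\alpha'}, U_\beta$ pairwise intersect but have empty triple intersection, that union contains the boundary of an unfilled triangle, so it is not even simply connected. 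So no appeal to contractibility or star-shapedness of the ambient region can work.

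What is missing is a local hand-off at a crossing point. Since $\rho(e)$ runs from $a\in U_\alpha$ to $b\in U_{\alpha'}$ inside $U_\alpha\cup U_{\alpha'}$, connectedness of $[0,1]$ gives a point $w\in\rho(e)\cap U_\alpha\cap U_{\alpha'}$. Split $\phi_{\mathcal U}(\rho(e))$ at $\hat w=\phi_{\mathcal U}(w)$: the first piece lies in simplices incident to $u_\alpha$, the second in simplices incident to $u_{\alpha'}$, and---crucially---if $\sigma_w$ is the minimal carrier of $\hat w$, then $\mathrm{Vert}(\sigma_w)\cup\{u_\alpha,u_{\alpha'}\}$ spans a simplex $\bar\sigma_w$ of $N(\mathcal U)$, because $w$ lies in every cover element indexed by $\mathrm{Vert}(\sigma_w)$ as well as in $U_\alpha\cap U_{\alpha'}$. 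The simplex $|\bar\sigma_w|$ is where the two radial homotopies (toward $u_\alpha$ on the first piece, toward $u_{\alpha'}$ on the second) are reconciled and the middle subpath is pushed onto $|e|$; without identifying it, your two radial deformations disagree at the splice point and the prism does not close up. Note also that the carrier of $\phi_{\mathcal U}(x)$ for $x\in U_\alpha$ need not contain $u_\alpha$ (since $\varphi_\alpha(x)$ may vanish); it only spans a simplex together with $u_\alpha$---you state this correctly, and the same care is what produces $\bar\sigma_w$. Once this hand-off is supplied, the rest of your argument (telescoping over edges, agreement at the shared basepoints) goes through exactly as in the paper.
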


\noindent{{\bf {Proof.~}}}
Let $e=(u_\alpha,u_{\beta})$ be an edge in $\gamma$ with $u_\alpha$ and $u_\beta$ corresponding to $U_\alpha$ and $U_\beta$ respectively. 
Let $a$ and $b$ be the corresponding fixed points for set $U_\alpha$ and $U_\beta$ respectively. Consider the path $\rho(e)=\xi(a,b)$ in $X$ as constructed above, and set $\gamma_{a,b}=\phi_{\mathcal U}(\xi(a,b))$ to be the image of $\rho(e)$ in $|N(\mathcal{U})|$. 
See Figure \ref{fig:edgedeform} for an illustration. 
Given an oriented path $\ell$ and two points $x,y\in \ell$, we use $\ell[x,y]$ to denote the subpath of $\ell$ from $x$ to $y$. 
For a point $x \in X$, for simplicity we set $\hat{x} = \phi_{\mathcal{U}}(x)$ to be its image in $|N(\mathcal{U})|$. 

\ifpaper

Now, let $w \in \rho(e)$ be a point in $U_\alpha \cap U_\beta$, and $\hat{w} = \phi_{\mathcal U} (w)$ be its image in $\gamma_{a,b}$. 
We have the following observations. 
First, any point from $\gamma_{a,b}[\hat a, \hat w]$ is contained in a simplex in $N(\mathcal{U})$ incident on $u_\alpha$. Similarly, any point from $\gamma_{a,b}[\hat w, \hat b]$ is contained in a simplex in $N(\mathcal{U})$  
\begin{wrapfigure}{r}{0.55\textwidth}
	\begin{tabular}{cc}
		\includegraphics[height=2.5cm]{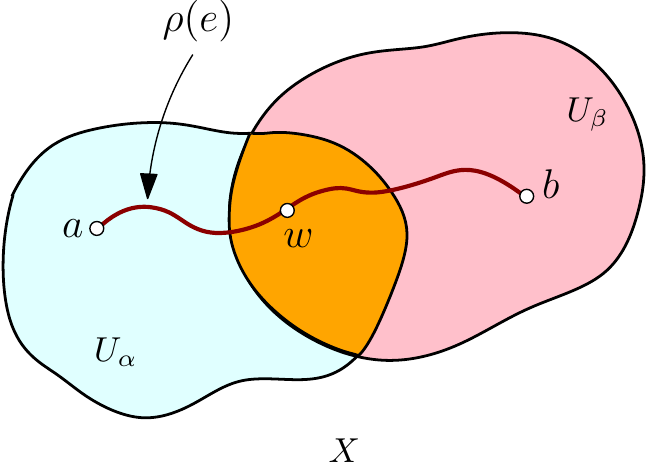} &\includegraphics[height=2.5cm]{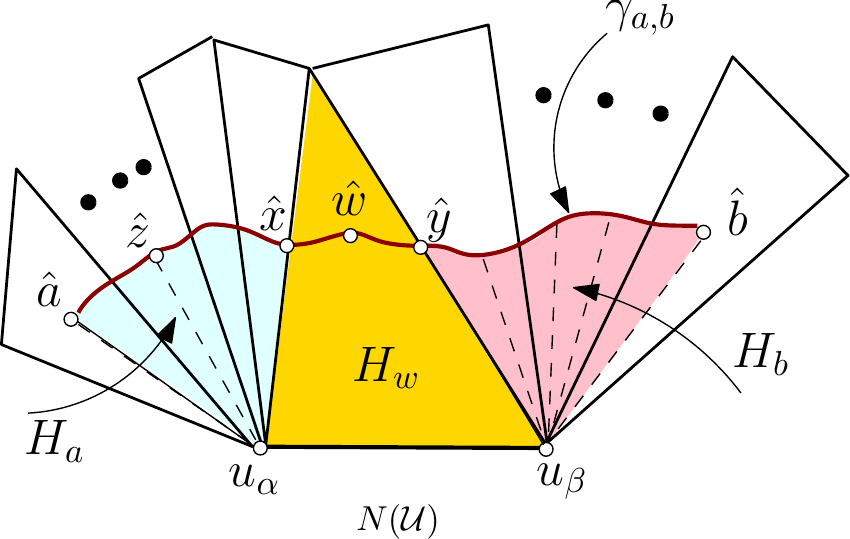}\\
		(a) & (b)
	\end{tabular}
	\caption{Illustration for proof of Proposition \ref{P1-mapper}. \label{fig:edgedeform}}
\end{wrapfigure}
incident on $u_\beta$.
These claims simply follow from the facts that $\rho(e)[a, w] \subset U_\alpha$ and $\rho(e)[w,b]\subset U_\beta$. 
Furthermore, let $\sigma_w \in N(\mathcal{U})$ be the lowest-dimensional simplex containing $\hat w$. Depending on the partition of unity that induces the map $\phi_{\mathcal{U}}: X \to |N(\mathcal{U})|$, it is possible that $u_\alpha$ and $u_\beta$ are {\bf not} vertices of $\sigma_w$. However, as $w$ is contained in each of the cover element from $\mathcal{U}$ corresponding to the vertices of $\sigma_w$, and $w \in U_\alpha \cap U_\beta$, it must be contained in the common intersection of all these cover elements; thus there must exist simplex $\bar{\sigma}_w \in N(\mathcal{U})$ spanned by $Vert(\sigma_w) \cup \{ u_\alpha, u_\beta\}$. 

To this end, let $\gamma_{a,b}[\hat x, \hat y]$ be the maximal subpath of $\gamma_{a,b}$ containing $\hat w$ that is contained within $|\bar{\sigma}_w|$. 
We assume that $\hat x \neq \hat y$ -- The case where $\hat x = \hat y$ can be handled by a perturbation argument which we omit here. 

Since the path $\gamma_{a,b}[\hat a, \hat x]$ is contained within a union of simplices all incident to the vertex $u_\alpha$, one can construct a homotopy $H_a$ that takes $\gamma_{a,b}[\hat a, \hat x]$ to $u_\alpha$ under which any point $\hat z \in \gamma_{a,b}[\hat a, \hat x]$ moves monotonically along the segment $\hat z u_\alpha$ within the geometric realization of the simplex containing both $\hat z$ and $u_\alpha$. 
See Figure \ref{fig:edgedeform} (b) where we draw a simple case for illustration. 
Similarly, there is a homotopy $H_b$ that takes $\gamma_{a,b}[\hat y, \hat b]$ to $u_\beta$ under which any point $\hat z \in \gamma_{a,b}[\hat y, \hat b]$ moves monotonically along the segment $\hat z u_\beta$. 
Finally, for the middle subpath $\gamma_{a,b}[\hat x, \hat y]$, since it is within simplex $|\bar\sigma_w|$ with $e = (u_\alpha, u_\beta)$ being an edge of it, we can construct homotopy $H_w$ that takes $\gamma_{a,b}[\hat x, \hat y]$ to $u_\alpha u_\beta$ under which $\hat x$ and $\hat y$ move monotonically along the segments $\hat x u_\alpha$ and $\hat y u_\beta$ within the geometric realization of simplex $\bar\sigma_w$, respectively. 
Concatenating $H_a$, $H_w$ and $H_b$, we obtain a homotopy $H_{\alpha,\beta}$ taking $\gamma_{a,b}$ to $|e|$. 
Therefore, a concatenation of these homotopies $H_{\alpha,\beta}$ considered over all edges in $\gamma$, brings $\phi_{\mathcal U}(\rho(\gamma))$ to $|\gamma|$ with a homotopy in $|N({\mathcal U})|$. Hence, their homology classes are the same.
\else
Now, let $w \in \rho(e)$ be a point in $U_\alpha \cap U_\beta$, and $\hat{w} = \phi_{\mathcal U} (w)$ be its image in $\gamma_{a,b}$. Furthermore, let $\sigma_w \in N(\mathcal{U})$ be the lowest-dimensional simplex containing $\hat w$. While $u_\alpha$ and $u_\beta$ may not be vertices of $\sigma_w$, we can show that $Vert(\sigma_w) \cup \{ u_\alpha, u_\beta\}$ must span a simplex $\bar{\sigma}_w$, in $N(\mathcal{U})$ (see full version). 
Let $\gamma_{a,b}[\hat x, \hat y]$ be the maximal subpath of $\gamma_{a,b}$ containing $\hat w$ that is contained within $|\bar{\sigma}_w|$. 
One can construct a homotopy $H_a$ that takes $\gamma_{a,b}[\hat a, \hat x]$ to $u_\alpha$ under
\begin{wrapfigure}{r}{0.55\textwidth}
	\vspace*{-0.05in}
	\begin{tabular}{cc}
		\includegraphics[height=2.5cm]{./space-cover1} &\includegraphics[height=2.5cm]{./space-cover2}\\
		(a) & (b)
	\end{tabular}
	\vspace*{-0.1in}
	\caption{Illustration for proof of Proposition \ref{P1-mapper}. \label{fig:edgedeform}}
	\vspace*{-0.15in}
\end{wrapfigure}
which any point $\hat z \in \gamma_{a,b}[\hat a, \hat x]$ moves monotonically along the segment $\hat z u_\alpha$ within the geometric realization of the simplex containing both $\hat z$ and $u_\alpha$. See the details in the full version.
Similarly, there is a homotopy $H_b$ that takes $\gamma_{a,b}[\hat y, \hat b]$ to $u_\beta$ under which any point $\hat z \in \gamma_{a,b}[\hat y, \hat b]$ moves monotonically along the segment $\hat z u_\beta$. 
Finally, for the middle subpath $\gamma_{a,b}[\hat x, \hat y]$, since it is within simplex $\bar\sigma_w$ with $e = (u_\alpha, u_\beta)$ being an edge of it, we can construct a homotopy $H_w$ that takes $\gamma_{a,b}[\hat x, \hat y]$ to $|u_\alpha u_\beta|$ under which $\hat x$ and $\hat y$ move monotonically along the segments $\hat x u_\alpha$ and $\hat y u_\beta$ within the geometric realization of simplex $\bar{\sigma}_w$, respectively. 
Concatenating $H_a$, $H_w$ and $H_b$, we obtain a homotopy $H_{\alpha,\beta}$ taking $\gamma_{a,b}$ to $|e|$. 
A concatenation of these homotopies $H_{\alpha,\beta}$ considered over all edges in $\gamma$, brings $\phi_{\mathcal U}(\rho(\gamma))$ to $|\gamma|$ with a homotopy in $|N({\mathcal U})|$. Hence, their homology classes are the same. 
\fi
\qed
\begin{proposition}
	Let $z$ be a $1$-cycle in ${\cal C}_1(X)$.
	Then, $[\phi_{\mathcal U}(z)]=0$ if $\leb({\mathcal U})>s(z)$.
	\label{P2-mapper}
\end{proposition}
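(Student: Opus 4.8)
\emph{Proof plan.} The plan is to use the hypothesis $\leb(\mathcal{U})>s(z)$ to place the entire support of the cycle $z$ inside a single cover element $U_\alpha$, and then to show that $\phi_{\mathcal U}$ carries all of $U_\alpha$ into the closed star of the vertex $u_\alpha$ in $|N(\mathcal{U})|$, which is contractible. A cycle supported in $U_\alpha$ is then pushed forward by $\phi_{\mathcal U}$ through a contractible subcomplex, so its image must be null-homologous in $|N(\mathcal{U})|$.

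First I would carry out the reduction. Write $|z|\subseteq X$ for the union of the images of the singular simplices comprising $z$; under the convention that the size of a chain is the diameter of its support, $s(|z|)=s(z)<\leb(\mathcal{U})$, so the definition of the Lebesgue number (a supremum, hence applicable at the value $s(z)$ itself) yields some $U_\alpha\in\mathcal{U}$ with $|z|\subseteq U_\alpha$ (if $z=0$ the statement is trivial, so assume $|z|\neq\emptyset$). Corestricting each singular simplex of $z$ to $U_\alpha$ produces a $1$-chain $z'$ in $\mathcal{C}_1(U_\alpha)$ with $(\iota_\alpha)_\#z'=z$, where $\iota_\alpha\colon U_\alpha\hookrightarrow X$ is the inclusion; since $(\iota_\alpha)_\#$ is injective on $\mathcal{C}_0$ and $(\iota_\alpha)_\#\partial z'=\partial z=0$, the chain $z'$ is a genuine $1$-cycle, with class $[z']\in H_1(U_\alpha)$. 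Because $\phi_{\mathcal U}\circ\iota_\alpha=\phi_{\mathcal U}|_{U_\alpha}$, naturality of the pushforward gives $[\phi_{\mathcal U}(z)]=(\phi_{\mathcal U}|_{U_\alpha})_*[z']$ in $H_1(|N(\mathcal{U})|)$, so it suffices to prove that $(\phi_{\mathcal U}|_{U_\alpha})_*$ vanishes on $H_1$.

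The heart of the argument is the claim that $\phi_{\mathcal U}(U_\alpha)\subseteq\overline{\mathrm{St}}(u_\alpha)$, the closed star of $u_\alpha$ in $|N(\mathcal{U})|$, i.e.\ the union of all closed simplices of $N(\mathcal{U})$ having $u_\alpha$ as a vertex. Fix $x\in U_\alpha$ and let $\sigma_x$ be the carrier of $\phi_{\mathcal U}(x)=\sum_\beta\varphi_\beta(x)\,u_\beta$, namely the simplex with vertex set $\{u_\beta:\varphi_\beta(x)>0\}$. Since the partition of unity is subordinate to $\mathcal{U}$, $\varphi_\beta(x)>0$ forces $x\in U_\beta$; together with $x\in U_\alpha$, this shows that every cover element indexed by $\{u_\alpha\}\cup\mathrm{Vert}(\sigma_x)$ contains $x$, so this index set spans a simplex $\bar\sigma_x$ of $N(\mathcal{U})$ that contains $u_\alpha$, whence $\phi_{\mathcal U}(x)\in|\sigma_x|\subseteq|\bar\sigma_x|\subseteq\overline{\mathrm{St}}(u_\alpha)$. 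Now $\overline{\mathrm{St}}(u_\alpha)$ is contractible: the straight-line homotopy $(p,t)\mapsto(1-t)p+t\,u_\alpha$, read off in barycentric coordinates, is well defined and continuous and keeps each point inside a closed simplex containing $u_\alpha$, deformation retracting $\overline{\mathrm{St}}(u_\alpha)$ onto $u_\alpha$. Hence $\phi_{\mathcal U}|_{U_\alpha}$ factors as $U_\alpha\to\overline{\mathrm{St}}(u_\alpha)\hookrightarrow|N(\mathcal{U})|$ through a space with $H_1=0$, so $(\phi_{\mathcal U}|_{U_\alpha})_*=0$ and therefore $[\phi_{\mathcal U}(z)]=0$.

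The main obstacle is the third step, and within it the two routine-but-delicate points: verifying that $\{u_\alpha\}\cup\mathrm{Vert}(\sigma_x)$ genuinely spans a simplex of $N(\mathcal{U})$ — this is exactly where the subordinacy of the partition of unity and the single-containment provided by the Lebesgue number are both needed (and it is the nerve condition, nonempty common intersection, that matters, not path-connectedness of $\mathcal{U}$) — and checking that the straight-line contraction of the closed star is well defined and continuous, which is standard for simplicial complexes but worth stating carefully since $|N(\mathcal{U})|$ may be infinite-dimensional. The reduction of the second step is purely formal; the only care needed there is the convention $s(z)=\mathrm{diam}(|z|)$ and the use of the strict inequality $s(z)<\leb(\mathcal{U})$ when invoking the definition of the Lebesgue number.
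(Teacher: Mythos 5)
Your proposal is correct and follows essentially the same route as the paper's proof: use the Lebesgue number to place $z$ in a single $U_\alpha$, observe via subordinacy of the partition of unity that $\phi_{\mathcal U}$ maps every point of $U_\alpha$ into a simplex having $u_\alpha$ as a vertex, and contract to $u_\alpha$. Your write-up merely makes explicit the details the paper leaves implicit (the corestriction of $z$ to $U_\alpha$, the identification of the target as the closed star, and its straight-line contractibility).
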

\ifpaper
\begin{proof}
	It follows from the definition of the Lebesgue number that there exists a cover element $U_\alpha\in {\mathcal U}$ so that $z\subseteq U_\alpha$ because $s(z)< \lambda(\mathcal U)$. We claim that there is a homotopy equivalence that sends $\phi_{\mathcal U}(z)$ to a vertex in $N(\mathcal U)$ and hence $[\phi_{\mathcal U}(z)]$ is trivial.
	
	Let $x$ be any point in $z$. Recall that $\phi_{\mathcal U}(x)=\Sigma_i \varphi_i(x)u_{\alpha_i}$. Since $U_\alpha$ has a common intersection with
	each $U_{\alpha_i}$ so that
	$\varphi_{\alpha_i}(x)\not=0$, we can conclude that $\phi_{\mathcal U}(x)$ is contained in a simplex with the vertex $u_\alpha$. Continuing this argument with all points of $z$, we observe that $\phi_{\mathcal U}(z)$ is contained in simplices that share the vertex $u_\alpha$. It follows that there is a homotopy that sends $\phi_{\mathcal U}(z)$ to
	$u_\alpha$, a vertex of $N(\mathcal U)$.
\end{proof}
\fi
\begin{proof}
	[Proof of Theorem~\ref{H1prop-mapper}]
	
	\noindent
	Proof of (i): 
	By Proposition~\ref{P2-mapper}, we have $\phi_{{\mathcal U}*}[z]=[\phi_{\mathcal U}(z)]=0$ if $\lambda(\mathcal U)> s(z)$. This establishes the first part of the assertion because
	$\bar{\phi}_{{\mathcal U}*}=\iota \circ \phi_{{\mathcal U}*}$ where $\iota$ is
	an isomorphism between the singular homology of $|N(\mathcal U)|$ and the simplicial homology of $N(U)$. To see the second part, notice that
	$\bar{\phi}_{{\mathcal U}*}$ is a surjection by Theorem~\ref{surj-map-thm}. Therefore, the classes
	$\bar{\phi}_{{\mathcal U}*}(z)$ where $\lambda(\mathcal U)\not> s(z)$ contain a basis
	for $H_1(N(\mathcal U))$. Hence they generate it.\\

	
	\noindent
	Proof of (ii): Suppose on the contrary, there is a subsequence
	$\{\ell_1,\ldots,\ell_t\}\subset \{\ell',\ldots,g\}$ such that
	$\Sigma_{j=1}^t [\phi_{\mathcal U}(z_{\ell_j})] =0$. Let $z=\Sigma_{j=1}^t \phi_{\mathcal U}(z_{\ell_j})$. Let $\gamma$ be a $1$-cycle in $N({\mathcal U})$ so that
	$[z]=[|\gamma|]$ whose existence is guaranteed by Proposition~\ref{embed-prop}.
	It must be the case that there is a $2$-chain
	$D$ in $N({\mathcal U})$ so that $\partial D=\gamma$.
	Consider a triangle $t=\{u_{\alpha_1},u_{\alpha_2},u_{\alpha_3}\}$ contributing to $D$. Let $a_i'=\phi_{\mathcal U}^{-1}(u_{\alpha_i})$. Since $t$ appears 
	in $N({\mathcal U})$, the covers $U_{\alpha_1}, U_{\alpha_2}, U_{\alpha_3}$ containing
	$a_1'$, $a_2'$, and $a_3'$ respectively have a common intersection in $X$. This also
	means that each of the paths $a_1'\leadsto a_2'$, $a_2'\leadsto a_3'$, $a_3'\leadsto a_1'$
	has size at most $2 s_{max}(\mathcal U)$. 
	Then, $\rho(\partial t)$ is mapped to a $1$-cycle in $X$
	of size at most $4s_{max}(\mathcal U)$. 
	It follows that $\rho(\partial D)$ can be written as a linear
	combination of cycles of size at most $4s_{max}(\mathcal U)$. Each of the $1$-cycles
	of size at most $4s_{max}(\mathcal U)$ is generated by basis elements $z_1,\ldots,z_k$
	where $s(z_k)\leq 4s_{max}(\mathcal U)$. Therefore, the class of $z'=\phi_{\mathcal U}(\rho(\gamma))$ is generated by a linear combination of
	the basis elements whose preimages have size at most $4s_{max}(\mathcal U)$. The class
	$[z']$ is same as the class $[|\gamma|]$ by Proposition~\ref{P1-mapper}. But, by assumption $[|\gamma|]=[z]$ is
	generated by a linear combination of the basis elements whose sizes
	are larger than $4s_{max}(\mathcal U)$ reaching a contradiction. 
\end{proof}

\subsection{$H_1$-classes in Reeb space}
In this section we prove an analogue of Theorem~\ref{H1prop-mapper} for Reeb spaces, which to our knowledge is new.
The Reeb space of a function $f:X\rightarrow Z$, denoted $R_f$, is the quotient of $X$ under the equivalence relation $x\sim_f x'$ if and only if $f(x)=f(x')$ and there exists a continuous path $\gamma\in\Gamma_X(x,x')$ such that $f\circ\gamma$ is constant. The induced quotient map  is denoted $q:X\rightarrow R_f$ which is of course surjective. We show that $q_*$ at the homology level is also surjective for $H_1$ when the codomain $Z$ of $f$ is a metric space. In fact, we prove a stronger statement: only `vertical' homology classes (classes with strictly positive size) survive in a Reeb space which extends the result of Dey and Wang~\cite{DW13} for Reeb graphs.

Let $\mathcal V$ be a path-connected cover of $R_f$. This induces a pullback cover denoted $\mathcal U = \{U_\alpha\}_{\alpha\in A}=\{q^{-1}(V_\alpha)\}_{\alpha\in A}$ on $X$. Let $N(\mathcal U)$ and $N(\mathcal V)$ denote the corresponding 
nerve complexes of $\mathcal{U}$ and $\mathcal{V}$ respectively. It is easy to see that $N(\mathcal U)=N(\mathcal V)$ because $U_\alpha\cap U_{\alpha'}\not=\emptyset$ if and only if $V_\alpha\cap V_{\alpha'}\not=\emptyset$. There are nerve maps $\phi_{\mathcal V}: R_f\rightarrow |N(\mathcal V)|$ and $\phi_{\mathcal U}: X\rightarrow |N(\mathcal U)|$ so that the following holds:
\begin{proposition}
	Consider the sequence $X\stackrel{q}{\rightarrow} R_f(X)\stackrel{\phi_{\mathcal V}}{\rightarrow} |N(\mathcal V)|=|N(\mathcal U)|$. Then, $\phi_{\mathcal U}=\phi_{\mathcal V}\circ q$.
	\label{Reeb-commut-prop}
\end{proposition}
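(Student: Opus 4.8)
The plan is to trace through the explicit formula for the nerve maps given in equation~\eqref{eq:phiU} and check that the two sides agree pointwise on $X$. Recall that $\mathcal{U}=\{q^{-1}(V_\alpha)\}_{\alpha\in A}$ is the pullback of $\mathcal{V}=\{V_\alpha\}_{\alpha\in A}$ under the quotient map $q:X\to R_f$, so the index set $A$ is shared and, as noted before the statement, $N(\mathcal{U})=N(\mathcal{V})$ with corresponding vertices $u_\alpha\leftrightarrow v_\alpha$. First I would fix a partition of unity $\{\psi_\alpha\}_{\alpha\in A}$ on $R_f$ subordinate to $\mathcal{V}$, and then observe that the pullbacks $\varphi_\alpha:=\psi_\alpha\circ q$ form a partition of unity on $X$ subordinate to $\mathcal{U}$: indeed $\sum_\alpha \varphi_\alpha(x)=\sum_\alpha\psi_\alpha(q(x))=1$, local finiteness is inherited since $q$ is continuous and surjective, and $\mathrm{supp}(\varphi_\alpha)=\mathrm{supp}(\psi_\alpha\circ q)\subseteq q^{-1}(\mathrm{supp}(\psi_\alpha))\subseteq q^{-1}(V_\alpha)=U_\alpha$. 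So we may build $\phi_{\mathcal{U}}$ from $\{\varphi_\alpha\}$ and $\phi_{\mathcal{V}}$ from $\{\psi_\alpha\}$, a compatible choice.

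With this choice in place, the computation is immediate: for any $x\in X$,
\begin{align*}
\phi_{\mathcal{V}}\big(q(x)\big)=\sum_{\alpha\in A}\psi_\alpha\big(q(x)\big)\,v_\alpha=\sum_{\alpha\in A}\varphi_\alpha(x)\,u_\alpha=\phi_{\mathcal{U}}(x),
\end{align*}
using $\eqref{eq:phiU}$ for $\phi_{\mathcal{V}}$ in the first equality, the identification $v_\alpha\leftrightarrow u_\alpha$ under $N(\mathcal{V})=N(\mathcal{U})$, and $\eqref{eq:phiU}$ for $\phi_{\mathcal{U}}$ in the last. Hence $\phi_{\mathcal{U}}=\phi_{\mathcal{V}}\circ q$.

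The one point that deserves care — and the only place where anything could go wrong — is the claim that $\{\varphi_\alpha\}$ is a \emph{legitimate} partition of unity subordinate to $\mathcal{U}$, i.e.\ that the coherence of partition-of-unity choices between $R_f$ and $X$ is admissible; this is the analogue here of Proposition~\ref{prop:coherent}, except pulled back along a surjection rather than pushed along a cover map. Since $X$ is compact (hence paracompact) this is routine: local finiteness of $\{\mathrm{supp}(\psi_\alpha\circ q)\}$ follows from local finiteness of $\{\mathrm{supp}(\psi_\alpha)\}$ on $R_f$ together with continuity of $q$, because a neighbourhood of $q(x)$ meeting only finitely many $\mathrm{supp}(\psi_\alpha)$ pulls back to a neighbourhood of $x$ meeting only finitely many $\mathrm{supp}(\varphi_\alpha)$. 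Once this is settled, nothing else is needed; the homological consequence $\phi_{\mathcal{U}*}=\phi_{\mathcal{V}*}\circ q_*$ (and its simplicial counterpart $\bar{\phi}_{\mathcal{U}*}=\bar{\phi}_{\mathcal{V}*}\circ q_*$, via the isomorphism $\iota$) follows by functoriality and will be used downstream to transfer Theorem~\ref{surj-map-thm} and Theorem~\ref{H1prop-mapper} to the Reeb space.
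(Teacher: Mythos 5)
Your proof is correct and follows essentially the same route as the paper: pull back a partition of unity on $R_f$ subordinate to $\mathcal{V}$ along $q$ to obtain a compatible partition of unity on $X$ subordinate to $\mathcal{U}$, then compare the two nerve maps termwise via the identification $N(\mathcal{U})=N(\mathcal{V})$. Your verification that the pulled-back family is genuinely subordinate and locally finite is a detail the paper merely asserts, so nothing is missing.
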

\ifpaper
\begin{proof}
	Consider a partition of unity $\{\varphi_\alpha\}_{\alpha\in A}$ subordinate to $\mathcal V= \{V_\alpha\}_{\alpha\in A}$. Without loss of generality,
	one can assume $\mathcal V$ to be locally finite because $X$ is paracompact. Then, consider the partition of unity subordinate to $\mathcal U=\{U_{\alpha}\}_{\alpha\in A}$ given by
	$\varphi'_{\alpha}(q^{-1}(x))=\varphi_\alpha(x)$. Let $\phi_{\mathcal V}$ and $\phi_{\mathcal U}$ be the nerve maps corresponding to the partition of unity of $\varphi_\alpha$ and $\varphi'_{\alpha}$ respectively. Then, $\phi_{\mathcal U}(x)=\phi_{\mathcal V}(q(x))$ proving the claim.
\end{proof}
\fi

Let the codomain of the function $f: X\rightarrow Z$ be a {\em metric space} $(Z,d_Z)$. We first impose a pseudometric on $X$ induced by $f$; the one-dimensional version of this pseudometric is similar to the one used in \cite{BGW14} for Reeb graphs. 
Recall that given two points $x,x'\in X$ we denote by $\Gamma_X(x,x')$ the set of all continuous paths $\gamma:[0,1]\rightarrow X$ such that $\gamma(0)=x$ and $\gamma(1)=x'.$

\begin{definition}
	We define a pseudometric $d_f$ on $X$ as follows: for $x,x'\in X$, 
	$$d_f(x,x'):=\inf_{\gamma\in \Gamma_X(x,x')}\diam_Z(f\circ\gamma).$$
\end{definition}
\begin{proposition}
	$d_f:X\times X\rightarrow \mathbb{R}_+$ is a pseudometric.
\end{proposition}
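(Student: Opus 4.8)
The plan is to verify the three pseudometric axioms directly from the definition $d_f(x,x') = \inf_{\gamma \in \Gamma_X(x,x')} \diam_Z(f\circ\gamma)$, with the triangle inequality being the only step requiring genuine care. First, \emph{non-negativity and finiteness}: since $(Z,d_Z)$ is a metric space, $\diam_Z(f\circ\gamma) \ge 0$ for every path $\gamma$, so $d_f \ge 0$; and since $X$ is path-connected, $\Gamma_X(x,x')$ is non-empty, while compactness of $X$ together with continuity of $f$ forces $f(X)$ to have finite diameter, so the infimum is taken over a non-empty set of finite values and hence $d_f(x,x') < \infty$. Second, \emph{$d_f(x,x)=0$ and symmetry}: the constant path at $x$ lies in $\Gamma_X(x,x)$ and its image under $f$ is a single point, so $\diam_Z$ of it is $0$, giving $d_f(x,x) = 0$; symmetry follows because reversing a path $\gamma \in \Gamma_X(x,x')$ (i.e.\ $t \mapsto \gamma(1-t)$) yields a path in $\Gamma_X(x',x)$ with the same image, hence the same diameter, so the two infima coincide.

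The substantive step is the \emph{triangle inequality}: $d_f(x,x'') \le d_f(x,x') + d_f(x',x'')$. Given $\eps > 0$, choose $\gamma_1 \in \Gamma_X(x,x')$ with $\diam_Z(f\circ\gamma_1) \le d_f(x,x') + \eps$ and $\gamma_2 \in \Gamma_X(x',x'')$ with $\diam_Z(f\circ\gamma_2) \le d_f(x',x'') + \eps$. Let $\gamma = \gamma_1 * \gamma_2$ be the concatenation, which lies in $\Gamma_X(x,x'')$. The image $f(\gamma([0,1])) = f(\gamma_1([0,1])) \cup f(\gamma_2([0,1]))$, and these two sets share the common point $f(x') = f(\gamma_1(1)) = f(\gamma_2(0))$. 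For any two points $p \in f(\gamma_1([0,1]))$ and $p' \in f(\gamma_2([0,1]))$ we have $d_Z(p,p') \le d_Z(p, f(x')) + d_Z(f(x'), p') \le \diam_Z(f\circ\gamma_1) + \diam_Z(f\circ\gamma_2)$, and for two points within the same piece the bound is immediate; hence $\diam_Z(f\circ\gamma) \le \diam_Z(f\circ\gamma_1) + \diam_Z(f\circ\gamma_2) \le d_f(x,x') + d_f(x',x'') + 2\eps$. Taking the infimum over $\gamma \in \Gamma_X(x,x'')$ and then letting $\eps \to 0$ yields the claim.

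I expect the main obstacle, such as it is, to be purely bookkeeping: making sure the $\eps$-approximating paths exist (which is exactly the definition of infimum) and that the diameter of a union of two overlapping sets is bounded by the sum of the diameters — a one-line application of the triangle inequality in $Z$ through the shared point $f(x')$. There is no need to establish that $d_f(x,x')=0$ only when $x=x'$; indeed it generally fails (any two points in the same fiber of $f$ joined by a path along which $f$ is constant have $d_f$-distance $0$), which is precisely why $d_f$ is only a \emph{pseudo}metric, consistent with the role it plays in the subsequent Reeb-space analysis.
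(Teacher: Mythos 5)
Your proposal is correct and follows essentially the same route as the paper: it checks the easy axioms directly and proves the triangle inequality by concatenating $\eps$-approximating paths and bounding the diameter of the union of two overlapping image sets by the sum of their diameters (a claim the paper states but leaves unproved, and which you verify via the shared point $f(x')$). No issues.
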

\ifpaper
\begin{proof}
	Symmetry, non-negativity,  and the fact that $d_f(x,x)=0$ for all $x\in X$ are evident. We prove the triangle inequality. We will use the following claim whose proof we omit.
	\begin{claim}\label{claim:diam}
		For all $A,B \subseteq Z$ with $A\cap B\neq \emptyset$ we have $\diam_Z(A\cup B)\leq \diam_Z(A) + \diam_Z(B).$
	\end{claim}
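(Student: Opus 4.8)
The plan is to prove the Claim by a direct case analysis on which of $A$ and $B$ contains each endpoint of a pair of points in $A\cup B$. Since $A\cap B\neq\emptyset$, both $A$ and $B$ are nonempty, so $\diam_Z(A)$ and $\diam_Z(B)$ are well-defined elements of $[0,+\infty]$; if either is $+\infty$ the inequality is trivial, so I may assume both are finite. Fix once and for all a bridge point $z_0\in A\cap B$.

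Now take an arbitrary pair $x,y\in A\cup B$ and bound $d_Z(x,y)$ by $\diam_Z(A)+\diam_Z(B)$. If $x,y$ both lie in $A$, then $d_Z(x,y)\leq \diam_Z(A)\leq \diam_Z(A)+\diam_Z(B)$ because $\diam_Z(B)\geq 0$; the case $x,y\in B$ is symmetric. In the remaining (mixed) case, say $x\in A$ and $y\in B$ — the other mixed case being identical — the triangle inequality for $d_Z$ routed through $z_0$ gives $d_Z(x,y)\leq d_Z(x,z_0)+d_Z(z_0,y)\leq \diam_Z(A)+\diam_Z(B)$, using $x,z_0\in A$ for the first term and $z_0,y\in B$ for the second. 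Taking the supremum over all $x,y\in A\cup B$ then yields $\diam_Z(A\cup B)\leq \diam_Z(A)+\diam_Z(B)$.

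With the Claim in hand, finishing the triangle inequality for $d_f$ is immediate: given $x,x',x''\in X$ and $\eps>0$, pick $\gamma\in\Gamma_X(x,x')$ and $\gamma'\in\Gamma_X(x',x'')$ with $\diam_Z(f\circ\gamma)\leq d_f(x,x')+\eps$ and $\diam_Z(f\circ\gamma')\leq d_f(x',x'')+\eps$. The concatenation $\gamma\cdot\gamma'$ lies in $\Gamma_X(x,x'')$, its image under $f$ is the union of the images of $f\circ\gamma$ and $f\circ\gamma'$, and these two image sets share the point $f(x')$; hence the Claim gives $\diam_Z\!\big(f\circ(\gamma\cdot\gamma')\big)\leq \diam_Z(f\circ\gamma)+\diam_Z(f\circ\gamma')\leq d_f(x,x')+d_f(x',x'')+2\eps$, and letting $\eps\to 0$ shows $d_f(x,x'')\leq d_f(x,x')+d_f(x',x'')$. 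I do not anticipate a genuine obstacle; the only points that need a word of care are that $A\cap B\neq\emptyset$ is what supplies the bridge point $z_0$ and guarantees nonnegativity of both diameters, and — in the $d_f$ application — that $\diam_Z$ of a path means the diameter of its image, so that $\diam_Z\big(f\circ(\gamma\cdot\gamma')\big)$ is honestly the diameter of a union of two sets with a common point.
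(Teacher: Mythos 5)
Your proof is correct: the bridge-point argument through $z_0\in A\cap B$ with the three-way case analysis is exactly the standard verification, and the paper itself omits the proof of this claim ("whose proof we omit"), so your argument supplies precisely what was left implicit. The subsequent application to the triangle inequality for $d_f$ also matches the paper's own use of the claim.
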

	
	Assume $x,x',x''\in X$ are such that $a = d_f(x,x')$ and $a'=d_f(x',x'')$. Fix any $\varepsilon>0$. Choose $\gamma \in \Gamma_X(x,x')$ and $\gamma'\in \Gamma_X(x',x'')$ such that $\diam_Z(f\circ\gamma)<a+\frac{\varepsilon}{2}$ and  $\diam_Z(f\circ\gamma')<a'+\frac{\varepsilon}{2}$. Now consider the curve $\gamma'':[0,1]\rightarrow X$ defined by concatenating $\gamma$ and $\gamma'$ so that $\gamma''\in \Gamma_X(x,x'').$ Then, by the above claim, we have
	$$d_f(x,x'')\leq \diam_Z(f\circ\gamma'') = \diam_Z(\{f\circ\gamma\}\cup\{f\circ\gamma'\})\leq a+a' +\varepsilon.$$
	The claim is obtained by letting $\varepsilon\rightarrow 0.$
	
\end{proof}
\fi
Similar to $X$, 
we endow $R_f$ with a distance $\tilde{d}_f$ that descends via the map $q$: for any equivalence classes $r,r'\in R_f$, pick $x,x'\in X$ with $r =q(x)$ and $r'=q(x')$, then define
$$\tilde{d}_f(r,r'):=d_f(x,x').$$
The definition does not depend on the representatives $x$ and $x'$ chosen. In this manner we obtain the pseudometric space $(R_f,\tilde{d}_f).$
Let $z_1,\ldots,z_g$ be a minimal generator basis of $H_1(X)$ defined with respect to the pseudometric $d_f$ and $q: X\rightarrow R_f$ be the quotient map.

\begin{theorem}
	Let $\ell\in[1,g]$ be the smallest integer so that $s(z_\ell)\not=0$. If no such $\ell$ exists,
	$H_1(R_f)$ is trivial, otherwise,  $\{[q(z_i)]\}_{i=\ell,\ldots g}$ is a basis for $H_1(R_f)$.
	\label{RS-thm}
\end{theorem}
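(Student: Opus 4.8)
The plan is to reduce this theorem to the already-established machinery for nerves, using the fact that $R_f$ is itself a compact pseudometric space $(R_f,\tilde d_f)$ and that $q:X\to R_f$ is a quotient map which is an isometry in the appropriate sense. First I would observe that it suffices to work through nerves: take a sequence of path-connected covers $\mathcal V_k$ of $R_f$ whose Lebesgue numbers $\lambda(\mathcal V_k)\to 0$ (such covers exist by compactness of $R_f$), pull each back to $\mathcal U_k=\{q^{-1}(V_\alpha)\}$ on $X$, and recall from the discussion preceding the theorem that $N(\mathcal U_k)=N(\mathcal V_k)$ and, by Proposition~\ref{Reeb-commut-prop}, that $\phi_{\mathcal U_k}=\phi_{\mathcal V_k}\circ q$. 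Passing to homology, $\bar\phi_{\mathcal U_k*}=\bar\phi_{\mathcal V_k*}\circ q_*$. Since $\bar\phi_{\mathcal U_k*}$ is surjective onto $H_1(N(\mathcal U_k))$ (Theorem~\ref{surj-map-thm}) and factors through $q_*$, we get that $\bar\phi_{\mathcal V_k*}\circ q_*$ is surjective; combined with the fact that $q_*$ is itself surjective (which I would prove separately — see below), this gives compatibility of the two filtrations.

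The core of the argument is the following. I would first show $q_*:H_1(X)\to H_1(R_f)$ is \emph{surjective}. This can be done exactly as in Proposition~\ref{phi-surject-prop}: a loop in $R_f$ can be lifted piecewise through the quotient map using path-connectedness of $X$ and of the fibers' "vertical" structure, or more cleanly by noting that for a sufficiently fine cover $\mathcal V$ the composite $H_1(R_f)\to H_1(N(\mathcal V))$ is injective on the relevant classes while $\bar\phi_{\mathcal U*}=\bar\phi_{\mathcal V*}\circ q_*$ surjects onto $H_1(N(\mathcal V))$, forcing $q_*$ to hit everything that survives — then take the limit. Next, the key size estimate: if $z$ is a $1$-cycle in $X$ with $s(z)=0$ under $d_f$, then $q(z)$ is null-homologous in $R_f$. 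Indeed $s(z)=0$ means every pair of points on $z$ is joined by a path whose $f$-image has arbitrarily small diameter; intuitively $z$ lies in a single fiber-like region, so $q(z)$ collapses. Formally I would argue: $\tilde d_f$-diameter of $q(z)$ is $0$, so for each sufficiently fine $\mathcal V_k$, $q(z)$ is contained in a single cover element $V_\alpha$, hence by Proposition~\ref{P2-mapper} applied to $(R_f,\tilde d_f)$ its image in $N(\mathcal V_k)$ is trivial; and since $\phi_{\mathcal V_k*}$ separates nontrivial classes of $H_1(R_f)$ in the limit (as $\lambda(\mathcal V_k)\to 0$, a nontrivial class of positive size eventually survives by Theorem~\ref{H1prop-mapper}(i) applied to $R_f$), $[q(z)]$ must already be $0$ in $H_1(R_f)$.

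Putting these together: order a minimal generator basis $z_1,\dots,z_g$ of $H_1(X)$ by nondecreasing $d_f$-size and let $\ell$ be the least index with $s(z_\ell)\neq 0$. By the previous paragraph, $[q(z_i)]=0$ for $i<\ell$. Since $q_*$ is surjective and kills $[z_1],\dots,[z_{\ell-1}]$, the classes $\{[q(z_i)]\}_{i=\ell}^g$ span $H_1(R_f)$. For linear independence I would apply Theorem~\ref{H1prop-mapper}(i) (and the minimality of the basis) on $R_f$ itself: for a fine enough cover $\mathcal V_k$ of $R_f$, every class of positive $\tilde d_f$-size survives in $N(\mathcal V_k)$ with its generators remaining independent, and pulling back, any dependence among $\{[q(z_i)]\}_{i\geq\ell}$ would descend to a dependence among surviving images of $z_i$'s in $N(\mathcal U_k)=N(\mathcal V_k)$, contradicting that the $z_i$ form a minimal generator basis once $k$ is large enough that $\lambda(\mathcal V_k)<\min_{i\geq\ell}s(z_i)$ — here one uses that the $\tilde d_f$-size of $q(z_i)$ equals the $d_f$-size of $z_i$ because $q$ is a quotient that induces $\tilde d_f$ from $d_f$. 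If no $\ell$ exists, every generator has size $0$, so $q_*$ kills all of them and $H_1(R_f)=0$.

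\textbf{Main obstacle.} The delicate point is the surjectivity of $q_*$ together with the claim that a positive-size class of $H_1(R_f)$ genuinely survives into some nerve $N(\mathcal V_k)$ — i.e.\ that Theorem~\ref{H1prop-mapper} can be legitimately invoked \emph{on $(R_f,\tilde d_f)$}. This requires checking that $R_f$ is compact (so it is paracompact and has Lebesgue numbers), that $\tilde d_f$ is an honest pseudometric on $R_f$ inducing the quotient topology adequately, and that the minimal generator basis behaves well under $q$. Establishing that $q_*$ is surjective in $H_1$ without already knowing the nerve comparison — or bootstrapping it cleanly from the nerve comparison without circularity — is the step I would expect to need the most care, likely mirroring the explicit loop-lifting construction in the proof of Proposition~\ref{phi-surject-prop} adapted to the quotient map $q$ and its vertical fibers.
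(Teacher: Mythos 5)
Your proposal follows essentially the same route as the paper's proof: factor $\phi_{\mathcal U}=\phi_{\mathcal V}\circ q$ via Proposition~\ref{Reeb-commut-prop}, deduce surjectivity of $q_*$ from the surjectivity of $\phi_{\mathcal U*}$ together with injectivity of $\phi_{\mathcal V*}$ for a sufficiently fine cover of $(R_f,\tilde d_f)$ (resting on the observation that every nontrivial class of $H_1(R_f)$ has positive $\tilde d_f$-size), kill the size-zero generators, and obtain independence of the remaining ones through the nerve. One correction: the independence step must invoke Theorem~\ref{H1prop-mapper}(ii) with the condition $s(z_i)>4 s_{\mathrm{max}}(\mathcal U)$ --- i.e.\ choose the cover so that $s_{\mathrm{max}}(\mathcal U)\le\frac14\min\{s(z_i)\mid s(z_i)\neq 0\}$ --- rather than part (i) with a bound on the Lebesgue number, since part (i) only yields that the surviving classes generate, not that they are linearly independent.
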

\begin{proof}

	Consider the sequence $X\stackrel{q}{\rightarrow}R_f\stackrel{\phi_{\mathcal V}}{\rightarrow} |N(\mathcal V)|$ where $\mathcal V$ is a cover of $R_f$.
	\ifpaper
	\begin{claim} 
		$q_*$ is a surjection.
	\end{claim}
	
	\begin{proof}
		Let $z_1',z_2',\ldots,z_{g'}'$ be a minimal generator basis of $H_1(R_f)$ of the metric space $(R_f,\tilde{d}_f)$. Observe that
		$s(z_i')\not=0$ for any $i\in[1,g']$ because otherwise we have a $z_j'$ for some $j\in [1,g']$ whose any two distinct points $x,x'\in z_j'$ satisfy 
		$f(q^{-1}(x))=f(q^{-1}(x'))$ and $q^{-1}(x)$ and $q^{-1}(x')$ are path connected in $X$. This is impossible by the definition of $R_f$.	
		
		Without loss of generality,
		assume that	$\mathcal V$ is fine enough so that it
		satisfies $0<s_{\mathrm{max}}(\mathcal V)\leq \delta$ where $\delta=\frac{1}{4}\min\{s(z_i')\}$. Since $\delta>0$ due to the observation in the previous paragraph, such a cover exists.
		Then, by applying Theorem~\ref{H1prop-mapper}(ii), we obtain that 
		$[\phi_{\mathcal V}(z_i')]_{i=1,\ldots,g'}$ are linearly independent in
		$H_1(|N(\mathcal V)|$. It follows that $\phi_{{\mathcal V}*}$ is injective. It is surjective too by Proposition~\ref{phi-surject-prop}. Therefore, $\phi_{{\mathcal V}*}$ is an isomorphism.
		
		Let $\mathcal U$ be the pullback cover of $\mathcal V$. Then, we have $\phi_{{\mathcal U}*}=\phi_{{\mathcal V}*}\circ q_*$ (Proposition~\ref{Reeb-commut-prop}) where $\phi_{{\mathcal U}*}$ is a surjection and $\phi_{{\mathcal V}*}$ is an ismorphism. It follows that $q_*$ is a surjection.
	\end{proof}
	\else
	It is shown in the full version that $q_*$ is a surjection for $H_1$-homology.
	\fi
	
	By the previous claim, $\{[q(z_i)]\}_{i=1,\ldots,g}$ generate $H_1(R_f)$. First, assume that $\ell$ as stated in the theorem exists.
	Let the cover 
	$\mathcal V$ be fine enough so that 
	$0<s_{\mathrm{max}}(\mathcal U)\leq \delta$ where $\delta=\frac{1}{4}\min\{s(z_i)\,|\, s(z_i)\not=0\}$. 
	Then, by applying Theorem~\ref{H1prop-mapper}(ii), we obtain that 
	$[\phi_{\mathcal U}(z_i)]_{i=\ell,\ldots,g}$ are linearly independent in
	$H_1(|N(\mathcal U)|)=H_1(|N(\mathcal V)|$. Since $[\phi_{\mathcal U}(z_i)]=[\phi_{\mathcal V}\circ q(z_i)]$ by Proposition~\ref{Reeb-commut-prop}, $\{[q(z_i)]\}_{i=\ell,\ldots,g}$ are linearly independent in $H_1(R_f)$. But, $[q(z_i)]=0$ for $s(z_i)=0$ and $\{[q(z_i)]\}_{i=1,\ldots,g}$ generate $H_1(R_f)$.
	Therefore, $\{[q(z_i)]\}_{i=\ell,\ldots,g}$ is a basis.
	In the case when $\ell$ does not exist, we have $s(z_i)=0$ for every $i\in[1,g]$. Then, $[q(z_i)]=0$ for every $i$ rendering $H_1(R_f)$ trivial.
\end{proof}   

\subsection{Persistence of $H_1$-classes in mapper and multiscale mapper}

To apply the results for nerves in section~\ref{H1size-sec} to mappers and multiscale mappers, the Lebesgue number of the pullback covers of $X$ becomes important. The following observation in this respect is useful. Remember that the size of a subset in $X$ and hence the cover elements are measured with respect to the pseudometric $d_f$.

\begin{proposition}
	Let $\mathcal U$ be a cover for the codomain $Z$. Then, the pullback cover $f^*{\mathcal U}$ has Lebesgue number $\leb({\mathcal U})$.
\end{proposition}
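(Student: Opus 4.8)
The plan is to unpack the definition of Lebesgue number on $X$ with respect to the pseudometric $d_f$ and relate it to the Lebesgue number of $\mathcal{U}$ measured in $(Z,d_Z)$. The key link is the earlier definition $d_f(x,x') = \inf_{\gamma\in\Gamma_X(x,x')}\diam_Z(f\circ\gamma)$, which immediately gives that for any path-connected subset $X'\subseteq X$ we have $\diam_Z(f(X')) \leq s(X')$ (the diameter of $X'$ in $d_f$). Indeed, for any two points $x,x'\in X'$, since $X'$ is path-connected there is a path $\gamma$ from $x$ to $x'$ lying inside $X'$, so $d_f(x,x')\leq \diam_Z(f\circ\gamma)\leq \diam_Z(f(X'))$; taking the supremum over $x,x'$ shows $\diam_Z(f(X'))\le s(X')$ is the wrong direction, so I actually want the reverse: $d_Z(f(x),f(x'))\leq d_f(x,x')$ always (take any path and note $d_Z(f(x),f(x'))\leq \diam_Z(f\circ\gamma)$), hence $\diam_Z(f(X'))\leq s(X')$ for \emph{any} $X'$, path-connected or not. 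That is the inequality I need.

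With this in hand I would argue both inequalities between $\leb(f^*\mathcal{U})$ and $\leb(\mathcal{U})$. For $\leb(f^*\mathcal{U})\geq \leb(\mathcal{U})$: let $\delta < \leb(\mathcal{U})$ and let $X'\subseteq X$ be path-connected with $s(X')\leq \delta$ (recall from the remark following the definition of Lebesgue number that we may assume $X'$ is path-connected). Then $\diam_Z(f(X'))\leq s(X')\leq\delta < \leb(\mathcal{U})$, so there is some $U_\alpha\in\mathcal{U}$ with $f(X')\subseteq U_\alpha$, whence $X'\subseteq f^{-1}(U_\alpha)$. Since $X'$ is path-connected, it lies inside a single path-connected component $V_{\alpha,i}$ of $f^{-1}(U_\alpha)$, which is an element of $f^*\mathcal{U}$. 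Taking the supremum over such $\delta$ gives $\leb(f^*\mathcal{U})\geq\leb(\mathcal{U})$.

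For the reverse inequality $\leb(f^*\mathcal{U})\leq\leb(\mathcal{U})$, I would take $\delta < \leb(f^*\mathcal{U})$ and any $Z'\subseteq Z$ with $\diam_Z(Z')\leq\delta$; I want to produce a $U_\alpha\in\mathcal{U}$ containing $Z'$. The natural move is to pull back: pick any point $x\in f^{-1}(Z')$ — but here lies the main obstacle, since $f^{-1}(Z')$ need not be path-connected, need not have size $\leq\delta$ in $d_f$ (points with the same image can be far apart in $d_f$ if no constant-diameter path joins them), and $Z'$ need not even meet the image of $f$. The cleanest fix is to observe that the definition of Lebesgue number is monotone and is really only constrained by small subsets; one reduces to $Z'$ of the form $f(X')$ for small path-connected $X'$, because if $f$ is surjective (or after restricting attention to $Z = f(X)$, which is what matters for the pullback cover) every small enough $Z'$ is covered by such an argument. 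Concretely, I would show that if every path-connected $X'\subseteq X$ with $s(X')\leq\delta$ lies in some element of $f^*\mathcal{U}$, and hence in some $f^{-1}(U_\alpha)$, then taking a sufficiently fine $X'$ and noting $f(X')$ can be made to have diameter close to $s(X')$ forces $\delta\leq\leb(\mathcal{U})$; the subtle point to get right is the quantifier interplay between diameters in $Z$ and in $d_f$, which I expect is handled by a direct argument that $\diam_Z(f(X'))$ can equal $s(X')$ for suitable $X'$ (e.g.\ a single constant-image-free arc), rather than by any deep topology. I would write this out carefully, as it is the only place where the two notions of size genuinely diverge.
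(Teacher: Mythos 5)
Your proof of the inequality $\leb(f^*{\mathcal U})\ge\leb({\mathcal U})$ is correct and is exactly the paper's argument: from $d_Z(f(x),f(x'))\le d_f(x,x')$ you get $\diam_Z(f(X'))\le s(X')$, so a path-connected $X'$ of size below the Lebesgue number of $\mathcal U$ sits inside $f^{-1}(U_\alpha)$ for some $\alpha$, hence inside a single path-connected component of it, which is an element of $f^*\mathcal U$. This is also the direction that actually gets used later in the paper (it is what guarantees that small $H_1$-classes die in the mapper).

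For the reverse inequality your proposal stalls, and the plan you sketch --- pushing small subsets $Z'\subseteq Z$ back through $f$ and trying to realize them as $f(X')$ with $\diam_Z(f(X'))$ equal to $s(X')$ --- does not go through: such an $X'$ need not exist, $f$ need not be surjective, and $\diam_Z(f(X'))$ is in general strictly smaller than $s(X')$. The paper's route is different and never transports subsets of $Z$ back to $X$: it observes that every element $V$ of the pullback cover, being a path-connected component of some $f^{-1}(U)$, satisfies $s(V)\le \diam_Z(U)$ in the pseudometric $d_f$, because any two points of $V$ are joined by a path inside $V$ whose image under $f$ lies in $U$; from this upper bound on the sizes of the pullback cover elements it concludes $\leb(f^*{\mathcal U})\le\leb({\mathcal U})$. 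So the missing idea is to bound the sizes of the elements of $f^*\mathcal U$ from above, not to produce, for each small $Z'\subseteq Z$, a witness in $X$. (Your instinct that this direction is delicate is sound --- the paper's own concluding step here is quite terse, and the equality genuinely relies on $f$ hitting enough of $Z$ --- but as written your proposal establishes only one of the two inequalities.)
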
 
\ifpaper
\begin{proof}
	Let $X'\subseteq X$ be any path-connected subset where $s(X')\leq \lambda(\mathcal U)$. Then, $f(X')\subseteq Z$ has a diameter at most $\lambda(\mathcal U)$ by the definition of size.
	Therefore, by the definition of Lebesgue number, $f(X')$ is contained in a cover element $U\in{\mathcal U}$. Clearly, a path connected component of $f^{-1}(U)$ contains $X'$ since $f$ is assumed to be continuous. It follows that there is a cover element in $f^\ast{\mathcal U}$ that contains $X'$. Since $X'$ was chosen as an arbitrary subset of size at most 
	$\lambda(\mathcal U)$, we have $\lambda(f^*{\mathcal U})\geq \lambda(\mathcal U)$.
	At the same time, it is straightforward from the definition of size that each cover element in $f^{-1}(U)$ has at most the size of $U$ for any $U\in{\mathcal U}$. Therefore, $\lambda(f^*{\mathcal U})\leq \lambda(\mathcal U)$ establishing the equality as claimed.
\end{proof} 
\fi

Notice that the smallest size $s_{min}(f^\ast{\mathcal U})$ of an element of the pullback cover can be arbitrarily small even if $s_{min}({\mathcal U})$ is not. However, the Lebesgue number of $\mathcal U$ can be leveraged for the mapper due to the above Proposition.

Given a cover $\mathcal U$ of $Z$, consider the mapper $N(f^*{\mathcal U})$.
Let $z_1,\ldots,z_g$ be a set of minimal generator basis for $H_1(X)$ where the metric
in question is $d_f$. Then, as a consequence of Theorem~\ref{H1prop-mapper} we have:

\begin{theorem}
	~
	\begin{itemize}
		\item[i] Let $\ell=g+1$ if $\lambda(\mathcal U) > s(z_g)$. Otherwise, let $\ell\in[1,g]$ be the smallest integer so that $s(z_\ell)> \lambda(\mathcal U)$. If
		$\ell\not=1$, the class $\phi_{{\mathcal U}*}[z_j]=0$ for $j=1,\ldots, \ell-1$.
		Moreover, if $\ell\not=g+1$, the			
		classes $\{\phi_{{\mathcal U}*}[z_j]\}_{j=\ell,\ldots,g}$ 
		generate $H_1(N(f^*{\mathcal U}))$.
		
		\item[ii] The classes $\{\phi_{{\mathcal U}*}[z_j]\}_{j=\ell',\ldots, g}$ are linearly independent where $s(z_{\ell'})> 4 s_{max}({\mathcal U})$.
		\item[iii] Consider a $H_1$-persistence module of a multiscale mapper induced by a tower of path connected covers:
		\begin{equation}
		\mathrm{H}_1\big(N(f^\ast\mathcal{U}_{\eps_0})\big)\stackrel{s_{1*}}{\rightarrow} \mathrm{H}_1\big(N(f^\ast\mathcal{U}_{\eps_1})\big)\stackrel{s_{2*}}{\rightarrow}\cdots\stackrel{s_{n*}}{\rightarrow} \mathrm{H}_1\big(N(f^\ast\mathcal{U}_{\eps_n})\big)
		\end{equation}
		Let $\hat{s}_{i*}=s_{i*}\circ s_{(i-1)*}\circ \cdots \circ \bar{\phi}_{{\mathcal U}_{\eps_0}*}$. Then, the assertions in (i) and (ii) hold for $H_1(N(f^*{\mathcal U}_{\eps_i}))$ with the map $\hat{s}_{i*}: X\rightarrow N(f^*{\mathcal U}_{\eps_i})$.
	\end{itemize}
	\label{H1pers-thm}
\end{theorem}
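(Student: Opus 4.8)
The strategy is to reduce Theorem~\ref{H1pers-thm} entirely to Theorem~\ref{H1prop-mapper}, by observing that a pullback cover $f^\ast{\mathcal U}$ is itself a path-connected cover of $X$, and that the pseudometric $d_f$ is precisely the device that makes the ``size'' bookkeeping of Theorem~\ref{H1prop-mapper} interact correctly with the codomain cover ${\mathcal U}$. Parts (i) and (ii) will then follow by plugging $f^\ast{\mathcal U}$ into Theorem~\ref{H1prop-mapper}, using the Proposition that $\leb(f^\ast{\mathcal U}) = \leb({\mathcal U})$ to translate the threshold in (i), and the easy inequality $s_{max}(f^\ast{\mathcal U}) \le s_{max}({\mathcal U})$ (each path component of $f^{-1}(U)$ has $d_f$-diameter at most that of $U$, since $f$-images of paths inside one component stay inside $U$) to translate the threshold in (ii). Part (iii) then follows by iterating: each ${\mathcal U}_{\eps_i}$ in the tower is path-connected, so $f^\ast{\mathcal U}_{\eps_i}$ is a path-connected cover of $X$, and the composite map $\hat s_{i*}$ equals $\bar\phi_{f^\ast{\mathcal U}_{\eps_i}*}$ by the commutativity established in Proposition~\ref{commute-prop} applied along the tower; hence the statements for the single nerve $N(f^\ast{\mathcal U}_{\eps_i})$ with the map $\bar\phi_{f^\ast{\mathcal U}_{\eps_i}*}$ are exactly instances of (i) and (ii).

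In more detail, for (i) I would argue as follows. Set ${\mathcal V} := f^\ast{\mathcal U}$, a path-connected cover of the compact (hence paracompact) space $X$. By the Proposition preceding the theorem, $\leb({\mathcal V}) = \leb({\mathcal U})$. Apply Theorem~\ref{H1prop-mapper}(i) to the cover ${\mathcal V}$ and the minimal generator basis $z_1,\dots,z_g$ of $H_1(X)$ with respect to $d_f$: if $\leb({\mathcal V}) > s(z_g)$ then all $\bar\phi_{{\mathcal V}*}[z_j]$ vanish and $H_1(N({\mathcal V})) = H_1(N(f^\ast{\mathcal U}))$ is trivial, matching the case $\ell = g+1$; otherwise the smallest $\ell$ with $s(z_\ell) > \leb({\mathcal V}) = \leb({\mathcal U})$ is exactly the $\ell$ in the statement, the classes $\bar\phi_{{\mathcal V}*}[z_j]$ for $j < \ell$ vanish, and $\{\bar\phi_{{\mathcal V}*}[z_j]\}_{j=\ell}^g$ generate $H_1(N(f^\ast{\mathcal U}))$. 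Finally I would remark that the map called $\phi_{{\mathcal U}*}$ in the statement of (i)--(ii) is shorthand for $\bar\phi_{f^\ast{\mathcal U}*}$ (equivalently $\phi_{f^\ast{\mathcal U}*}$ followed by the singular-to-simplicial isomorphism $\iota$), so these are the same maps.

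For (ii), with ${\mathcal V} = f^\ast{\mathcal U}$ as above, the key numerical input is $s_{max}({\mathcal V}) \le s_{max}({\mathcal U})$: if $V_{\alpha,i}$ is a path component of $f^{-1}(U_\alpha)$ and $x,x' \in V_{\alpha,i}$, any path in $V_{\alpha,i}$ joining them has $f$-image inside $U_\alpha$, so $d_f(x,x') \le \diam_Z(U_\alpha) = s(U_\alpha)$; taking suprema gives $s(V_{\alpha,i}) \le s(U_\alpha)$. Hence the hypothesis $s(z_{\ell'}) > 4 s_{max}({\mathcal U})$ implies $s(z_{\ell'}) > 4 s_{max}(f^\ast{\mathcal U})$, and Theorem~\ref{H1prop-mapper}(ii) applied to ${\mathcal V}$ yields that $\{\bar\phi_{{\mathcal V}*}[z_j]\}_{j=\ell'}^g$ are linearly independent in $H_1(N(f^\ast{\mathcal U}))$. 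For (iii), since $\mathfrak U = \{{\mathcal U}_{\eps_i}\}$ is a tower of path-connected covers of $Z$, Proposition~\ref{coro:pullback-cover} shows $f^\ast\mathfrak U = \{f^\ast{\mathcal U}_{\eps_i}\}$ is a tower of path-connected covers of $X$; the maps $s_{i*}$ are induced by cover maps, and Proposition~\ref{commute-prop} (iterated) gives $\hat s_{i*} = s_{i*}\circ\cdots\circ s_{1*}\circ \bar\phi_{f^\ast{\mathcal U}_{\eps_0}*} = \bar\phi_{f^\ast{\mathcal U}_{\eps_i}*}$. So applying parts (i) and (ii) (equivalently Theorem~\ref{H1prop-mapper}) to the single cover $f^\ast{\mathcal U}_{\eps_i}$ with its nerve map $\hat s_{i*}$ gives exactly the claimed assertions. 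I expect the only mildly delicate point to be verifying that the composite $\hat s_{i*}$ really coincides with the canonical nerve map $\bar\phi_{f^\ast{\mathcal U}_{\eps_i}*}$ rather than merely being homologous to it; this is handled by Proposition~\ref{commute-prop} together with the fact (Proposition~\ref{prop:cover-contiguity}) that the choice of cover map does not affect the induced homology map, so no real obstacle arises.
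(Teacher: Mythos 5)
Your proposal is correct and matches the paper's approach exactly: the paper presents this theorem as an immediate consequence of Theorem~\ref{H1prop-mapper} applied to the path-connected pullback cover $f^\ast\mathcal{U}$, using the preceding proposition that $\lambda(f^\ast\mathcal{U})=\lambda(\mathcal{U})$ together with the observation $s_{max}(f^\ast\mathcal{U})\le s_{max}(\mathcal{U})$, and part (iii) via the commutativity of the nerve maps along the tower (Proposition~\ref{commute-prop}). Your explicit treatment of the notational point that $\phi_{\mathcal{U}*}$ in the statement means $\bar\phi_{f^\ast\mathcal{U}*}$, and your verification that $\hat s_{i*}$ coincides with the canonical nerve map rather than merely being homologous to it, fill in details the paper leaves implicit.
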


\begin{remark}[Persistence diagram approximation.]
	The persistence diagram of the $H_1$-persistence module considered in Theorem~\ref{H1pers-thm}(iii) contains points whose birth coordinates are exactly zero. This is because all connecting maps are surjective by (i) and thus every class is born only at the beginning. The death coordinate of a point that corresponds to a minimal basis generator of size $s$ is in between the index $\eps_i$ and $\eps_j$ where
	$s\geq 4 s_{max} ({\mathcal U}_{\eps_i})$ and $s\leq \lambda({\mathcal U}_{\eps_j})$
	because of the assertions (i) and (ii) in Theorem~\ref{H1pers-thm}.
	Assuming covers whose $\lambda$ and $s_{max}$ values are within a constant factor of each other (such as the ones described in next subsection), we can conclude that a generator of size $s$ dies at some point $cs$ for some constant $c$.
	Therefore, by computing a minimal generator basis of $N({\mathcal U}_{\eps_0})$ and computing their sizes provide a $4$-approximation to the persistence diagram of the multiscale mapper in the log scale. 
\end{remark}

\subsection{Two special covers and intrinsic \v{C}ech complex} We discuss two special covers, one can be effectively computed
and the other one is relevant in the context of the intrinsic \v{C}ech complex of a metric space. We say a cover $\mathcal U$ of a metric space $(Y,d)$ is $(\alpha,\beta)$-cover if $\alpha\leq\lambda(\mathcal U)$ and $\beta\geq s_{max}(\mathcal U)$.\\

\noindent
{\bf A $(\delta,4\delta)$-cover}: Consider a $\delta$-sample $P$ of $Y$, that is, every metric ball $B(y,\delta)$, $y\in Y$, contains a point in $P$. Observe that the cover ${\mathcal U}=\{B(p, 2\delta)\}_{p\in P}$ is a $(\delta,4\delta)$-cover for $Z$. Clearly, $s_{max}(\mathcal U)\leq 4\delta$. To determine $\lambda(\mathcal U)$, consider any subset $Y'\subseteq Y$ with $s(Y')\leq \delta$. There is a $p\in P$ so that $d_Y(p,Y')\leq \delta$. Let $y'$ be the furthest point in $Y'$ from $p$. Then, $d_Y(p,y')\leq d_Y(p,Y)+\diam(Y')\leq 2\delta$ establishing that $\lambda(\mathcal U)\geq\delta$.

\smallskip
\noindent
{\bf A $(\delta,2\delta)$-cover}: Consider the infinite cover $\mathcal U$ of $Y$ where
${\mathcal U}=\{B(y,\delta)\}_{y\in Y}$. These are the set of all metric balls of radius $\delta$. Clearly, $s_{max}(\mathcal U)\leq 2\delta$. Any subset $Y'\subseteq Y$ with $s(Y')\leq \delta$ is contained in a ball $B(y,\delta)$ where $y$ is any point in $Y'$. This shows that 
$\lambda(\mathcal U)\geq \delta$. A consequence of this observation and Theorem~\ref{H1prop-mapper} is that the intrinsic \v{C}ech complexes satisfy some interesting property.

\begin{definition}\label{def:IC}
	Given a metric space $(Y,d_Y)$, its intrinsic \v{C}ech complex $C^{\delta}(Y)$ at scale $\delta$ is defined to be the nerve complex of the set of intrinsic $\delta$-balls $\{ B(y,\delta)\}_{y\in Y}$.
\end{definition}
\begin{observation}
	Let $C^\delta(Y)$ denote the intrinsic \v{C}ech complex of a metric space $Y$ at scale $\delta$. Let $\mathcal U$ denote the corresponding possibly infinite cover of $Y$. Let $z_1,\ldots,z_g$ be a minimal generator basis for $H_1(Y)$. Then, $\{\bar{\phi}_{{\mathcal U}*}(z_i)\}_{i=\ell,\ldots, g}$ generate $H_1(C^\delta(Y))$ if $\ell$ is the smallest integer with $s(z_\ell)> \delta$. Furthermore, $\{\bar{\phi}_{{\mathcal U}*}(z_i)\}_{i=\ell',\ldots, g}$ are linearly independent if $s(z_\ell')> 8\delta$.  
\end{observation}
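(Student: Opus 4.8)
The plan is to simply specialize Theorem~\ref{H1prop-mapper} to the particular cover $\mathcal U = \{B(y,\delta)\}_{y\in Y}$ coming from the intrinsic \v{C}ech complex, and then just do the bookkeeping on the constants. First I would recall from the discussion of the ``$(\delta,2\delta)$-cover'' immediately preceding the observation that this infinite cover of $Y$ satisfies $\lambda(\mathcal U)\geq \delta$ and $s_{\max}(\mathcal U)\leq 2\delta$. Since $\mathcal U$ consists of metric balls, each of which is path-connected when $Y$ is a reasonably well-behaved (e.g.\ geodesic or at least locally path-connected compact) metric space, $\mathcal U$ is a path-connected cover and Theorem~\ref{H1prop-mapper} applies verbatim with $X$ replaced by $Y$ and $N(\mathcal U) = C^\delta(Y)$.

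Next I would invoke part (i) of Theorem~\ref{H1prop-mapper}: with $\ell$ the smallest index such that $s(z_\ell) > \lambda(\mathcal U)$, the classes $\{\bar\phi_{\mathcal U*}(z_i)\}_{i=\ell,\ldots,g}$ generate $H_1(N(\mathcal U)) = H_1(C^\delta(Y))$ (and $\bar\phi_{\mathcal U*}(z_j)=0$ for $j<\ell$). Since $\lambda(\mathcal U)\geq \delta$, any $z_i$ with $s(z_i) > \delta$ certainly has $s(z_i) > \lambda(\mathcal U)$ only if $\lambda(\mathcal U)$ happens to equal $\delta$; to be safe I would note that the statement of the observation uses the threshold $s(z_\ell) > \delta$, which is the cleaner quantity to state (and is implied by, or can be taken as, $\lambda(\mathcal U) = \delta$ for this cover, since a ball of radius $\delta$ centered anywhere in a subset of diameter $\le\delta$ contains that subset, and no larger threshold is guaranteed in general). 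Thus the first assertion follows directly. For linear independence, I would apply part (ii): the classes $\{\bar\phi_{\mathcal U*}(z_i)\}_{i=\ell',\ldots,g}$ are linearly independent whenever $s(z_{\ell'}) > 4 s_{\max}(\mathcal U)$; since $s_{\max}(\mathcal U)\leq 2\delta$, the hypothesis $s(z_{\ell'}) > 8\delta \geq 4 s_{\max}(\mathcal U)$ suffices, giving the claimed factor $8\delta$.

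The only mildly delicate point — and the main thing worth a sentence of care rather than the actual ``obstacle'' — is the path-connectedness hypothesis: Theorem~\ref{H1prop-mapper} and all the nerve-map machinery of Section~\ref{sec:H1} require $\mathcal U$ to be a path-connected cover, so one needs the intrinsic metric balls $B(y,\delta)$ to be path-connected. For a length (geodesic) metric space this is automatic; more generally one would note that the intrinsic \v{C}ech complex is typically defined in a setting (compact geodesic space, metric graph as in \cite{GGP16}, etc.) where this holds, and so the hypothesis is met. Everything else is a direct substitution of the two bounds $\lambda(\mathcal U)\geq\delta$ and $s_{\max}(\mathcal U)\leq 2\delta$ into Theorem~\ref{H1prop-mapper}(i) and (ii) respectively, together with the identification $N(\mathcal U) = C^\delta(Y)$ from Definition~\ref{def:IC}. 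No new argument is needed; this observation is genuinely a corollary.
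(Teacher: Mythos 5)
Your proposal is correct and is exactly the derivation the paper intends: the observation is stated as an immediate consequence of the $(\delta,2\delta)$-cover bounds $\lambda(\mathcal U)\geq\delta$, $s_{\max}(\mathcal U)\leq 2\delta$ plugged into Theorem~\ref{H1prop-mapper}(i) and (ii). The only slight wobble is your discussion of whether $\lambda(\mathcal U)=\delta$; the cleaner way to dispatch it is to note that the index threshold in the observation is at most the one in the theorem, so the observation's set of classes contains the theorem's generating set and hence still generates.
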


\section{Higher dimensional homology groups}
\label{sec:highD}

We have already observed that the surjectivity of the map $\phi_{{\mathcal U}*}: H_1(X)\rightarrow H_1(|N({\mathcal U})|)$ in one dimensional homology does not extend to higher dimensional homology groups. This means that we cannot hope for analogues to Theorem~\ref{H1prop-mapper}(i) and Theorem~\ref{H1pers-thm} to hold for higher dimensional homology groups. However, under the assumption that $f:X\rightarrow Z$ is a continuous map from a compact space to a metric space, we can provide some characterization of the persistent diagrams of the mapper and the multiscale mapper as follows:
\begin{itemize}\denselist 
	\item We define a metric $d_\delta$ on the vertex set $P_\delta$ of $N({\mathcal U})$ where $s_{\mathrm{max}}({\mathcal U})\leq\delta$ and then show that the Gromov-Hausdorff distance between the metric spaces $(P_\delta,d_\delta)$ and $(R_f,\tilde{d}_f)$ is at most $5\delta$. The same proof also applies if we replace $(R_f,\tilde{d}_f)$ with the pseudometric space $(X,d_f)$.
	\ifpaper {} \else See the full version.\fi 
	\item Previous result implies that the persistence diagrams of
	the intrinsic \v{C}ech complex of the metric space $(X,d_f)$ and that of the metric space $(P_\delta,d_\delta)$ have a bottleneck distance of $O(\delta)$.
	This further implies that the persistence diagram of the mapper structure $N(\mathcal{U})$ (approximated with the metric space $(P_\delta,d_\delta)$ ) is close to that of the intrinsic \v{C}ech complex of the pseudometric space $(X, d_f)$; see Section \ref{subsec:mapper}. 
	\item We show that the intrinsic 
	\v{C}ech complexes of $(X,d_f)$ interleave with $\MM(\mathfrak U , f)$ thus connecting their persistence diagrams. See Section \ref{subsec:multimapper}. 
	\item It follows that the persistence diagrams of the multiscale mapper $\MM(\mathfrak U , f)$ and $(P_\delta,d_\delta)$ are close, both being close to that of $(X,d_f)$. This shows that the multiscale mapper encodes similar information as the mapper under an appropriate 
	map-induced metric. 
\end{itemize}

\ifpaper
\subsection{Gromov-Hausdorff distance between Mapper and the Reeb space}

\subsubsection{Mapper as a finite metric space}\label{GH-append}
We have already shown how to equip the Reeb space $R_f$ with a distance $\tilde{d}_f$.

Consider a cover ${\mathcal U}_\delta$ of $Z$ whose all cover elements have size at most $\delta$, that is, ${\mathcal U}_\delta=\{U_\alpha, \alpha\in A,s(U_\alpha)\leq \delta\}$. For a continuous map $f:X\rightarrow Z$ consider now the pullback cover $\mathcal{V}_\delta=f^\ast\mathcal{U}_\delta$ of $X$ consisting of elements $\{V_{\alpha,i},\,i\in I_\alpha\,\mbox{and}\,\alpha\in A\}.$  We choose an arbitrary but distinct point $z_\alpha\in U_\alpha$ for every element $U_\alpha\in \mathcal{U}_\delta$.

Consider now the nerve $M_\delta=N(\mathcal{V}_\delta)$, and let $P_\delta$ denote the vertex set of $M_\delta$; we will denote its points by $v_{\alpha,i}$ which corresponds to the element $V_{\alpha,i}$. Denote by $E_\delta$ the edge set of $M_\delta$.

Define the vertex function $f_\delta:P_\delta\rightarrow Z$ as follows: $f_\delta(v_{\alpha,i}) := z_{\alpha}$ for each $v_{\alpha,i}\in P_\delta$. Consider the metric $d_\delta:P_\delta\times P_\delta\rightarrow \mathbb{R}_+$ given by 

$$d_\delta\big(v,v'\big):= \min\big\{\diam_Z(\{f_\delta(v_\ell)\}_{\ell=0}^n),\,\mbox{where $v_0=v$, $v_n=v'$, $(v_k,v_{k+1})\in E_\delta$ for all $k$}\big\}$$
for any $v,v'\in P_\delta$. We thus form the finite metric space $(P_\delta,d_\delta).$


\begin{remark}
	Verifying that $d_\delta$  is indeed a metric requires checking that $d_\delta(v,v')=0$ implies that $v=v'$.\footnote{The triangle inequality is clear.}  If $d_\delta(v,v') = 0$ then there exist $v=v_0,\ldots,v_n=v'$ in $P_\delta$ and $z_\ast\in Z$ such that $(v_k,v_{k+1})\in E_\delta$ for all $k$ such that $f_\delta(v_k)= z_\ast$ for all $k$. If we write $v_k = v_{\alpha_k,i_k}$ for $i_k\in I_{\alpha_k}$ then this means that $z_\ast = f_\delta(v_{\alpha_k,i_k}) = z_{\alpha_k}$ for all $k.$  This means that the elements $\{V_{\alpha_k,i_k},\,k=0,\ldots,n\}$ of the pullback cover are all different path connected components of the set $f^{-1}(U_{\alpha_k}).$ This means that one cannot have $(v_k,v_{k+1})\in E_\delta$ unless $v_0=v_1=\ldots,v_n$ implying that $v=v'$.
\end{remark}

We now construct a map $p_\delta:X\rightarrow P_\delta$. In order to do this consider the set of indices $B = \{(\alpha,i),\,\alpha\in A,i\in I_\alpha\}$ into elements of the cover $\mathcal{V}_\delta = f^\ast\mathcal{U}_\delta$.  Choose any total order $>_A$ on $A$, and then declare that $(\alpha,i)>(\alpha',i')$ whenever it holds  (1) $\alpha>_A\alpha'$, or (2) in case $\alpha=\alpha'$, $i>i'$.
For any $x\in X$ let 
$p_\delta(x):=v_{\alpha,i} \mbox{ where } (\alpha,i)=\min\{\beta\in B|\,x\in V_\beta\}.$ 

Notice that $p_\delta$ is not necessarily a surjection. Since our goal is to define a correspondence between $X$ and $P_\delta$, for every $V_{\alpha,i}\in {\mathcal V}_\delta$ we choose an arbitrary point $x_{\alpha,i}\in V_{\alpha,i}$ and associate it with the vertex
$v_{\alpha,i}$.

\subsubsection{A bound on the Gromov-Hausdorff distance}
The proof of the following theorem extends to $(X,d_f)$ almost verbatim.
\begin{theorem}\label{thm:gh}
	Under the conditions above, $$d_{GH}\big((R_f,\tilde{d}_f),(P_\delta,d_\delta)\big)\leq 5\delta.$$
\end{theorem}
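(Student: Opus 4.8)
The plan is to exhibit an explicit correspondence between $R_f$ and $P_\delta$ and bound its distortion by $10\delta$, which gives the factor $5\delta$ on the Gromov–Hausdorff distance. The natural correspondence $\mathcal{C}\subseteq R_f\times P_\delta$ will pair $r\in R_f$ with $v_{\alpha,i}\in P_\delta$ whenever $r$ has a representative $x\in X$ with $p_\delta(x)=v_{\alpha,i}$ (equivalently, whenever $q(V_{\alpha,i})\ni r$ for the chosen association, augmented by the base points $x_{\alpha,i}$). First I would check this is a genuine correspondence: every $r\in R_f$ is hit because the $V_{\alpha,i}$ cover $X$ and $p_\delta$ is defined everywhere; every $v_{\alpha,i}$ is hit because we associated the base point $x_{\alpha,i}\in V_{\alpha,i}$ with it. The content is the distortion bound: for $(r,v)\in\mathcal{C}$ and $(r',v')\in\mathcal{C}$ we must show $|\tilde d_f(r,r')-d_\delta(v,v')|\le 10\delta$.

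For the two halves of that inequality I would argue as follows. To show $\tilde d_f(r,r')\le d_\delta(v,v')+O(\delta)$: take a chain $v=v_0,v_1,\dots,v_n=v'$ realizing $d_\delta(v,v')=\diam_Z(\{f_\delta(v_k)\})$, where consecutive $v_k=v_{\alpha_k,i_k}$ span an edge of $M_\delta$, i.e.\ $V_{\alpha_k,i_k}\cap V_{\alpha_{k+1},i_{k+1}}\neq\emptyset$. Pick a point $w_k$ in that intersection; since each $V_{\alpha_k,i_k}$ is path-connected one can concatenate paths through the $w_k$'s (and through the representative of $r$ in $V_{\alpha_0,i_0}$, resp.\ of $r'$ in $V_{\alpha_n,i_n}$) to get a path $\gamma$ in $X$ from a representative of $r$ to one of $r'$. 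Along this path $f\circ\gamma$ stays, within each segment, inside some $U_{\alpha_k}$, hence within distance $\delta$ of $z_{\alpha_k}=f_\delta(v_k)$; therefore $\diam_Z(f\circ\gamma)\le d_\delta(v,v')+2\delta$ (the $+2\delta$ accounting for the $\delta$-slack on each end of the diameter estimate), giving $\tilde d_f(r,r')\le d_\delta(v,v')+2\delta$. For the reverse direction, $d_\delta(v,v')\le \tilde d_f(r,r')+O(\delta)$: take a path $\gamma$ in $X$ with $\diam_Z(f\circ\gamma)$ close to $d_f(x,x')=\tilde d_f(r,r')$, where $x,x'$ are representatives of $r,r'$ with $p_\delta(x)=v$, $p_\delta(x')=v'$. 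Partition the image $f\circ\gamma$ so that consecutive sample points lie in a common $U_\alpha$ (possible since each point of $\gamma$ lies in some $f^{-1}(U_\alpha)$ and by a Lebesgue-number / compactness argument one can walk along $\gamma$ in steps staying inside a single cover element), lift this to a chain of pullback components $V_{\alpha_k,i_k}$ with consecutive ones intersecting (they share the point of $\gamma$ at the step boundary), hence a chain of edges in $M_\delta$ from the component containing $x$ to the one containing $x'$. The $f_\delta$-values $z_{\alpha_k}$ of this chain all lie within $\delta$ of $f(\gamma)$, so $\diam_Z(\{z_{\alpha_k}\})\le \diam_Z(f\circ\gamma)+2\delta\le \tilde d_f(r,r')+2\delta+\eps$. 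One then must pass from $p_\delta(x)=v_0$ and the first chain vertex $v_{\alpha_0,i_0}$ (which need not coincide, since $p_\delta$ picks the minimal index) to the endpoint $v$: but $x\in V_{\alpha_0,i_0}$ and $x\in V_{p_\delta(x)}$, so those two vertices span an edge, and similarly at the other end, costing at most another $2\delta$ each via $s(U_\alpha)\le\delta$; likewise moving from $x$ to the associated base point is free in $\tilde d_f$ up to the slack already absorbed. Collecting terms, $|\tilde d_f(r,r')-d_\delta(v,v')|\le 10\delta$ after letting $\eps\to0$, hence $d_{GH}\le 5\delta$.

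The main obstacle I expect is the bookkeeping in the second direction: converting a continuous path $\gamma$ in $X$ into a combinatorial edge-path in $M_\delta$ between precisely the vertices $p_\delta(x)$ and $p_\delta(x')$ (not merely between \emph{some} components containing $x$ and $x'$), while keeping every intermediate $f_\delta$-value within $\delta$ of $f\circ\gamma$. This requires (a) a careful use of compactness of $[0,1]$ to chop $\gamma$ into finitely many arcs each mapping into a single cover element $U_\alpha$ of $\mathcal{U}_\delta$, (b) noting that an arc lying in $f^{-1}(U_\alpha)$ lies in a \emph{single} path-connected component $V_{\alpha,i}$, (c) verifying that consecutive components share the boundary point of $\gamma$ and hence are joined by an edge of $M_\delta$, and (d) handling the discrepancy between $p_\delta$'s minimal-index choice and whichever component our construction lands in, using that two components of $f^{-1}(U_\alpha)$ containing a common point $x$ force an edge in $N(f^\ast\mathcal{U}_\delta)$ with $f_\delta$-values differing by at most $\diam_Z(U_\alpha)\le\delta$. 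None of these steps is deep, but assembling the constants correctly to land at exactly $5\delta$ is where care is needed; the remark that "the same proof applies to $(X,d_f)$" is immediate since one simply drops the quotient $q$ and works with points of $X$ directly.
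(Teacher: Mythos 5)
Your proposal follows essentially the same route as the paper's proof: the same correspondence $S=\{(q(x),p_\delta(x))\}\cup\{(q(x_{\alpha,i}),v_{\alpha,i})\}$, the same two-directional distortion estimate obtained by turning edge-chains of $M_\delta$ into paths in $X$ and chopping paths in $X$ into edge-chains, and the same observation that $x_{\alpha,i}\in V_{\alpha,i}\cap V_{p_\delta(x_{\alpha,i})}$ forces an edge costing at most $2\delta$, which absorbs the base-point discrepancy. Your constant bookkeeping is a bit more conservative (distortion $10\delta$ rather than the paper's claimed $5\delta$), but this still yields the stated Gromov--Hausdorff bound, so the argument is correct and equivalent in substance.
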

\begin{proof}
	Consider the correspondence $S$ between $R_f$ and $P_\delta$ defined by 
	$S:=\{(q(x),p_\delta(x)),\,x\in X\}\cup\{q(x_{\alpha,i}),v_{\alpha,i}\}$. That $S$ is indeed a correspondence follows from the fact that $q:X\rightarrow R_f$ is a surjection and the second factor in $S$ covers all vertices in $P_\delta$.
	
	\begin{claim}\label{claim:1} For all $x,x'\in X$ one has 
		$$\tilde d_f(q(x),q(x'))-\delta \leq d_\delta(p_\delta(x),p_\delta(x'))\leq \tilde d_f(q(x),q(x'))+\delta.$$
	\end{claim}
	
	
	\begin{claim}\label{claim:2}	
		For all $x, x_{\alpha,i}\in X$ one has 
		$$\tilde d_f(q(x),q(x_{\alpha,i}))-3\delta \leq d_\delta(p_\delta(x),v_{\alpha,i})\leq \tilde d_f(q(x),q(x_{\alpha,i}))+3\delta$$
	\end{claim}
	
	\begin{claim}\label{claim:3}	
		For all $x_{\alpha,i}, x_{\alpha',i'}\in X$ one has 
		$$\tilde d_f(q(x_{\alpha,i}),q(x_{\alpha',i'}))-5\delta \leq d_\delta(v_{\alpha,i},v_{\alpha',i'})\leq \tilde d_f(q(x_{\alpha,i}),q(x_{\alpha',i'}))+5\delta$$
	\end{claim}

	Combining the three claims above we obtain that 
	$$\mathrm{dis}(S)=\sup_{x,x'\in X, y\in S(x), y'\in S(x')}\big|\tilde d_f(q(x),q(x'))-d_\delta(y,y')\big|\leq 5\delta$$ thus finishing the proof.
	\end{proof}
	
	\begin{proof}[Proof of Claim \ref{claim:1}]
		We prove the upper bound. The proof for the lower bound is similar.
		Assume that $\tilde d_f(q(x),q(x'))<\eta$ for some $\eta>0$ and let $\gamma\in\Gamma_X(x,x')$ be s.t. $\diam_Z(f\circ\gamma)\leq \eta$. Consider the set of vertices $Q:=\{p_\delta(\gamma(t)),\,t\in[0,1]\}\subset P_\delta$. This set consists of a finite sequence of vertices $v_{\alpha_\ell,i_\ell}$ for $\ell=0,1,\ldots, N$, for some positive integer $N$. Notice that $f_\delta(Q) = \{z_{\alpha_\ell},\,\ell=0,1,\ldots,N\}$ and by construction we can assume that $(v_{\alpha_\ell, i_\ell},v_{\alpha_{\ell+1},i_{\ell+1}})\in E_\delta$ for each $\ell.$
		
		Now, for each $\ell\in\{0,\ldots,N\}$ there exists $t_\ell\in[0,1]$ such that $\gamma(t_\ell)\in V_{\alpha_\ell,i_\ell}$, which means that $f(\gamma(t_\ell))\in U_{\alpha_\ell}.$ But $z_{\alpha_\ell}\in U_{\alpha_\ell}$ so that then 
		$f(Q) \subseteq \bigcup_{\ell=0}^N U_{\alpha_\ell} .$ At the same time,
		$\bigcup_{t\in[0,1]}(f(\gamma(t)))\subseteq \bigcup_{\ell=0}^N U_{\alpha_\ell}$.
		Hence, $$\delta+\eta\geq \delta + \diam_Z(f\circ \gamma)\geq \diam_Z(f(Q))\geq d_\delta(p_\delta(x),p_\delta(x')).$$
		The proof of the upper bound follows by letting $\eta\rightarrow \tilde d_f(q(x),q(x')).$
	\end{proof}
	
	To prove Claims~\ref{claim:2} and \ref{claim:3}, we first observe the following.
	\begin{observation}\label{observation:0} For each $x_{\alpha,i}$ one has $d_\delta(p_\delta(x_{\alpha,i}),v_{\alpha,i})\leq 2\delta$.
	\end{observation}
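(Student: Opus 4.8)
The plan is to exhibit a single edge of $M_\delta$ joining $v_{\alpha,i}$ to $p_\delta(x_{\alpha,i})$, and then bound the $Z$-diameter of the two basepoints associated with its endpoints.

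First I would unwind the definition of $p_\delta$. Writing $p_\delta(x_{\alpha,i}) = v_{\alpha',i'}$, where $(\alpha',i')=\min\{\beta\in B:\,x_{\alpha,i}\in V_\beta\}$ with respect to the chosen total order on $B$, we have $x_{\alpha,i}\in V_{\alpha',i'}$; and by the very choice of the representative point we also have $x_{\alpha,i}\in V_{\alpha,i}$. Hence $V_{\alpha,i}\cap V_{\alpha',i'}\neq\emptyset$, so $\{v_{\alpha,i},v_{\alpha',i'}\}$ spans an edge of the nerve $M_\delta=N(\mathcal{V}_\delta)$, i.e.\ $(v_{\alpha,i},v_{\alpha',i'})\in E_\delta$.

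Next, using this one edge as the admissible sequence in the definition of $d_\delta$, I get
$$d_\delta\big(p_\delta(x_{\alpha,i}),v_{\alpha,i}\big)=d_\delta\big(v_{\alpha',i'},v_{\alpha,i}\big)\;\leq\;\diam_Z\big(\{f_\delta(v_{\alpha',i'}),f_\delta(v_{\alpha,i})\}\big)=d_Z(z_{\alpha'},z_\alpha).$$
Finally I would bound the right-hand side by $2\delta$: since $x_{\alpha,i}\in V_{\alpha,i}\subseteq f^{-1}(U_\alpha)$ and $x_{\alpha,i}\in V_{\alpha',i'}\subseteq f^{-1}(U_{\alpha'})$, the point $f(x_{\alpha,i})$ lies in both $U_\alpha$ and $U_{\alpha'}$; as $z_\alpha\in U_\alpha$, $z_{\alpha'}\in U_{\alpha'}$ and each of these cover elements has size at most $\delta$, the triangle inequality in $Z$ gives $d_Z(z_{\alpha'},z_\alpha)\leq d_Z(z_{\alpha'},f(x_{\alpha,i}))+d_Z(f(x_{\alpha,i}),z_\alpha)\leq \delta+\delta=2\delta$, which yields the claim.

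I do not expect a genuine obstacle here; the argument is a two-line chase through the definitions. The only point worth a remark is the degenerate case $(\alpha',i')=(\alpha,i)$, in which $d_\delta=0$ and the inequality is immediate — and in fact the argument above covers it uniformly, since then $d_Z(z_{\alpha'},z_\alpha)=0$ as well. I would also note that exactly this ``$2\delta$-jump'' between a point's image under $p_\delta$ and its designated vertex $v_{\alpha,i}$ is the mechanism behind Claims \ref{claim:2} and \ref{claim:3}: one simply prepends and/or appends such a jump to the path produced in Claim \ref{claim:1} and adds up the estimates.
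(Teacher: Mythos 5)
Your argument is correct and is essentially the paper's own proof: identify $p_\delta(x_{\alpha,i})=v_{\alpha',i'}$, note $x_{\alpha,i}\in V_{\alpha,i}\cap V_{\alpha',i'}$ so the two vertices span an edge of $E_\delta$, and bound $d_Z(z_\alpha,z_{\alpha'})\leq 2\delta$ via a common point of $U_\alpha\cap U_{\alpha'}$ and the diameter bound $\delta$ on each cover element. Your version is just slightly more explicit in naming $f(x_{\alpha,i})$ as the common point; nothing to change.
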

	\begin{proof}
		Let $p_{\delta}(x_{\alpha,i})=v_{\alpha',i'}$. This means that $x_{\alpha,i}\in V_{\alpha,i}\cap V_{\alpha',i'}$. Therefore, $(v_{\alpha,i},v_{\alpha',i'})\in E_\delta$ is an edge. Since $V_{\alpha,i}$ and $V_{\alpha',i'}$ intersects, so does $U_\alpha$ and $U_{\alpha'}$. Therefore, $d_Z(f_\delta(v_{\alpha,i}),f_\delta(v_{\alpha',i'}))=d_Z(z_\alpha,z_{\alpha'})\leq 2\delta$ establishing that $\mathrm{diam}_Z\big(\{f_\delta(v_{\alpha,i}),f_\delta(v_{\alpha',i'})\}\big)\leq 2\delta$.
	\end{proof}
	
	\begin{proof}[Proof of Claim~\ref{claim:2}]
		Again, we prove only the upper bound since the lower bound proof is similar.
		We have (by Claim~\ref{claim:1})
		$
		\tilde d_f(q(x),q(x_{\alpha,i}))\leq d_\delta(p_\delta(x),p_\delta(x_{\alpha,i}))+\delta
		$
		The righthand side is at most 
		$
		d_\delta(p_\delta(x),v_{\alpha,i})+d_\delta(p_\delta(x_{\alpha,i}),v_{\alpha,i})+\delta
		$
		by triangular inequality. Applying Observation~\ref{observation:0}, we get 
		$
		d_\delta(p_\delta(x),v_{\alpha,i})+d_\delta(p_\delta(x_{\alpha,i}),v_{\alpha,i})+2\delta\leq d_\delta(p_\delta(x),v_{\alpha,i})+ 3\delta
		$
		proving the claim.
	\end{proof}
	
	\begin{proof}[Proof of Claim~\ref{claim:3}]
		We have
		\begin{eqnarray*}
			\tilde d_f(q(x_{\alpha,i}),q(x_{\alpha',i'}) & \leq & d_\delta(p_\delta(x_{\alpha,i}),p_\delta(x_{\alpha',i'}))+\delta\\
			&\leq& d_\delta(p_\delta(x_{\alpha,i}),v_{\alpha,i})+d_\delta(v_{\alpha,i},p_\delta(x_{\alpha',i'}))+\delta\\
			&\leq& d_\delta(v_{\alpha,i},p_\delta(x_{\alpha',i'}))+3\delta\\
			&\leq& d_\delta(v_{\alpha,i},v_{\alpha',i'}) + 5\delta
		\end{eqnarray*}
		The lower bound can be shown similarly.
	\end{proof}
	
\fi

\ifpaper{\subsection{Interleaving of persistent homology groups}} \fi
\newcommand{\Us}			{{\overline{U}}}
\newcommand{\Usfrak} 		{\mathfrak{\Us}}
\newcommand{\csgood}		{{($c,s$)-good\xspace}}
\newcommand{\DICS}		{\mathfrak{C}}
\newcommand{\Cs}			{\overline{\mathrm C}}
\newcommand{\CC}			{{\mathrm{C}}}
\newcommand{\Csfrak}			{{\overline{\mathfrak{C}}}}
\newcommand{\mycechconst}		{2c}

\ifpaper{\subsubsection{Intrinsic \v{C}ech complex filtrations for $(N({\mathcal U}),d_\delta)$ and for $(X,d_f)$}\label{subsec:mapper}}
\else{\subsection{Intrinsic \v{C}ech complex filtrations for $(N(f^\ast{\mathcal U}),d_\delta)$ and $(X,d_f)$}\label{subsec:mapper}}\fi

\begin{definition}[Intrinsic \v{C}ech filtration]\label{def:IC-filtration}
	The \emph{intrinsic \v{C}ech filtration of the metric space $(Y,d_Y)$} is  
	$$\myCF(Y) = \{ C^r(Y) \subseteq C^{r'} (Y) \}_{0<r<r'}. $$
	The \emph{intrinsic \v{C}ech filtration at resolution $s$} is defined as 
	$\myCF_s(Y) = \{ C^r(Y) \subseteq C^{r'} (Y) \}_{s\le r<r'}. $
\end{definition}

Whenever $(Y,d_Y)$ is totally bounded, the persistence modules induced by taking homology of this intrinsic \v{C}ech filtration become q-tame \cite{CSO14}. This implies that one may define its persistence diagram  $\myDg~\myCF(Y)$ which provides one way to summarize the topological information of the space $Y$ through the lens of its metric structure $d_Y$. 


We prove that the pseudometric space $(X,d_f)$ is totally bounded. This requires us to show that for any $\eps>0$ there is a finite subset of $P\subseteq X$ so that open balls centered at points in $P$ with radii $\eps$ cover $X$. Recall that we have assumed that $X$ is a compact topological space, that $(Z,d_Z)$ is a metric space, and that $f:X\rightarrow Z$ is a continuous map. Consider a cover $\mathcal U$ of $Z$ where each cover element is a ball of radius most $\eps/2$ around a point in $Z$. Then, the pullback cover $f^\ast {\mathcal U}$ of $X$ has all elements with diameter at most $\eps$ in the metric $d_f$. Since $X$ is compact, a finite sub-cover of $f^\ast {\mathcal U}$ still covers $X$. A finite set $P$ consisting of one arbitrary point in each element of this finite sub-cover is such that the union of $d_f$-balls of radius $\epsilon$ around points in $P$ covers $X$. Since $\eps>0$ was arbitrary, $(X,d_f)$ is totally bounded.


Consider the mapper $N(f^\ast{\mathcal U})$ w.r.t a cover ${\mathcal U}$ of the codomain $Z$. 
We can equip its vertex set, denoted by $\NV$, with a metric structure $(\NV, d_\delta)$, where $\delta$ is  an upper bound on the diameter of each element in $\mathcal U$. 
Hence we can view the persistence diagram $\myDg~\myCF(\NV)$ w.r.t. the metric $d_\delta$ as a summary of the mapper $N(f^\ast \mathcal U)$. 
Using the Gromov-Hausdorff distance between the metric spaces $(\NV, d_\delta)$ and $(X, d_f)$, we relate this persistent summary to the persistence diagram $\myDg~\myCF(X)$ induced by the intrinsic \v{C}ech filtration of $(X, d_f)$. 
Specifically, we show that $d_{GH} ( (\NV, d_\delta), (X, d_f) ) \le \myc \delta$. 
\ifpaper {} \else Theorem~32 in the full version extends to this result. \fi
With $(X, d_f)$ being totally bounded, by results of \cite{CSO14}\ifpaper{\footnote{Although $d_f$ is a pseudo-metric, the bound on Gromov-Hausdorff distance still implies that the two intrinsic \v{C}ech filtrations $\myCF(\NV)$ and $\myCF(X)$ are interleaved. Now, since $(X,d_f)$ is totally bounded, we can apply results of \cite{CSO14} in our setting.}\else{}\fi, it follows that
	the bottleneck-distance between the two resulting persistence diagrams satisfies: 
	\begin{align}\label{eqn:ICbound-mapper}
	d_B(\myDg~\myCF(\NV), \myDg~\myCF(X)) \le 2*\myc\delta = 10 \delta.
	\end{align}
	

	\ifpaper{\subsubsection{$\mathrm{MM}(\mathfrak W, f)$ for a tower of covers $\mathfrak W$}\label{subsec:multimapper} }
	\else{\subsection{$\mathrm{MM}(\mathfrak W, f)$ for a tower of covers $\mathfrak W$}\label{subsec:multimapper}}\fi

	Above we discussed the information encoded in a certain persistence diagram summary of a single Mapper structure. 
	We now consider the persistent homology of multiscale mappers. 
	Given any tower of covers (TOC) $\mathfrak W$ of the co-domain $Z$, by applying the homology functor to its multiscale mapper $\mathrm{MM}(\mathfrak{W}, f)$, we obtain a persistent module, and we can thus discuss the persistent homology induced by a tower of covers $\mathfrak W$. 
	However,  as discussed in \cite{DMW16}, this persistent module is not necessarily stable under perturbations (of e.g the map $f$) for general TOCs. 
	To address this issue, Dey et al. introduced a special family of the so-called (c,s)-good TOC in \cite{DMW16}, which is natural and still general. Below we provide an equivalent definition of the (c,s)-good TOC based on the Lebesgue number of covers. 
	
	\begin{definition}[($c,s$)-good TOC] 
		Give a tower of covers $\mathfrak{U} = \{\mathcal U_\eps\}_{\eps \ge s}$, we say that it is \emph{(c,s)-good TOC} if for any $\eps \ge s$, we have that (i) $s_{max}(\mathcal{U}_\eps) \le \eps$ and (ii) $\lambda(\mathcal U_{c\eps}) \ge \eps$. 
	\end{definition}
	
	As an example, the TOC $\mathfrak{U} = \{ \mathcal U_\eps\}_{\eps \ge s}$ with $\mathcal{U}_\eps:= \{ B_{\eps/2}(z) \mid z\in Z\}$ is an (2,s)-good TOC of the co-domain $Z$. 
	
	We now characterize the persistent homology of multiscale mappers induced by (c,s)-good TOCs. 
	Connecting these persistence modules is achieved via the interleaving of towers of simplicial complexes originally introduced in \cite{CCGGO09}. Below we include the slightly generalized version of the definition from \cite{DMW16}. 
	\begin{definition}[Interleaving of simplicial towers, \cite{DMW16}] \label{def:inter-cover}
		Let $\mathfrak{S}=\big\{\mathcal{S}_{\varepsilon}\overset{\tiny{s_{\varepsilon,\varepsilon'}}}{\longrightarrow}\mathcal{S}_{\varepsilon'}\big\}_{r\leq \varepsilon\leq\varepsilon'}$ and $\mathfrak{T}=\big\{\mathcal{T}_{\varepsilon}\overset{\tiny{t_{\varepsilon,\varepsilon'}}}{\longrightarrow}\mathcal{T}_{\varepsilon'}\big\}_{r\leq\varepsilon\leq\varepsilon'}$ be two towers of simplicial complexes where
		$\res(\mathfrak{S})=\res(\mathfrak{T})=r$. For some $c \ge 0$, we say that they are \emph{$c$-interleaved} if for each $\varepsilon\geq r$ one can find simplicial maps $\varphi_\varepsilon:\mathcal{S}_\varepsilon \rightarrow \mathcal{T}_{\varepsilon+c}$ and  $\psi_\varepsilon:\mathcal{T}_\varepsilon \rightarrow \mathcal{S}_{\varepsilon+c}$ so that:
		\begin{itemize} \denselist
			\item[(i)]  for all $\varepsilon\geq r$, $\psi_{\varepsilon+c}\circ\varphi_{\varepsilon}$ and $s_{\varepsilon,\varepsilon+2c}$ are contiguous,
			\item[(ii)]  for all $\varepsilon\geq r$, $\varphi_{\varepsilon+\eta}\circ\psi_{\varepsilon}$ and  $t_{\varepsilon,\varepsilon+2c}$ are contiguous,
			\item[(iii)]  for all $\varepsilon'\geq\varepsilon\geq r$, $\varphi_{\varepsilon'}\circ s_{\varepsilon,\varepsilon'}$ and  $t_{\varepsilon+c,\varepsilon'+c}\circ \varphi_{\varepsilon}$ are contiguous,
			\item[(iv)]  for all $\varepsilon'\geq\varepsilon\geq r$, $ s_{\varepsilon+c,\varepsilon'+c}\circ \psi_{\varepsilon}$ and  $\psi_{\varepsilon'}\circ t_{\varepsilon,\varepsilon'}$ are contiguous.
		\end{itemize}
		Analogously, if we replace the operator `+' by the multiplication `$\cdot$' in the above definition, then we say that $\mathfrak S$ and $\mathfrak T$ are \emph{$c$-multiplicatively interleaved}.  
		
	\end{definition}
	\ifpaper{
		Furthermore, all the simplicial towers that we will encounter here will be those induced by taking the nerve of some tower of covers (TOCs). It turns out that the interleaving of such tower of nerve complexes can be identified via interleaving of their corresponding tower of covers, which is much easier to verify. 
		More precisely,
		
		\begin{definition}[(Multiplicative) Interleaving of towers of covers, \cite{DMW16}] Let $\mathfrak{V} = \{\mathcal{V}_\varepsilon\}$ and $\mathfrak{W}=\{\mathcal{W}_\varepsilon\}$ be two towers of covers of a topological space $X$ such that $\res(\mathfrak{V})=\res(\mathfrak{W})=r$. Given $\eta\geq 0$,  we say that $\mathfrak{V}$ and $\mathfrak{W}$ are 
			\emph{$\eta$-multiplicatively interleaved} if one can find maps of covers $\zeta_\varepsilon:\mathcal{V}_\varepsilon \rightarrow \mathcal{W}_{\eta \cdot \varepsilon}$ and $\xi_{\varepsilon'}:\mathcal{W}_{\varepsilon'} \rightarrow \mathcal{V}_{\eta \cdot \varepsilon'}$ for all $\varepsilon,\varepsilon'\geq r.$ 
		\end{definition}
		
		The following two results of \cite{DMW16} connect interleaving TOCs with the interleaving of their induced tower of nerve complexes and multiscale mappers \footnote{These propositions are proven in \cite{DMW16} for the additive version of interleaving; but the same proofs hold for the multiplicative interleaving case. }. 
		
		\begin{proposition}[Proposition 4.2 of \cite{DMW16}]\label{prop:interleaving}
			Let $\mathfrak{U}$ and $\mathfrak{V}$ be two $\eta$-(multiplicatively) interleaved towers of covers of $X$ with $\res(\mathfrak{U})=\res(\mathfrak{V})$. Then, $N(\mathfrak{U})$ and $N(\mathfrak{V})$ are also $\eta$-(multiplicatively) interleaved.
		\end{proposition}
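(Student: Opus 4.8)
The plan is to run everything through the nerve functor and let Proposition~\ref{prop:cover-contiguity} do all the work. Write $\mathfrak{U}=\{\mathcal{U}_\varepsilon\overset{u_{\varepsilon,\varepsilon'}}{\to}\mathcal{U}_{\varepsilon'}\}$ and $\mathfrak{V}=\{\mathcal{V}_\varepsilon\overset{v_{\varepsilon,\varepsilon'}}{\to}\mathcal{V}_{\varepsilon'}\}$, and let $\zeta_\varepsilon:\mathcal{U}_\varepsilon\to\mathcal{V}_{\eta\varepsilon}$ and $\xi_\varepsilon:\mathcal{V}_\varepsilon\to\mathcal{U}_{\eta\varepsilon}$ be the maps of covers witnessing that $\mathfrak U$ and $\mathfrak V$ are $\eta$-multiplicatively interleaved (with $\eta\ge 1$, so that all the target indices below are at least their sources and the relevant structure maps exist). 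The first observation I would record is that a composition of maps of covers is again a map of covers (immediate from the definition: $U_\alpha\subseteq V_{\xi(\alpha)}\subseteq W_{\zeta(\xi(\alpha))}$), so every composite built out of the $u$'s, $v$'s, $\zeta$'s and $\xi$'s below is itself a map of covers between the two covers it visibly connects.

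Next I would define the candidate interleaving maps between the towers of nerve complexes by $\varphi_\varepsilon:=N(\zeta_\varepsilon):N(\mathcal{U}_\varepsilon)\to N(\mathcal{V}_{\eta\varepsilon})$ and $\psi_\varepsilon:=N(\xi_\varepsilon):N(\mathcal{V}_\varepsilon)\to N(\mathcal{U}_{\eta\varepsilon})$; these are simplicial maps by construction of $N(\cdot)$ on a map of covers. It then remains to check the four contiguity conditions of Definition~\ref{def:inter-cover} in its multiplicative form, with interleaving constant $c=\eta$, where $\mathfrak S=N(\mathfrak U)$ and $\mathfrak T=N(\mathfrak V)$ and their structure maps are $s_{\varepsilon,\varepsilon'}=N(u_{\varepsilon,\varepsilon'})$ and $t_{\varepsilon,\varepsilon'}=N(v_{\varepsilon,\varepsilon'})$. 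For condition (i), functoriality $N(\zeta\circ\xi)=N(\zeta)\circ N(\xi)$ gives $\psi_{\eta\varepsilon}\circ\varphi_\varepsilon=N(\xi_{\eta\varepsilon}\circ\zeta_\varepsilon)$, while $s_{\varepsilon,\eta^2\varepsilon}=N(u_{\varepsilon,\eta^2\varepsilon})$; since $\xi_{\eta\varepsilon}\circ\zeta_\varepsilon$ and $u_{\varepsilon,\eta^2\varepsilon}$ are two maps of covers $\mathcal{U}_\varepsilon\to\mathcal{U}_{\eta^2\varepsilon}$, Proposition~\ref{prop:cover-contiguity} asserts that the two induced simplicial maps are contiguous, which is exactly (i).

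Conditions (ii)--(iv) follow the identical template: use functoriality to write each of the two simplicial maps to be compared as $N$ applied to a composite of interleaving maps and structure maps, observe that the two composites are maps of covers with the \emph{same} source cover and \emph{same} target cover, and invoke Proposition~\ref{prop:cover-contiguity}. Concretely, for (ii) the two cover maps are $\zeta_{\eta\varepsilon}\circ\xi_\varepsilon$ and $v_{\varepsilon,\eta^2\varepsilon}$, both $\mathcal{V}_\varepsilon\to\mathcal{V}_{\eta^2\varepsilon}$; for (iii), $\zeta_{\varepsilon'}\circ u_{\varepsilon,\varepsilon'}$ and $v_{\eta\varepsilon,\eta\varepsilon'}\circ\zeta_\varepsilon$, both $\mathcal{U}_\varepsilon\to\mathcal{V}_{\eta\varepsilon'}$; and for (iv), $u_{\eta\varepsilon,\eta\varepsilon'}\circ\xi_\varepsilon$ and $\xi_{\varepsilon'}\circ v_{\varepsilon,\varepsilon'}$, both $\mathcal{V}_\varepsilon\to\mathcal{U}_{\eta\varepsilon'}$.

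There is no genuinely hard step here; the content is entirely that maps of covers with a fixed source and target are unique up to contiguity, so every ``commutes up to contiguity'' requirement of a simplicial-tower interleaving is automatic once $\varphi$ and $\psi$ are defined through the nerve functor. The only things to be careful about are bookkeeping: that $\res(\mathfrak U)=\res(\mathfrak V)$ (so the interleaving maps are defined over the whole index range and the composites land where claimed), that $\eta\ge 1$ (so all composite cover maps point ``forward'' in the towers), and that passing from the additive to the multiplicative form of Definition~\ref{def:inter-cover} only rescales indices and therefore affects none of the above verifications. The additive case of the statement is identical, replacing every product $\eta\varepsilon$, $\eta^2\varepsilon$ by $\varepsilon+\eta$, $\varepsilon+2\eta$.
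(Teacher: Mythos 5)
Your proposal is correct and is essentially the argument the paper relies on (the proof itself is deferred to \cite{DMW16}): define $\varphi_\varepsilon=N(\zeta_\varepsilon)$ and $\psi_\varepsilon=N(\xi_\varepsilon)$, use functoriality $N(\zeta\circ\xi)=N(\zeta)\circ N(\xi)$ to express each composite as the nerve of a single map of covers, and then invoke Proposition~\ref{prop:cover-contiguity} to get all four contiguity conditions for free. Your bookkeeping remarks about $\eta\ge 1$, equal resolutions, and the additive versus multiplicative rescaling are exactly the right caveats and nothing is missing.
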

		
		\begin{proposition}[Proposition 4.1 of \cite{DMW16}]\label{prop:inter-pullback}
			Let $f:X\rightarrow Z$ be a continuous function and $\mathfrak{U}$ and 
			$\mathfrak{V}$ be two $\eta$-(multiplicatively) interleaved tower of covers 
			of $Z$. Then, $f^\ast(\mathfrak{U})$ and 
			$f^\ast(\mathfrak{V})$ are also $\eta$-(multiplicatively) interleaved. 
			
			By Proposition \ref{prop:interleaving}, this implies that the resulting multiscale mappers $\mathrm{MM}(\mathfrak{U}, f)$ and $\mathrm{MM}(\mathfrak{U}, f)$ are also $\eta$-(multiplicatively) interleaved. 
		\end{proposition}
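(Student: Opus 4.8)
\emph{Proof strategy.} The plan is to reduce the statement to the functoriality of the pullback of covers recalled at the end of Section~\ref{sec:background}: a continuous map $f:X\to Z$ sends a map of covers $\theta:\mathcal{W}\to\mathcal{W}'$ of $Z$ to a map of covers $f^\ast(\theta):f^\ast\mathcal{W}\to f^\ast\mathcal{W}'$ of $X$, and this assignment respects composition. First I would unwind the definition of $\eta$-(multiplicative) interleaving of towers of covers applied to $\mathfrak{U}=\{\mathcal{U}_\varepsilon\}$ and $\mathfrak{V}=\{\mathcal{V}_\varepsilon\}$, which says that, for the common resolution $r$, there exist maps of covers $\zeta_\varepsilon:\mathcal{U}_\varepsilon\to\mathcal{V}_{\eta\cdot\varepsilon}$ and $\xi_{\varepsilon'}:\mathcal{V}_{\varepsilon'}\to\mathcal{U}_{\eta\cdot\varepsilon'}$ for all $\varepsilon,\varepsilon'\ge r$ (with $\eta\cdot$ replaced by $\eta+$ in the additive case).

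Next I would apply the pullback along $f$ to each of these maps, producing maps of covers
$$f^\ast(\zeta_\varepsilon):f^\ast\mathcal{U}_\varepsilon\longrightarrow f^\ast\mathcal{V}_{\eta\cdot\varepsilon},\qquad f^\ast(\xi_{\varepsilon'}):f^\ast\mathcal{V}_{\varepsilon'}\longrightarrow f^\ast\mathcal{U}_{\eta\cdot\varepsilon'}$$
for all $\varepsilon,\varepsilon'\ge r$. Since $\res(f^\ast\mathfrak{U})=\res(\mathfrak{U})=r=\res(\mathfrak{V})=\res(f^\ast\mathfrak{V})$, and since the definition of interleaving of towers of covers demands nothing beyond the existence of such back-and-forth maps at the shifted scales (any compatibility one might ask for is automatic up to contiguity on the nerves, by Proposition~\ref{prop:cover-contiguity}), these pulled-back families directly certify that $f^\ast\mathfrak{U}$ and $f^\ast\mathfrak{V}$ are $\eta$-(multiplicatively) interleaved.

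For the statement about multiscale mappers, I would simply feed the pair $f^\ast\mathfrak{U}$, $f^\ast\mathfrak{V}$ into Proposition~\ref{prop:interleaving}, using $\mathrm{MM}(\mathfrak{U},f)=N(f^\ast\mathfrak{U})$ and $\mathrm{MM}(\mathfrak{V},f)=N(f^\ast\mathfrak{V})$; this gives that the two multiscale mappers are $\eta$-(multiplicatively) interleaved towers of simplicial complexes.

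The only step requiring any care is the middle one, i.e.\ knowing that $f^\ast$ sends a map of covers to a bona fide map of covers. Concretely, for $\theta:A\to B$ with $U_\alpha\subseteq V_{\theta(\alpha)}$, each path component $V_{\alpha,i}$ of $f^{-1}(U_\alpha)$ satisfies $V_{\alpha,i}\subseteq f^{-1}(U_\alpha)\subseteq f^{-1}(V_{\theta(\alpha)})$, so it is contained in a unique path component of $f^{-1}(V_{\theta(\alpha)})$; this assignment on indices defines $f^\ast(\theta)$ and yields the required containment, and composition is checked the same way. As this is exactly the pullback property already imported from \cite{DMW16}, no genuine obstacle remains --- the interleaving part of the argument is a formality.
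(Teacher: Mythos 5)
Your proposal is correct and is essentially the argument the paper relies on: the paper imports this statement from \cite{DMW16} without reproving it, and the intended proof is exactly to pull back the interleaving maps $\zeta_\varepsilon$, $\xi_{\varepsilon'}$ via the functoriality $f^\ast(\theta\circ\xi)=f^\ast(\theta)\circ f^\ast(\xi)$ recalled at the end of Section~\ref{sec:background}, noting that the cover-level interleaving definition requires only the existence of these cross maps. Your final reduction to Proposition~\ref{prop:interleaving} for the multiscale mappers likewise matches the paper's stated route.
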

	}\fi
	
	Our main results of this section are the following \ifpaper{. }\else{whose proofs are deferred to the full version. }\fi
	First, Theorem \ref{thm:goodTOCinterleave} states that the multiscale-mappers induced by any two ($c,s$)-good towers of covers interleave with each other, implying that their respective persistence diagrams are also close under the bottleneck distance. From this point of view, the persistence diagrams induced by any two (c,s)-good TOCs contain roughly the same information. 
	Next in Theorem \ref{thm:MM-ICinterleave}, we show that the multiscale mapper induced by any $(c,s)$-good TOC interleaves (at the homology level) with the intrinsic \v{C}ech filtration of $(X,d_f)$, thereby implying that the persistence diagram of the multiscale mapper w.r.t. any ($c,s$)-good TOC is close to that of the intrinsic \v{C}ech filtration of $(X, d_f)$ under the bottleneck distance. 
	
	\begin{theorem}\label{thm:goodTOCinterleave}
		Given a map $f: X \to Z$, let 
		$\mathfrak{V}=\{\mathcal{V}_{\varepsilon}\overset{\tiny{v_{\varepsilon,\varepsilon'}}}{\longrightarrow} \mathcal{V}_{\varepsilon'}\big\}_{\varepsilon\leq \varepsilon'}$ and 
		$\mathfrak{W}=\{\mathcal{W}_{\varepsilon}\overset{\tiny{w_{\varepsilon,\varepsilon'}}}{\longrightarrow} \mathcal{W}_{\varepsilon'}\big\}_{\varepsilon\leq \varepsilon'}$ 
		be two ($c, s$)-good tower of covers of $Z$. 
		Then the corresponding multiscale mappers $\mathrm{MM}(\mathfrak{V},f)$ and $\mathrm{MM}(\mathfrak W, f)$ are $c$-multiplicatively interleaved. 
	\end{theorem}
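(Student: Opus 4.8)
The plan is to avoid re-deriving anything about interleavings of simplicial towers and instead reduce the statement to the functoriality machinery of \cite{DMW16}: first produce a $c$-multiplicative interleaving between $\mathfrak{V}$ and $\mathfrak{W}$ \emph{as towers of covers of $Z$}, and then push it through the pullback-and-nerve construction using Propositions~\ref{prop:inter-pullback} and~\ref{prop:interleaving}.

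\textbf{Step 1: interleave the towers of covers of $Z$.} For each scale $\varepsilon\ge s$ I would build a map of covers $\zeta_\varepsilon:\mathcal{V}_\varepsilon\to\mathcal{W}_{c\varepsilon}$ as follows. Take any $V\in\mathcal{V}_\varepsilon$. Since $\mathfrak{V}$ is $(c,s)$-good, $s(V)\le s_{max}(\mathcal{V}_\varepsilon)\le\varepsilon$; since $\mathfrak{W}$ is $(c,s)$-good, $\lambda(\mathcal{W}_{c\varepsilon})\ge\varepsilon$. Hence $s(V)\le\lambda(\mathcal{W}_{c\varepsilon})$, so by the very definition of the Lebesgue number there is an element $W\in\mathcal{W}_{c\varepsilon}$ with $V\subseteq W$; choosing one such $W$ for each $V$ defines a map of covers $\zeta_\varepsilon$. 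Swapping the roles of $\mathfrak{V}$ and $\mathfrak{W}$ gives, identically, maps of covers $\xi_\varepsilon:\mathcal{W}_\varepsilon\to\mathcal{V}_{c\varepsilon}$ for all $\varepsilon\ge s$. (One should note in passing that $c\varepsilon\ge s$, so that $\mathcal{W}_{c\varepsilon}$ really is a member of the tower: this follows from $c\ge1$, which is itself forced by $\varepsilon\le\lambda(\mathcal{W}_{c\varepsilon})\le s_{max}(\mathcal{W}_{c\varepsilon})\le c\varepsilon$.) Because the definition of a $c$-multiplicative interleaving of \emph{towers of covers} only requires the existence of such maps of covers in both directions — with no compatibility clause relative to the internal tower maps, that being automatic once one passes to nerves by Proposition~\ref{prop:cover-contiguity} — this already shows that $\mathfrak{V}$ and $\mathfrak{W}$ are $c$-multiplicatively interleaved towers of covers of $Z$, with common resolution $s$.

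\textbf{Step 2: transport the interleaving to the multiscale mappers.} Here I would simply invoke the two cited results: Proposition~\ref{prop:inter-pullback} says that pulling the $c$-multiplicatively interleaved towers $\mathfrak{V},\mathfrak{W}$ back along $f$ yields $c$-multiplicatively interleaved towers of (path-connected) covers $f^\ast\mathfrak{V}$ and $f^\ast\mathfrak{W}$ of $X$ with the same resolution $s$; and Proposition~\ref{prop:interleaving} then upgrades this to a $c$-multiplicative interleaving of the induced towers of nerve complexes, which are precisely $\mathrm{MM}(\mathfrak{V},f)=N(f^\ast\mathfrak{V})$ and $\mathrm{MM}(\mathfrak{W},f)=N(f^\ast\mathfrak{W})$. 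This is the desired conclusion.

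\textbf{Expected obstacle.} There is no deep difficulty: the whole argument turns on the construction in Step~1, where the two halves of the $(c,s)$-good condition — the size bound $s_{max}(\mathcal{U}_\varepsilon)\le\varepsilon$ on one tower and the Lebesgue-number bound $\lambda(\mathcal{U}_{c\varepsilon})\ge\varepsilon$ on the other — are exactly what make the cross-tower cover maps exist; everything after that is bookkeeping with the functoriality of pullback and of the nerve already established in \cite{DMW16}. The only points requiring a little care are staying inside the index range of the towers (the $c\ge1$ remark above) and recalling that, unlike the interleaving of simplicial towers in Definition~\ref{def:inter-cover}, the interleaving of towers of covers carries no contiguity conditions that would have to be checked by hand.
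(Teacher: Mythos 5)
Your proposal is correct and follows essentially the same route as the paper: construct the cross-tower cover maps $\zeta_\varepsilon:\mathcal{V}_\varepsilon\to\mathcal{W}_{c\varepsilon}$ and $\xi_\varepsilon:\mathcal{W}_\varepsilon\to\mathcal{V}_{c\varepsilon}$ directly from the two halves of the $(c,s)$-good condition, and then push the resulting $c$-multiplicative interleaving of towers of covers through Propositions~\ref{prop:inter-pullback} and~\ref{prop:interleaving}. Your added remark that $c\geq 1$ keeps the indices inside the tower is a small point of care the paper leaves implicit.
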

	\ifpaper{
		\begin{proof}
			First, we make the following observation. 
			\begin{claim}
				Any two ($c,s$)-good TOCs $\mathfrak{V}$ and $\mathfrak{W}$ are $c$-multiplicatively interleaved. 
			\end{claim}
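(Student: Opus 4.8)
The plan is to read the claim straight off the definitions; essentially no computation is involved. Recall that, for towers of covers, being $c$-multiplicatively interleaved asks only for the existence of maps of covers $\zeta_\eps \colon \mathcal{V}_\eps \to \mathcal{W}_{c\eps}$ and $\xi_\eps \colon \mathcal{W}_\eps \to \mathcal{V}_{c\eps}$ for every $\eps$ above the common resolution $s$, with no further compatibility imposed --- the point, via Proposition~\ref{prop:cover-contiguity}, being that whichever maps of covers one picks, their nerves automatically yield interleaved simplicial towers in the sense of Definition~\ref{def:inter-cover}. Note also that $c \ge 1$ is forced by well-posedness of the notion of a \emph{($c,s$)-good TOC}: putting $\eps = s$ in condition (ii) requires $\mathcal{W}_{cs}$ to belong to the tower, i.e.\ $cs \ge s$; hence $c\eps \ge \eps \ge s$ for all $\eps \ge s$, so $\mathcal{W}_{c\eps}$ and $\mathcal{V}_{c\eps}$ are genuine members of the respective towers. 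Thus the entire content of the claim is the construction of the maps $\zeta_\eps$ and $\xi_\eps$.

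To build $\zeta_\eps$, fix $\eps \ge s$ and take an arbitrary element $V$ of $\mathcal{V}_\eps$. Condition (i) of $(c,s)$-goodness for $\mathfrak{V}$ gives $s(V) \le s_{max}(\mathcal{V}_\eps) \le \eps$, while condition (ii) for $\mathfrak{W}$ gives $\leb(\mathcal{W}_{c\eps}) \ge \eps$. Combining these, $s(V) \le \leb(\mathcal{W}_{c\eps})$, so by the defining property of the Lebesgue number $V$ is contained in some element of $\mathcal{W}_{c\eps}$; choosing one such element for each $V$ defines a map on index sets that by construction sends each $V$ into a superset lying in $\mathcal{W}_{c\eps}$ --- that is, a map of covers $\mathcal{V}_\eps \to \mathcal{W}_{c\eps}$. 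Interchanging the roles of $\mathfrak{V}$ and $\mathfrak{W}$ produces $\xi_\eps \colon \mathcal{W}_\eps \to \mathcal{V}_{c\eps}$ by the identical argument. Ranging over all $\eps \ge s$ exhibits the desired $c$-multiplicative interleaving. (Downstream, the theorem itself then follows by pulling this interleaving back along $f$ and passing to nerves, via Propositions~\ref{prop:inter-pullback} and~\ref{prop:interleaving}.)

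I do not expect a genuine obstacle here: the argument is a pure unwinding of \emph{($c,s$)-good TOC}, \emph{map of covers}, \emph{Lebesgue number}, and \emph{multiplicative interleaving of towers of covers}. The only spot requiring any care is a boundary case --- if the supremum defining $\leb(\mathcal{W}_{c\eps})$ is not attained and happens to equal $s(V)$ exactly, the Lebesgue property cannot be invoked at that precise threshold. This is harmless: whenever $s(V) < \leb(\mathcal{W}_{c\eps})$ one applies the property at an intermediate value $\delta$ with $s(V) \le \delta < \leb(\mathcal{W}_{c\eps})$, and the degenerate equality case is disposed of either by reading the ``($c,s$)-good'' inequalities as strict or by enlarging $c$ by an arbitrarily small factor, which shifts the eventual bottleneck-distance bounds by an arbitrarily small amount. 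In a write-up I would use the Lebesgue property directly and relegate this remark to a footnote.
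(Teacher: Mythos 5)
Your argument is exactly the paper's: use $s_{max}(\mathcal{V}_\eps)\le\eps$ together with $\leb(\mathcal{W}_{c\eps})\ge\eps$ to place each $V\in\mathcal{V}_\eps$ inside some $W\in\mathcal{W}_{c\eps}$, choose one such $W$ to get the cover map $\zeta_\eps$, and argue symmetrically for $\xi_\eps$; since multiplicative interleaving of towers of covers requires nothing beyond the existence of these maps, the claim follows. Your extra remark about the supremum in the definition of the Lebesgue number possibly not being attained is a fair point of care that the paper elides, but it does not change the approach.
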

			\begin{proof}
				It follows easily from the definitions of $(c,s)$-good TOC. 
				Specifically, first we construct $\zeta_\varepsilon:\mathcal{V}_\varepsilon \rightarrow \mathcal{W}_{c \cdot \varepsilon}$. 
				For any $V \in \mathcal{V}_\eps$, we have that $\diam(V) \le \eps$. Furthermore, since $\mathfrak{W}$ is $(c,s)$-good, there exists $W \in \mathcal{W}_{c \eps}$ such that $V \subseteq W$. Set $\zeta_\eps(V) = W$; if there are multiple choice of $W$, we can choose an arbitrary one. 
				We can construct $\xi_{\varepsilon'}:\mathcal{W}_{\varepsilon'} \rightarrow \mathcal{V}_{c \cdot \varepsilon'}$ in a symmetric manner, and the claim then follows. 
			\end{proof}
			This, combined with Propositions \ref{prop:inter-pullback} and \ref{prop:interleaving}, prove the theorem. 
		\end{proof}
		
		Recall the definition of intrinsic \v{C}ech complex filtration $\myCF-s(Y)$ at resolution $s$ for a metric space $(Y,d_Y)$ in Def \ref{def:IC-filtration}.
	}\fi
	
	\begin{theorem}\label{thm:MM-ICinterleave}
		Let $\mathfrak{C}_s(X)$ be the intrinsic \v{C}ech filtration of $(X, d_f)$ starting with resolution $s$. Let $\mathfrak{U}=\{\mathcal{U}_{\varepsilon}\overset{\tiny{u_{\varepsilon,\varepsilon'}}}{\longrightarrow} \mathcal{U}_{\varepsilon'}\big\}_{s\le \varepsilon\leq \varepsilon'}$ be a ($c,s$)-good TOC of the compact connected metric space $Z$. 
		Then the multiscale mapper $\mathrm{MM} (\mathfrak U, f)$ and $\mathfrak{C}_s(X)$ are $2c$-multiplicatively interleaved. 
	\end{theorem}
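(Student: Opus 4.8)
The plan is to exhibit an explicit $2c$-multiplicative interleaving between the multiscale mapper $\mathrm{MM}(\mathfrak U, f) = N(f^\ast\mathfrak U)$ (viewed at the homology level) and the intrinsic \v{C}ech filtration $\mathfrak{C}_s(X)$ of the pseudometric space $(X, d_f)$. The key link is the Proposition proved earlier that the pullback cover $f^\ast{\mathcal U}_\eps$ has Lebesgue number $\lambda(f^\ast{\mathcal U}_\eps) = \lambda({\mathcal U}_\eps) \ge \eps/c$ (using that $\mathfrak U$ is $(c,s)$-good so $\lambda({\mathcal U}_{c\eps'})\ge \eps'$), while $s_{\max}(f^\ast{\mathcal U}_\eps) \le s_{\max}({\mathcal U}_\eps)\le\eps$ — both measured in $d_f$. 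So each pullback cover $f^\ast{\mathcal U}_\eps$ simultaneously has small diameter (at most $\eps$) and a Lebesgue number bounded below (at least $\eps/c$). This sandwiches $f^\ast{\mathcal U}_\eps$ between two intrinsic \v{C}ech covers of $(X,d_f)$: the cover by $d_f$-balls of radius $\eps/(2c)$ refines $f^\ast{\mathcal U}_\eps$ (a set of diameter $\le \eps/c$ sits inside one pullback element by the Lebesgue property), and $f^\ast{\mathcal U}_\eps$ in turn refines the cover by $d_f$-balls of radius $\eps$ (each pullback element has diameter $\le\eps$, hence lies in any ball of radius $\eps$ about one of its points — actually in a ball of radius $\eps$ centered at such a point).

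Concretely, I would build the four maps of the interleaving. In one direction: a pullback element $V\in f^\ast{\mathcal U}_\eps$ has $d_f$-diameter $\le\eps$; pick a point $x_V\in V$ and map $V \mapsto B_{d_f}(x_V, \eps)$, giving a map of covers $f^\ast{\mathcal U}_\eps \to \{B_{d_f}(x,\eps)\}_{x\in X}$, i.e.\ into the intrinsic \v{C}ech complex $C^{\eps}(X)$ — this yields a simplicial map $N(f^\ast{\mathcal U}_\eps) \to C^{\eps}(X)$. In the other direction: each ball $B_{d_f}(x, \eps/(2c))$ has diameter $\le \eps/c \le \lambda(f^\ast{\mathcal U}_\eps)$, so it lies in some element of $f^\ast{\mathcal U}_\eps$; this gives a map of covers $\{B_{d_f}(x,\eps/(2c))\}_{x}\to f^\ast{\mathcal U}_\eps$, hence a simplicial map $C^{\eps/(2c)}(X) \to N(f^\ast{\mathcal U}_\eps)$. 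Composing with the connecting maps of the \v{C}ech filtration and the TOC, and rescaling indices (the index $\eps$ of the mapper tower corresponds to scale $\sim\eps$ of the \v{C}ech filtration, with a slack factor that multiplies out to $2c$ after going around a full interleaving square), one gets the two families $\varphi_\eps, \psi_\eps$ required in Definition~\ref{def:inter-cover} (multiplicative version). The commutativity/contiguity conditions (i)--(iv) all follow from Proposition~\ref{prop:cover-contiguity}: any two maps of covers between the same pair of covers induce contiguous simplicial maps, so every triangle in the interleaving diagram — being a diagram of maps of covers composed with nerve — commutes up to contiguity, hence exactly at the homology level.

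The remaining care is bookkeeping with the resolution: $\mathfrak U$ and $\mathfrak{C}_s(X)$ both start at resolution $s$, and I must check the composite maps land in the correct (slightly larger) index so that the $2c$ factor is tight and nothing is requested below resolution $s$. One subtlety worth flagging: $d_f$ is only a pseudometric, so "$d_f$-balls" may be degenerate and the intrinsic \v{C}ech complex is built from a pseudometric; but the nerve construction and the Lebesgue-number/diameter arguments never used the separation axiom, so this is harmless (this is exactly the footnote remark made earlier in the paper). I expect the main obstacle to be purely organizational — correctly tracking the multiplicative slack through the two intermediate \v{C}ech covers and the connecting maps so that the round-trip composites $\psi\circ\varphi$ and $\varphi\circ\psi$ match the $2c$-shifted connecting maps, rather than any genuinely new idea; the conceptual content is entirely in the Lebesgue-number Proposition plus Proposition~\ref{prop:inter-pullback} and Proposition~\ref{prop:interleaving} already available.
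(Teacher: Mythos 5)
Your proposal is correct and follows essentially the same route as the paper: both reduce the theorem to a $2c$-multiplicative interleaving between the tower of pullback covers $\{f^\ast\mathcal{U}_\eps\}$ and the tower of $d_f$-balls underlying $\mathfrak{C}_s(X)$, sending a pullback element $V$ (of $d_f$-diameter $\le\eps$) into a ball $B_{d_f}(x_V,\eps)$ in one direction and a ball of radius $\eps$ (hence of image-diameter $\le 2\eps$ in $Z$) into an element of $f^\ast\mathcal{U}_{2c\eps}$ via the Lebesgue-number property in the other, then invoking Proposition~\ref{prop:interleaving} and contiguity of cover-induced simplicial maps. The only caveat is that the downward-indexed form $\lambda(\mathcal{U}_\eps)\ge\eps/c$ is guaranteed only for $\eps\ge cs$, so the interleaving maps should be stated in the upward-indexed form $C^{\eps}(X)\to N(f^\ast\mathcal{U}_{2c\eps})$ as the paper does, which is exactly the bookkeeping you flagged.
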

	\ifpaper
	\begin{proof}
		Let $D_\eps := \{ B_\eps(x) \mid x\in X\}$ be the infinite cover of $X$ consisting of all $\eps$-intrinsic balls in $X$. 
		Obviously, $C^\eps(X)$ is the nerve complex induced by the cover $D_\eps$, for each $\eps > 0$. 
		Let $\mathfrak{D} = \{ D_\eps \overset{\tiny{t_ {\varepsilon,\varepsilon'}}}{\longrightarrow} D_{\eps'} \}_{s\le \eps < \eps'}$ be the corresponding tower of covers of $X$, where $t_{\eps, \eps'}$ sends $B_\eps(x) \in D_\eps$ to $B_{\eps'}(x) \in D_{\eps'}$. 
		Obviously, the tower of the nerve complexes for $D_\eps$s give rise to $\mathfrak{C}_s(X)$. 
		
		On the other hand, let $W_\eps = f^*\mathcal{U}_\eps$ be the pull-back cover of $X$ induced by $\mathcal{U}_\eps$ via $f$, and $\mathfrak{W} = f^* \mathfrak{U}$ is the pull-back tower of cover of $X$ induced by the TOC $\mathfrak{U}$ of $Z$. 
		By definition, we know that the multiscale mapper $\mathrm{MM}(\mathfrak U, f) = \{\M_\eps \overset{\tiny{s_{\varepsilon,\varepsilon'}}}{\longrightarrow} \M_{\eps'}\}_{s\le \eps \le \eps'}$ where $\M_\eps$ is the nerve complex of the cover $W_\eps$. 
		
		In what follows, we will argue that the two TOCs $\mathfrak{D}$ and $\mathfrak{W}$ are $2c$-multiplicatively interleaved. By Proposition \ref{prop:interleaving}, this then proves the theorem. 
		
		First, we show that there is a  map of covers $\zeta_\eps: D_\eps \to W_{2c\eps}$ for each $\eps \ge s$ defined as follows. 
		
		Take any intrinsic ball $B_{\eps, d_f} (x) \in D_\eps$ for some $x\in X$. 
		Consider the image $f(B_\eps(x)) \subseteq Z$. Recall that the covering metric $d_f(x_1, x_2)$ on $X$ is defined by  the minimum diameter of the image of any path $\rho$ connecting $x_1$ to $x_2$ in $X$; that is, $d_f(x_1, x_2) = \inf_{\rho: x\leadsto y} \diam(f(\rho))$. Thus $d_Z(f(x_1), f(x_2)) \le d_f(x_1, x_2)$. 
		We then have that for any $x_1, x_2 \in B_\eps(x)$, 
		$$d_Z(f(x_1), f(x_2)) \le d_Z(f(x_1), f(x)) + d_Z(f(x), f(x_2)) \le d_f(x_1, x) + d_f(x_2, x) \le 2\eps.$$
		This implies that $\diam(f(B_\eps(x))) \le 2\eps$. 
		Since $\mathfrak U$ is a $(c,s)$-good TOC, it then follows that there exists $U_x \in \mathcal{U}_{2c\eps}$ such that $f(B_\eps(x)) \subseteq U$ (if there are multiple elements contains $f(B_\eps(x))$, we can choose an arbitrary one as $U_x$). 
		This means that $B_\eps(x)$ is contained within one of the connected component, say $W_x$ in $\mathrm{cc}(f^{-1}(U_x))$. We simply set $\zeta_\eps( B_\eps(x) ) = W_x \in W_{2c\eps}$. 
		
		Finally we show that there is a map of covers $\xi_\eps: W_\eps \to D_\eps (\overset{\tiny{t_{\varepsilon,2\varepsilon}}}{\longrightarrow} D_{2c\eps})$. 
		To this end, consider any set $V \in W_\eps$; by definition, there exists some $U \in \mathcal{U}_\eps$ such that $V \in \mathrm{cc}(f^*(U))$.
		Note, $f(V) \subseteq U$ and $\diam(U) \le s_{max}(\mathcal{U}_\eps) \le \eps$. 
		It then follows from the definition of the metric $d_f$ that for any point $x$ from $V$, we have that $V \subseteq B_{\eps, d_f}(x)$. 
		We simply set $\xi_\eps(V) = B_\eps(x)$. 
		This completes the proof that the two TOCs $\mathfrak{D}$ and $\mathfrak{W}$ are $2c$-multiplicatively interleaved. The theorem then follows this and Proposition \ref{prop:interleaving}. 
	\end{proof}
	\fi
	Finally, given a persistence diagram $\myDg$, we denote its \emph{log-scaled version} $\myDg_{\log}$ to be the diagram consisting of the set of points $\{ (\log x, \log y) \mid (x,y) \in \myDg \}$. Since interleaving towers of simplicial complexes induce  interleaving persistent modules, using results of \cite{CCGGO09,CSGO16}, we have the following corollary. 
	\begin{corollary}\label{cor:MM-IC-Dgdistance}
		Given a continuous map $f: X \to Z$ and a ($c,s$)-good TOC $\mathfrak{U}$ of $Z$, let $\myDg_{\log} \MM(\mathfrak{U}, f)$ and $\myDg_{\log} \mathfrak{C}_s$ denote the log-scaled persistence diagram of the persistence modules induced by $\MM(\mathfrak{U}, f)$ and by the intrinsic \v{C}ech filtration $\mathfrak{C}_s$ of $(X, d_f)$ respectively. 
		We have that 
		$$d_B(\myDg_{\log} \MM(\mathfrak{U}, f), \myDg_{\log} \mathfrak{C}_s) \le 2c . $$
	\end{corollary}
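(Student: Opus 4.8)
The plan is to read off the corollary from Theorem~\ref{thm:MM-ICinterleave} by pushing the simplicial interleaving through the homology functor and then trading the multiplicative interleaving of scales for an additive interleaving via a logarithmic reparametrization.

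First I would invoke Theorem~\ref{thm:MM-ICinterleave}: the tower of simplicial complexes $\MM(\mathfrak{U},f)$ and the intrinsic \v{C}ech filtration $\mathfrak{C}_s(X)$ of $(X,d_f)$ are $2c$-multiplicatively interleaved in the sense of Definition~\ref{def:inter-cover}, i.e. there are simplicial maps in both directions $\varphi_\eps$ and $\psi_\eps$ rescaling by the factor $2c$ whose relevant composites are pairwise contiguous with the structure maps of the two towers. Applying the simplicial homology functor $H_k(\cdot)$ (coefficients in $\Z_2$) and recalling from Section~\ref{sec:background} that contiguous simplicial maps induce identical maps on homology, every contiguity relation of Definition~\ref{def:inter-cover} becomes a strict commutativity relation. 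Hence the persistence modules $\{H_k(\M_\eps)\}_{\eps\ge s}$ and $\{H_k(C^\eps(X))\}_{\eps\ge s}$ are $2c$-multiplicatively interleaved as persistence modules: linear maps in both directions, with all interleaving triangles commuting.

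Next I would note that the relevant persistence diagrams exist: $(X,d_f)$ is totally bounded (Section~\ref{subsec:mapper}), so by \cite{CSO14} the module of $\mathfrak{C}_s(X)$ is $q$-tame, and $q$-tameness is inherited by a module interleaved with a $q$-tame one, so the mapper module is $q$-tame too. Then I would substitute $\eps=e^{t}$: a module $\{V_\eps\}_{\eps\ge s}$ becomes the additively indexed module $\{V_{e^{t}}\}_{t\ge\log s}$, whose persistence diagram is by definition the log-scaled diagram $\myDg_{\log}$, and a $2c$-multiplicative interleaving becomes an (additive) $\log(2c)$-interleaving of the reparametrized modules.

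Finally, the algebraic stability theorem for $q$-tame persistence modules \cite{CCGGO09,CSGO16} gives that the bottleneck distance between the persistence diagrams of two $\eta$-interleaved $q$-tame modules is at most $\eta$. With $\eta=\log(2c)$ and $\log(2c)\le 2c$ this yields $d_B(\myDg_{\log}\MM(\mathfrak{U},f),\myDg_{\log}\mathfrak{C}_s)\le 2c$, as claimed. There is essentially no obstacle here; the only care needed is bookkeeping — checking that the \emph{composite} maps occurring in Definition~\ref{def:inter-cover}, not merely $\varphi_\eps$ and $\psi_\eps$ themselves, descend to equalities on homology (so that one genuinely obtains an interleaving of persistence modules), and that the logarithmic reparametrization sends the multiplicative parameter $2c$ to the additive parameter $\log(2c)$. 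All the substantive content already lives in Theorem~\ref{thm:MM-ICinterleave}, which this corollary merely repackages at the level of persistence diagrams.
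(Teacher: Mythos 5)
Your proposal is correct and follows essentially the same route the paper intends: Theorem~\ref{thm:MM-ICinterleave} gives the $2c$-multiplicative interleaving of towers, contiguity becomes equality after applying the homology functor so the persistence modules are interleaved, and the stability results of \cite{CCGGO09,CSGO16} yield the bottleneck bound on the log-scaled diagrams. Your bookkeeping is in fact slightly sharper than the stated corollary, since the logarithmic reparametrization gives $d_B\le\log(2c)$, which you then correctly relax to $2c$.
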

	
	\ifpaper{
		\section{Concluding remarks}
		In this paper, we present some studies on the topological information encoded in Nerves, Reeb spaces, mappers and multiscale mappers , where the latter two structures are constructed based on nerves. 
		Currently, the characterization for the $H_1$-homology for the Nerve complex is much stronger than for higher dimensions. In particular, we showed that for a path-connected cover $\mathcal{U}$, there is a surjection from the domain $H_1(X)$ to $H_1(N(\mathcal{U}))$. 
		While this does not hold for higher dimensional cases (as Figure \ref{non-surject-fig} demonstrates), we wonder if similar surjection holds under additional conditions on the input cover such as the ones used by Bj\"{o}rner \cite{B03} for homotopy groups. Along that line, we ask: if for any $k\geq 0$, $t$-wise intersections of cover elements for all $t>0$ have trivial reduced homology groups for all dimensions up to $k-t$, then does the nerve map induce a surjection for the $k$-dimensional homology? We have answered it affirmatively for $k=1$.
		
		We also remark that it is possible to carry out most of our arguments using the language of category theory (see e.g, \cite{Stovner12} on this view for the mapper structure). We choose not to take this route and explain the results with more elementary expositions. 
	}\else{}\fi
	
	\myparagraph{Acknowledgments.} 
	We thank the reviewers for helpful comments. This work was partially supported by National Science Foundation under grant CCF-1526513.
	
	%
	

	
	
	
	

	
\end{document}